\newtheorem{theorem}{Theorem}
\newtheorem{lemma}[theorem]{Lemma}
\newtheorem{proposition}[theorem]{Proposition}
\newtheorem{corollary}[theorem]{Corollary}
\newtheorem{assumption}{A\!\!}
\newtheorem{example}{Example}
\newcommand{\var}{\mathrm{Var}}
\newcommand{\cov}{\mathrm{Cov}}
\newcommand{\bbE}{\mathbb{E}}
\providecommand{\keywords}[1]
{
  \small	
  \textbf{Keywords:} #1
}
\DeclareMathOperator*{\argmin}{argmin}
\title{The correlation-assisted missing data estimator}
\date{}
\author{Timothy I. Cannings and Yingying Fan \\ \textit{University of Edinburgh} and \textit{University of Southern California}}
\begin{document}

\def\spacingset#1{\renewcommand{\baselinestretch}%
{#1}\small\normalsize} \spacingset{1}

\maketitle
\begin{abstract}
We introduce a novel approach to estimation problems in settings with missing data. Our proposal -- the \emph{Correlation-Assisted Missing data} (CAM) estimator -- works by exploiting the relationship between the observations with missing features and those without missing features in order to obtain improved prediction accuracy.  In particular, our theoretical results elucidate general conditions under which the proposed CAM estimator has lower mean squared error than the widely used complete-case approach in a range of estimation problems.  We showcase in detail how the CAM estimator can be applied to $U$-Statistics to obtain an unbiased, asymptotically Gaussian estimator that has lower variance than the complete-case $U$-Statistic.  Further, in nonparametric density estimation and regression problems, we construct our CAM estimator using kernel functions, and show it has lower asymptotic mean squared error than the corresponding complete-case kernel estimator.  We also include practical demonstrations throughout the paper using simulated data and the \emph{Terneuzen birth cohort} and \emph{Brandsma} datasets available from \texttt{CRAN}. 
\end{abstract} 
\noindent

\keywords{Missing data, $U$-Statistics, kernel density estimation, local constant regression, nonparametric}

\spacingset{1.1} 

\section{Introduction}
Data is a primary commodity in today's economy, it is valued and traded like any other asset.  Statistics and machine learning allow us to extract this value by improving operational efficiency, increasing revenue, or understanding the behaviour of customers.  A common complication in modern applications is that the data may be incomplete.  For example, some users may choose not to disclose their personal details (age, gender, geographic location, etc.) to a smartphone application; optional questions on an on-line form are often left blank; or data is sometimes removed or hidden to guarantee privacy.   In other situations, missing data problems can arise when two or more different data sources have been combined.

Missing data is not a new problem. As early as the 1950s, \citet{Anderson:1957} found the maximum likelihood estimator in a multivariate normal distribution when some of the observations were missing.  In a seminal paper, \citet{Rubin:1976} studied missing data in a rigorous general framework, introduced the notion of data \emph{missing at random}, and specified conditions under which the process that causes data to be missing may be ignored.  See also the comprehensive book on the subject by \citet{LittleRubin:2002} and the recent special issue of \textit{Statistical Science} \citep{JosseReiter:2018}.

A simple and widely used approach to deal with missing data is to discard any incomplete observations -- a technique referred to as complete-case analysis \citep[Chapter~3]{LittleRubin:2002}.  There is an obvious drawback with this method that perhaps much of the data is ignored.  An alternative approach is to impute the missing values \citep{Ford:1983}.  There is an extensive body of work on different imputation techniques; see, for instance, \citet[Chapters 4 and 5]{LittleRubin:2002} and \citet[Chapter IV]{HMDM:2015} for an overview.   Other techniques are based on the expectation--maximisation (EM) algorithm \citep{Dempster:1977}.  Missing data has also been studied in a range of high-dimensional settings, including regression \citep{LohWainwright:2012}, classification \citep{CaiZhang:2018}, and (sparse) principal component analysis \citep{ElsenerGeer:2018,ZhuWangSamworth:2019}.

In this paper, we develop a novel approach to missing data problems.  Our new proposal,  the correlation-assisted missing data (CAM) estimator, exploits the relationship between the complete cases and the observations with missing values, in order to improve on the performance of the complete-case estimator.  More precisely, we construct an (approximately) mean-zero statistic, using both the complete-cases and the data with missing entries, which is correlated with the complete-case estimator.  We then exploit this correlation to construct our new estimator, by making a linear adjustment to the complete-case estimator. 

Our first main result, Proposition~\ref{prop:MSE}, elucidates when the proposed CAM estimator will be more accurate than the complete-case estimator in terms of mean squared error.  The result does not require any assumptions on the data generating mechanism. In particular, we do not assume the data to be missing completely at random. Further, Proposition~\ref{prop:MSE} motivates an optimal (but typically unknown) choice of the adjustment term used in the construction of the CAM estimator.  This optimal choices leads to the greatest reduction in mean squared error.  We also show that we can then construct a data-driven version of the CAM estimator that performs well in many practical settings.

As a second main contribution, we showcase how the CAM estimation technique can be applied in specific settings.   First, when the complete-case estimator is a U-Statistic, we show that the optimal adjustment term also takes the form of a U-statistic. This motivates us to consider a CAM estimator with a general U-statistic as the adjustment term and allows for detailed theoretical analysis. In particular, when the data is missing completely at random, we show that our CAM estimator is unbiased, asymptotically Gaussian, and has a smaller asymptotic variance than the complete-case $U$-Statistic (cf.~Theorem~\ref{thm:Ustat2}).    We provide two concrete examples where a clear improvement can be shown analytically.  Moreover, the numerical properties of the CAM estimator are demonstrated using simulated and real data, including an application using the \emph{Terneuzen birth cohort} dataset available from \texttt{CRAN}.

We then investigate an application of the CAM technique to kernel based methods in nonparametric density estimation and regression problems.  Under standard nonparametric assumptions on the data generating distribution, we quantify the leading order asymptotic improvement in mean squared error obtained by the CAM estimator compared with the complete-case approach.  We provide further theoretical justification for the construction of the CAM estimator in these settings, in terms of approximating the optimal adjustment term.  This leads to a fast, fully data-driven construction of our estimator.   Finally, we demonstrate our method and the improvement it offers over the compete-case estimator in a  simulation study and show how it can be used in an application with the Brandsma dataset on \texttt{CRAN}.
 
Related methods to the CAM approach have been utilised in various double-sample design settings.  \citet{Fuller:1998} considers the problem of marginal mean estimation when one observes a small set of pairs of data and a larger set of univariate observations.  \citet{ChenChen:2000} proposed an estimator of the regression parameters in a generalised linear model, where the practitioner has one sample in which the observations may be noisy or proxy versions of the variables of interest, and a second validation sample where complete and exact observations of the features are available.  \citet{ChenChen:2000} show that their estimator of the regression parameter is asymptotically unbiased and has smaller (asymptotic) variance than a naive estimator based solely on the validation sample.  Similar ideas have been used more recently in different statistical problems.  \citet{Jiang:2011} propose a nonparametric kernel-based regression estimate in double sampling designs, where the response is missing in one of the samples, but a surrogate outcome is observed instead. \citet{Yang:2019} propose an estimator of the average causal treatment effect in a general setting, by combining multiple observational datasets; see also \citet{Lin:2014}, who focus on the logistic regression setting.  Very recently, \citet{Zhangetal:2019} considered estimating the marginal mean response in a semi-supervised setting, where one has a large number of \emph{unlabelled} observations alongside a small labelled training dataset.   Our work extends these existing works to a much wider range of estimation problems, including $U$-Statistics and nonparametric density estimation and regression.

Finally, we note here that, in contrast to many imputation approaches, the CAM technique has a clear and direct aim -- that is to reduce the mean squared error of the complete-case estimator.  On the other hand, while imputation is widely applicable, the properties of estimators derived from imputed data are typically not well understood.  Indeed,  imputation methods are often treated as a black-box solution to missing data problems, and the subsequent steps of estimation and inference are conducted independently of the imputation step, which may lead to unreliable results.  In our numerical work in Section~\ref{sec:sims}, we see that a regression estimator computed with imputed data can in fact perform worse than simply using a complete-case approach.  Moreover, in problems such as density estimation, it is unclear whether imputation offers a viable solution -- the distribution of the imputed data may not be the same as the target distribution.

The remainder of this paper is as follows.   In Section~\ref{sec:setting} we fix our general statistical and missing data settings and introduce the CAM estimator.  Section~\ref{sec:UStatistics} is dedicated to studying $U$-Statistics.  We then demonstrate how the CAM estimator can be applied using kernel methods in density estimation and regression problems in Section \ref{sec:RCD}.   We also provide, throughout the paper, a number of practical demonstrations of the method using real and simulated datasets.  We conclude our paper with a discussion of various practical considerations and possible extensions in Section~\ref{sec:discuss}.  All technical details and proofs of our theoretical results are presented in Section~\ref{sec:tech} in the Appendix.  We first end this section with an illustrative example that demonstrates how our estimator is constructed.  

\subsection{Illustrative example} 
\label{sec:toy}
Let $(X_1, Y_1)^T, \ldots, (X_{2n},Y_{2n})^T$ be independent and identically distributed $N_2(\nu, \Gamma)$ variables, where $\nu = (\nu_{X}, \nu_{Y})^T \in \mathbb{R}^2$ is unknown but the covariance $\Gamma = (\Gamma_{ij})$ is known.   Suppose we observe $\{(X_1, Y_1), \ldots, (X_n,Y_n)\}$ and $\{Y_{n+1}, \ldots, Y_{2n}\}$; in other words, in the language of missing data, the first component is missing completely at random in the second set of observations. Our task is to estimate $\nu_{X}$. 

\begin{figure}[ht!]
	\centering
	\includegraphics[width=0.48\textwidth]{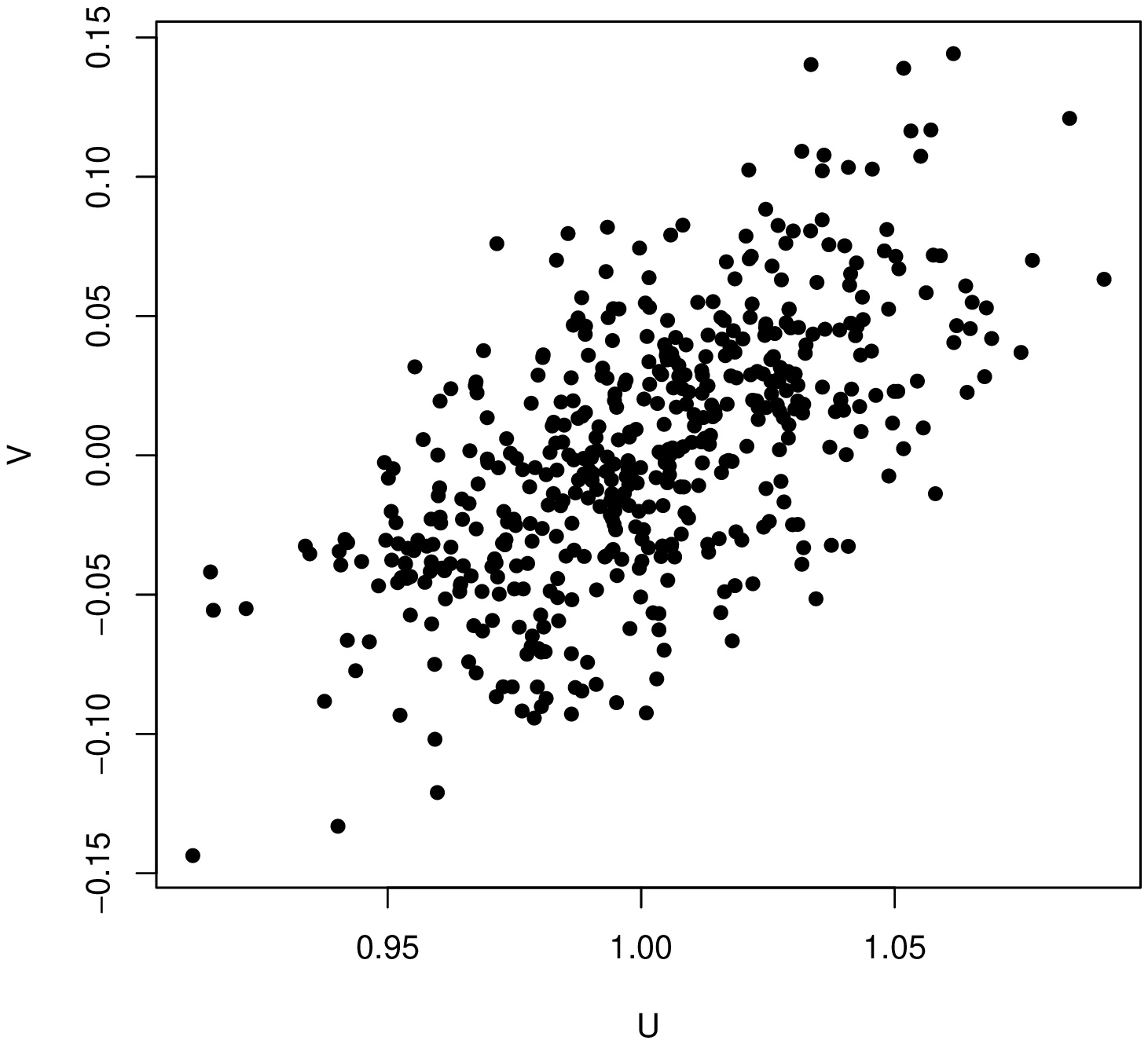} 	\includegraphics[width=0.48\textwidth]{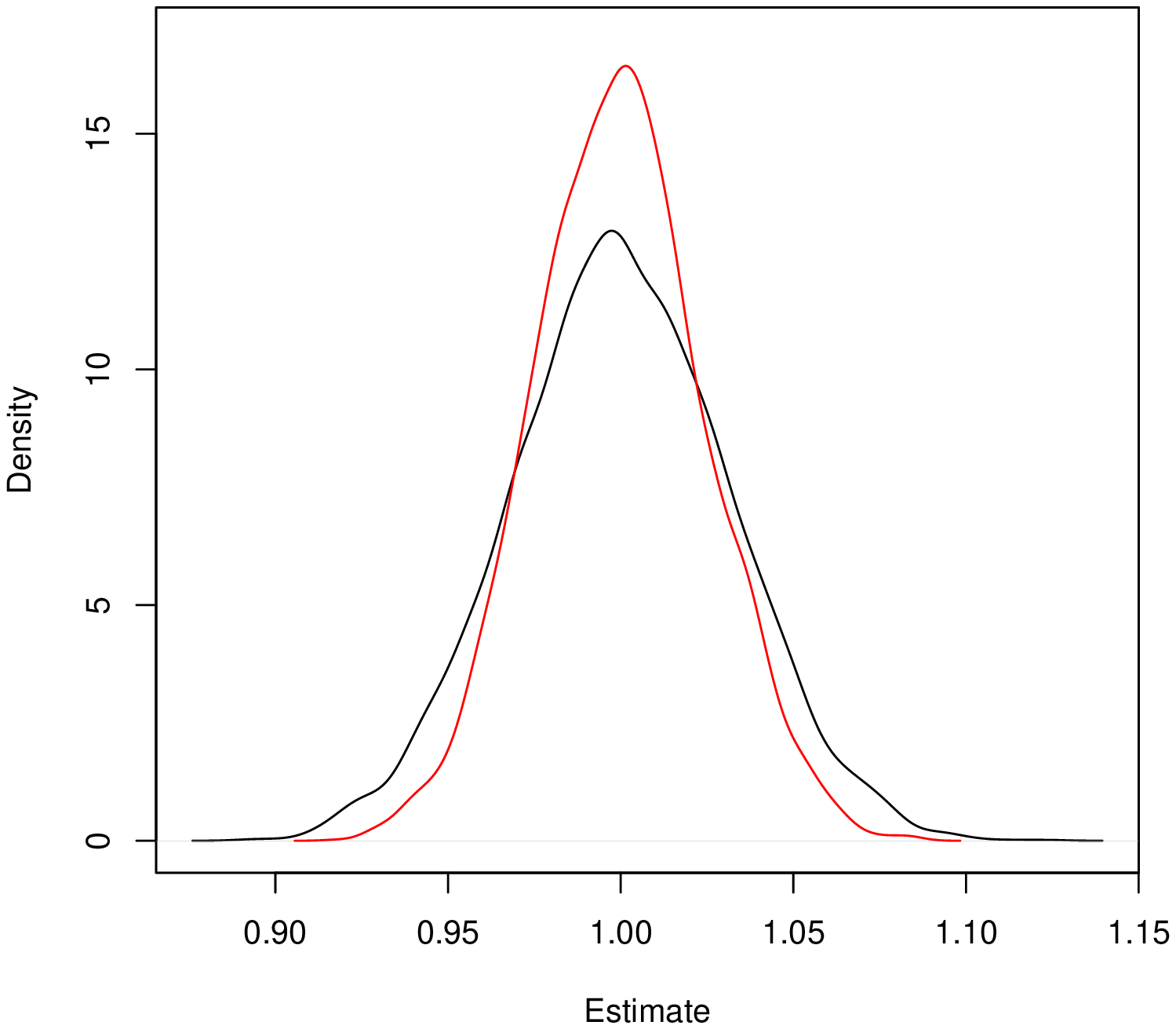} 
	\caption{Left: 500 observations from the joint distribution of $(U,V)$ (see~\eqref{eq:UV}), with $\nu_{X} = \nu_{Y} = 1$, $\Gamma = \frac{1}{10} I + \frac{9}{10} (1,1)^T(1,1)$ and $n = 1000$. Right: Sampling distributions of $\hat{\nu}_{X,1}$ in black and $\tilde{\nu}_X$ in red.}
	\label{fig:prelim}
\end{figure}

The complete-case estimator in this example ignores the second set of observations and takes the sample mean of the $X$ observations only. That is $\hat{\nu}_{X,1}:= \frac{1}{n} \sum_{i=1}^n X_i \sim N(\nu_X, \Gamma_{11}/n)$; this is unbiased and is the maximum likelihood estimator if there are no $Y$ observations.    Now consider $\hat{\nu}_{Y,1} := \frac{1}{n} \sum_{i=1}^n Y_i$ and $\hat{\nu}_{Y,2} := \frac{1}{n} \sum_{i=n+1}^{2n} Y_i$.   Then, letting $U := \hat{\nu}_{X,1}$ and $V := \hat{\nu}_{Y,1} - \hat{\nu}_{Y,2}$, we have 
\begin{equation}
\label{eq:UV}
(U,V)^T \sim N_2((\nu_X,0)^T, \tilde{\Gamma}),
\end{equation}
where $\tilde{\Gamma} = n^{-1}\Gamma + n^{-1} \Gamma_{22}(0,1)^T(0,1)$; see Figure~\ref{fig:prelim}. This motivates the estimator
\begin{equation}
\label{eq:prelim}
\tilde{\nu}_X := \hat{\nu}_{X,1} - \frac{\tilde{\Gamma}_{12}}{\tilde{\Gamma}_{22}} (\hat{\nu}_{Y,1} - \hat{\nu}_{Y,2}) \sim N\Bigl(\nu_X, \frac{1}{n}\Bigl(\Gamma_{11} - \frac{\tilde{\Gamma}_{12}^2}{\tilde{\Gamma}_{22}}\Bigr) \Bigr).
\end{equation}
We see that $\tilde{\nu}_X$ is also unbiased for $\nu_X$, but has a strictly smaller variance than  $\hat{\nu}_{X,1}$ whenever $\Gamma_{12}\neq 0$.  

There is a neat connection between our estimator and the Rao--Blackwell Theorem in this example.  The statistic $T = (T_1, T_2)^T = (\hat{\nu}_{X,1} - \hat{\nu}_{Y,1}\Gamma_{12}/\Gamma_{22}, \hat{\nu}_{Y,1} + \hat{\nu}_{Y,2})^T$ is sufficient (the technical details are presented in Section~\ref{sec:techprelim}) for $\nu$, and the \emph{Rao--Blackwellised} version of $\hat{\nu}_{X,1}$ is $\mathbb{E}(\hat{\nu}_{X,1} | T ) = \tilde{\nu}_{X}$.  In fact, one can show that $\tilde{\nu}_{X}$ is the maximum likelihood estimator of $\nu_X$ in this setting (cf.~\citet{Anderson:1957}).  




\section{Missing data and the CAM estimator}
\label{sec:setting}
Suppose $Z = (X,Y)$ is a random pair taking values in $\mathbb{R}^{d} \times \mathbb{R}$ with joint distribution $P$. We are interested in estimating some function of this distribution, $\theta = \theta(P)  \in \mathbb{R}$, say.   Examples studied in detail in this paper include the mean of the first component of $X$; the covariance between $X$ and $Y$; the value of a regression function of $Y$ on $X$ at the point $x \in \mathbb{R}^d$; or the value of the density $f_X(x)$, if it exists, of $X$ at $x \in \mathbb{R}^d$. 

We study a setting where some of the features of $X$ are missing, but where the response $Y$ is always observed.  In order to model this, suppose that we have $(Z,M)$, where the marginal distribution of $Z$ is $P$, and $M$ is a \emph{missingness} indicator taking values in $\{0,1\}^d$.  More precisely, we only observe the features $j \in \{1,\ldots,d\}$, with $M^j = 0$, that is $Z^{M} := (X^{M}, Y)$, where for $x = (x^1, \ldots, x^d)^T \in \mathbb{R}^d$ and $m = (m^1, \ldots, m^d)^T \in \{0,1\}^d$, we write $x^{m} := (x^j : m^j = 0) \in \mathbb{R}^{d_m}$, where $d_m := \sum_{j = 1}^d \mathbbm{1}_{\{m^j = 0\}}$.  Define the missingness probabilities $p_m(x,y) := \mathbb{P}(M = m | X = x, Y = y)$.   Further write $P_m$ and $Q_m$ for the joint distribution of $(X^{m}, Y)$ and $(X^{m},Y) | \{M = m\}$, respectively.  Note that under certain missing data assumptions we have $P_m = Q_m$, but that this is not the case in general. 

We say the data is \textit{missing completely at random} (MCAR) if $M$ is independent of the pair $(X,Y)$. In this case we write $p_m := \mathbb{P}(M = m)$.  The data is \textit{missing at random} (MAR) if the missingness indicator only depends on the observed data.  Formally, this means that $M$ is conditionally independent of $Z$ given $Z^M$.  
Whereas, the data is said to be \textit{missing not at random} (MNAR) if $M$ depends on the unobserved value. See, for example, \citet[Chapter~1.3]{LittleRubin:2002} for further discussion of these three scenarios. Our most general results in this paper make no assumption on the missingness type.  For our technical analysis in the $U$-Statistics and nonparametric learning problems, we focus on the MCAR case. Further discussion of the more challenging non-MCAR scenarios is given in Section~\ref{sec:discuss}. 

Let $(Z_1, M_1), \ldots, (Z_n, M_n)$ be independent and identically distributed pairs with the same distribution as $(Z,M)$.  We observe $Z_1^{M_1}, \ldots, Z_n^{M_n}$. The popular complete-case approach uses only the observations with $M_i = (0, \ldots, 0) \in \mathbb{R}^d$.  Our goal in this paper is to construct an estimator which also uses the observations with missing values in order to improve on the performance of the complete-case estimator. 

It is convenient to consider the missingness indicators $M_1, \ldots, M_n$ as fixed and equal to $m_1, \ldots, m_n$, and from this point on all probability statements should be interpreted to be conditional on $(M_1, \ldots, M_n) = (m_1, \ldots, m_n)$.  It is also useful to introduce some further notation here.  For $m \in \{0,1\}^d$, let $A_m := \{i \in \{1, \ldots,n\} : m_i = m\}$ be the set of indices of the data missing $m$, let $n_m := |A_m|$.  In particular, $A_0$ is the set of indices of the complete cases, where here and throughout we use the shorthand $0$ in place of $0_d := (0, \ldots, 0) \in \mathbb{R}^d$.  We have that $A_{m_1}\cap A_{m_2} = \emptyset$ for $m_1\neq m_2$.   Finally, for a set $A \subseteq \{1, \ldots, n\}$ and $m \in \{0,1\}^d$, let $\mathcal{T}_{A,m} := \{Z^{m}_i : i \in A\}$.  Of course, $\mathcal{T}_{A,m}$ is not necessarily observed for every $A$ and $m$, but we do observe $\mathcal{T}_{A_m,m}$.    We assume that $A_0$ is non-empty, and moreover, for each $m \in \{0,1\}^d$ with $A_m$ non-empty, we have that $\lim_{n \rightarrow \infty} \frac{n_m}{n} = q_m \in (0,1)$, almost surely.  That is, either a set of features $m$ is never missing, or, for those that are missing, the relative sample sizes are more or less balanced.  If the data is MCAR, then we have that $q_m = p_m$, for $m \in \{0,1\}^d$.

We now define a generic version of our \emph{correlation-assisted missing data} (CAM) estimator.   The main idea underpinning our proposal is to mimic the approach in the toy example in Section~\ref{sec:toy} by combining appropriate, correlated estimators, which are constructed using different parts of the data.   We first assume that there is a suitable complete-case estimator of $\theta$ that only uses the data in $\mathcal{T}_{A_0, 0}$; this is denoted by $\hat{\theta}_0 = \hat{\theta}_{A_0,0}$.  Furthermore, suppose that, for each $m$ and each $A \subseteq \{1, \ldots, n\}$, we have access to a statistic denoted by $\hat{\theta}_{A, m}$, which only depends on the data in $\mathcal{T}_{A, m}$.  Notice that, for $m \neq 0$, $\hat{\theta}_{A, m}$  is not an estimator for $\theta$. A detailed discussion of the choice of $\hat{\theta}_{A,m}$ is given at the end of this section.

Consider $m \in \{0,1\}^d \setminus\{0_d\}$  such that $A_m$ is non-empty. The CAM estimator is constructed using $\hat{\theta}_0$, $\hat{\theta}_{A_m,m}$ and $\hat{\theta}_{A_0,m}$.  The hope is that the latter two statistics have similar expected values, and that $\hat{\theta}_{A_0,m}$ is (highly) correlated with the complete-case estimator $\hat{\theta}_0$.  We then exploit this correlation in the same way as we did in the illustrative example -- see~\eqref{eq:prelim}.  

In fact, we can construct similar statistics using many $m \in \{0,1\}^d\setminus\{0_d\}$ simultaneously.   Let $\mathcal{M} \subseteq \{0,1\}^d \setminus\{0_d\}$ denote the set of values of $m$ that we would like to use (a detailed discussion of how to choose $\mathcal{M}$ in practice is postponed until Section~\ref{sec:discuss}).  Consider the two column vectors of length $|\mathcal{M}|$, given by $\hat{\theta}_{0,\mathcal{M}} :=  \bigl(\{\hat{\theta}_{A_0,m} : m \in \mathcal M\}\bigr)^T$ and  $\hat{\theta}_{\mathcal{M}} :=  \bigl(\{\hat{\theta}_{A_m,m} : m \in \mathcal{M}\}\bigr)^T$.  Finally, for $\gamma \in \mathbb{R}^{|\mathcal {M}|}$, define the \emph{correlation-assisted missing data} (CAM) estimator 
\[
\hat{\theta}_\gamma^{\mathcal{M}} :=  \hat{\theta}_{0} - \gamma^T(\hat{\theta}_{0,\mathcal M} - \hat{\theta}_{\mathcal{M}}).
\]
In practice we will use a data-driven choice of $\gamma$, which aims to minimise the mean squared error (cf.~the discussion after the statement of Proposition~\ref{prop:MSE}). 

We now study properties of the CAM approach in the general estimation problem.  For an estimator $\hat{\theta}$, let $\mathrm{MSE}(\hat{\theta}) = \mathbb{E} \{ (\hat{\theta}- \theta)^2\}$ denote its mean squared error.  Let $b(\hat{\theta}) := \mathbb{E}(\hat\theta - \theta)$ be the bias of an estimator $\hat\theta$ and let $B_{\mathcal {M}} := \mathbb{E}(\hat{\theta}_{0,\mathcal {M}} - \hat{\theta}_{\mathcal {M}})$.   Let $\Omega$ denote the $|\mathcal {M}|$-dimensional vector of covariances $\mathrm{Cov}(\hat{\theta}_{0}, \hat{\theta}_{0,\mathcal{M}})$ and let $\Lambda$ be the $|\mathcal {M}| \times |\mathcal{M}|$ covariance matrix $\mathrm{Var}(\hat{\theta}_{0,\mathcal{M}} - \hat{\theta}_{\mathcal{M}})$. 
\begin{proposition}
\label{prop:MSE}
We have that
\begin{align*} 
\mathrm{MSE}(\hat{\theta}_{\gamma}^{\mathcal M}) - \mathrm{MSE}(\hat{\theta}_{0}) =  \gamma^T (\Lambda + B_{\mathcal M} B_{\mathcal M}^T)\gamma -2\gamma^T \{\Omega +  b(\hat{\theta}_{0}) B_{\mathcal M}\}.
\end{align*}
In particular, if $\Lambda$ is nonsingular and $B_{\mathcal M}= 0$,  then $\gamma = \gamma^* :=  \Lambda^{-1}\Omega$ is the optimal weight vector achieving the maximum reduction in MSE, and 
\[
\mathrm{MSE}(\hat{\theta}_{\gamma^*}^{\mathcal M}) -  \mathrm{MSE}(\hat{\theta}_{0}) = - \Omega^T \Lambda^{-1} \Omega \leq 0.
\]
\end{proposition}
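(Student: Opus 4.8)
The plan is to expand the mean squared error of $\hat{\theta}_\gamma^{\mathcal{M}} = \hat{\theta}_0 - \gamma^T(\hat{\theta}_{0,\mathcal{M}} - \hat{\theta}_{\mathcal{M}})$ directly as a quadratic form in $\gamma$ and then read off the stated coefficients. Writing $D := \hat{\theta}_{0,\mathcal{M}} - \hat{\theta}_{\mathcal{M}}$ and $e_0 := \hat{\theta}_0 - \theta$, we have $\hat{\theta}_\gamma^{\mathcal{M}} - \theta = e_0 - \gamma^T D$, so squaring and taking expectations gives
\[
\mathrm{MSE}(\hat{\theta}_\gamma^{\mathcal{M}}) = \mathbb{E}(e_0^2) - 2\gamma^T \mathbb{E}(e_0 D) + \gamma^T \mathbb{E}(D D^T)\gamma,
\]
where the first term is exactly $\mathrm{MSE}(\hat{\theta}_0)$. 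The task then reduces to identifying the two remaining expectations with the quantities $\Omega$, $\Lambda$, $B_{\mathcal{M}}$ and $b(\hat{\theta}_0)$ appearing in the statement.

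For the quadratic term I would use the standard decomposition $\mathbb{E}(D D^T) = \mathrm{Var}(D) + \mathbb{E}(D)\mathbb{E}(D)^T = \Lambda + B_{\mathcal{M}} B_{\mathcal{M}}^T$, which is immediate from the definitions of $\Lambda$ and $B_{\mathcal{M}}$. For the cross term I would write $\mathbb{E}(e_0 D) = \mathrm{Cov}(\hat{\theta}_0, D) + b(\hat{\theta}_0) B_{\mathcal{M}}$, again using the covariance--product decomposition together with $\mathbb{E}(e_0) = b(\hat{\theta}_0)$ and $\mathbb{E}(D) = B_{\mathcal{M}}$. It then remains to show $\mathrm{Cov}(\hat{\theta}_0, D) = \Omega$. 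Expanding, $\mathrm{Cov}(\hat{\theta}_0, D) = \mathrm{Cov}(\hat{\theta}_0, \hat{\theta}_{0,\mathcal{M}}) - \mathrm{Cov}(\hat{\theta}_0, \hat{\theta}_{\mathcal{M}})$, and the first term is $\Omega$ by definition.

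The one step requiring a genuine (if short) argument is showing the second covariance vanishes, i.e.\ $\mathrm{Cov}(\hat{\theta}_0, \hat{\theta}_{\mathcal{M}}) = 0$, and this is where the construction pays off. The estimator $\hat{\theta}_0 = \hat{\theta}_{A_0,0}$ depends only on the complete-case data $\mathcal{T}_{A_0,0}$, while each coordinate $\hat{\theta}_{A_m,m}$ of $\hat{\theta}_{\mathcal{M}}$ depends only on $\mathcal{T}_{A_m,m}$. Since the index sets satisfy $A_0 \cap A_m = \emptyset$ for $m \neq 0_d$ and the observations $(Z_i, M_i)$ are independent (conditionally on the fixed missingness pattern), $\hat{\theta}_0$ and $\hat{\theta}_{\mathcal{M}}$ are functions of disjoint, independent subsets of the sample, hence independent and uncorrelated. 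Substituting the three displays into the expansion yields the claimed identity for the MSE difference.

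For the second assertion, setting $B_{\mathcal{M}} = 0$ collapses the identity to the quadratic $\gamma^T \Lambda \gamma - 2\gamma^T \Omega$. As $\Lambda$ is a covariance matrix it is positive semidefinite, and nonsingularity upgrades this to positive definiteness; minimising the strictly convex quadratic, either by differentiation or by completing the square as $(\gamma - \Lambda^{-1}\Omega)^T \Lambda (\gamma - \Lambda^{-1}\Omega) - \Omega^T \Lambda^{-1}\Omega$, gives the unique minimiser $\gamma^* = \Lambda^{-1}\Omega$ and minimum value $-\Omega^T \Lambda^{-1}\Omega$. The sign claim $-\Omega^T \Lambda^{-1}\Omega \le 0$ then follows since $\Lambda^{-1}$ inherits positive definiteness from $\Lambda$. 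The main obstacle is thus not any one calculation but simply recognising the vanishing covariance as the structural feature that makes $\Omega$, rather than a more complicated cross-covariance, the relevant quantity.
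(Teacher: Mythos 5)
Your proof is correct and takes essentially the same route as the paper's: expand the MSE difference as a quadratic in $\gamma$, identify $\mathbb{E}(DD^T) = \Lambda + B_{\mathcal M}B_{\mathcal M}^T$ and the cross term $\Omega + b(\hat{\theta}_0)B_{\mathcal M}$ using the independence of $\hat{\theta}_0$ and $\hat{\theta}_{\mathcal M}$ (they are built from disjoint index sets), then minimise the resulting quadratic when $B_{\mathcal M} = 0$. The only cosmetic difference is that the paper obtains the quadratic term by also invoking the independence of $\hat{\theta}_{0,\mathcal M}$ and $\hat{\theta}_{\mathcal M}$, whereas you apply the second-moment decomposition $\mathbb{E}(DD^T) = \var(D) + \mathbb{E}(D)\mathbb{E}(D)^T$ directly, which is equally valid.
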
 
Proposition \ref{prop:MSE} compares the MSE of our CAM estimator and the complete-case estimator. We have made no assumption on the missing data mechanism and, in particular, we do not assume here that the data is missing completely at random.  It is worth noting, however, that if the data is not MCAR, the complete-case estimator may be (even asymptotically) biased (cf.~Section~\ref{sec:discuss}).  In which case, simply improving on the performance of the complete-case estimator will not necessarily be effective. Furthermore, for general missing data mechanisms, we do not have control of $B_{\mathcal{M}}$.  However, we will see that under appropriate conditions in many estimation problems, $\hat{\theta}_{0,\mathcal{M}} - \hat{\theta}_{\mathcal{M}}$ will be (asymptotically) mean zero.
 
We see from the second part of Proposition \ref{prop:MSE} that to achieve maximum mean squared error reduction when $B_{\mathcal{M}} = 0$, we should set $\gamma = \gamma^* :=  \Lambda^{-1}\Omega$.  If, moreover, $\mathcal{M} = \{m\}$, then we have that
\[
\mathrm{MSE}(\hat{\theta}_{\gamma^*}^{\mathcal{M}}) -  \mathrm{MSE}(\hat{\theta}_{0}) =-\var(\hat\theta_0)\frac{ \var(\hat\theta_{0,m})}{\var(\hat\theta_{0,m})+\var(\hat\theta_{m})}\text{Corr}^2(\hat \theta_0, \hat\theta_{0,m}).
\]
Here we have used that $\hat\theta_{0,m}$ and $\hat\theta_{m}$ are independent since there are constructed using disjoint sets of observations.  Thus, to achieve a maximal reduction in MSE, we'd like $\hat\theta_{0,m}$ to be maximally correlated with $\hat\theta_{0}$ and $\var(\hat{\theta}_m)$ to be minimised.  The first is achieved by the conditional expectation $\hat\theta_{0,m}^* = \mathbb{E}(\hat\theta_{0}| \mathcal{T}_{A_0,m})$.   Moreover,  $\var(\hat\theta_m)$ is minimised by $\hat{\theta}_m^* = \mathbb{E}(\hat\theta_{0,m})$, but this is typically unknown.  In practice, we use the data $\mathcal{T}_{A_m, m}$ to construct an estimate of $\mathbb{E}(\hat\theta_{0,m})$ that has low variance.  The situation when $|\mathcal M|>1$ is similar by noting the independence of $\mathcal T_{A_{m_1},m_1}$ and $\mathcal T_{A_{m_2},m_2}$, for $m_1 \neq m_2$.

To understand this further, note that with the optimal choice $\hat{\theta}_{0,m}^*$ we have  
\[
\mathrm{Cov}(\hat{\theta}_{0} , \hat{\theta}^*_{0, m}) = \mathbb{E}\{\mathrm{Cov}(\hat{\theta}_{0} ,\hat{\theta}^*_{0, m} |  \mathcal{T}_{A_0,m}) \} + \mathrm{Cov}\{\mathbb{E}(\hat{\theta}_{0} | \mathcal{T}_{A_0,m}) ,\mathbb{E}(\hat{\theta}^*_{0, m} |  \mathcal{T}_{A_0,m})\}  = \var(\hat{\theta}^*_{0,m}).
\] 
Thus, in the ideal case that $\var(\hat{\theta}_{m})$ is negligible compared with $\var(\hat\theta^*_{0,m})$ (e.g.~if $n_m \gg n_0$), the improvement in MSE is simply $\var(\hat{\theta}^*_{0,m}) = \var\{\mathbb{E}(\hat{\theta}_{0}|\mathcal{T}_{A_0,m})\}$.

Of course, the conditional expectation $\mathbb{E}(\hat\theta_{0}| \mathcal{T}_{A_0,m})$ is also typically unknown.  We will see in practice that, for instance, assuming a parametric form for $\hat{\theta}_{0,m}^*$ works well.    In particular, in our study of $U$-Statistics in Section~\ref{sec:UStatistics}, we see that a data-driven choice of $\hat{\theta}_{0,m}$ will often lead to similar performance to the optimal choice.  Moreover, for nonparametric methods using kernels, the optimal $\hat{\theta}_{0,m}^*$ can	often be well approximated by the same type of nonparametric estimator with a practical choice of kernel (see Section~\ref{sec:RCD}).

\section{$U$-Statistics}
\label{sec:UStatistics} 
In this section we specialise to the setting of $U$-Statistics. Suppose we are interested in estimating a parameter of the form $\theta = \theta(P) = \mathbb{E}\bigl\{\phi\bigl(Z_1,\ldots, Z_r\bigr)\bigr\}$, for $r \geq 1$ and some function $\phi : (\mathbb{R}^d \times \mathbb{R})^{\otimes{r}} \rightarrow \mathbb{R}$, which is permutation symmetric in its $r$ arguments.  In the non-missing setting, an unbiased estimator of $\theta$ is given by 
\[
\hat{\theta} = \frac{1}{{n \choose r}  } \sum_{\{i_1, \ldots, i_r\} \subseteq \{1, \ldots, n\}} \phi\bigl(Z_{i_1},\ldots, Z_{i_r}\bigr),
\]
where the sum is taken over all unordered subsets $\{i_1, \ldots, i_r\} \subseteq \{1, \ldots, n\}$ of size $r$.  Statistics of this form have been studied in detail in the non-missing setting, see for instance \citet[Chapter 12.1]{vanderVaart:1998}.  In particular, if $\mathbb{E}\bigl\{\phi^2\bigl(Z_1,\ldots, Z_r\bigr)\bigr\} < \infty$, then
$n^{1/2} (\hat{\theta} - \theta) \rightarrow^d N(0, r^2 \psi_{U})$, where $\psi_{U} = \mathrm{Cov}\{ \phi\bigl(Z_1,\ldots, Z_r\bigr), \phi\bigl(Z_1, Z_{r+1},\ldots, Z_{2r-1}\bigr)\}.$

We now construct the CAM $U$-Statistic.  First, the complete-case $U$-Statistic is
\[
\hat{\theta}_0 = \hat{\theta}_{A_0,0} = \frac{1}{{n_0 \choose r}  } \sum_{\{i_1, \ldots, i_r\} \subseteq A_0} \phi\bigl(Z_{i_1},\ldots, Z_{i_r}\bigr),
\]
where now the sum is taken over all unordered subsets $\{i_1, \ldots, i_r\} \subseteq A_0$ of size $r$.   In this case, for $m \in \mathcal{M}$, we have
\[
\mathbb{E}(\hat\theta_{0}| \mathcal{T}_{A_0,m}) = \frac{1}{{n_0 \choose r}  } \sum_{\{i_1, \ldots, i_r\} \subseteq A_0} \mathbb{E}\{\phi\bigl(Z_{i_1},\ldots, Z_{i_r}\bigr) | Z_{i_1}^m, \ldots, Z_{i_r}^m\}.
\]
In other words, the optimal form of the adjustment term in the construction of the CAM estimator is itself a $U$-Statistic with kernel $\mathbb{E}\{\phi\bigl(Z_{1},\ldots, Z_{r}\bigr) | Z_{1}^m, \ldots, Z_{r}^m\}$.  This depends on $\phi$ and the conditional distribution of $Z$ given $Z^m$, which is typically unknown.  
We therefore consider a general construction of the adjustment term as follows: for $m \in \mathcal M$, let $\phi_m : (\mathbb{R}^{d_m} \times \mathbb{R})^{\otimes{r}} \rightarrow \mathbb{R}$ be a permutation symmetric function in its $r$ arguments, and, for $A \subseteq \{1, \ldots, n\}$, define
\[
\hat{\theta}_{A,m} = \frac{1}{{|A| \choose r}  } \sum_{\{i_1, \ldots, i_r\} \subseteq A} \phi_m\bigl(Z_{i_1}^m,\ldots, Z_{i_r}^m\bigr).
\]
We will make use of $\hat{\theta}_{0, m} = \hat{\theta}_{A_0,m}$ and $\hat{\theta}_{m} = \hat{\theta}_{A_m,m}$.    Recall that $\hat{\theta}_{0,\mathcal M} = (\hat{\theta}_{0,m} : m \in \mathcal M)^T$ and $\hat{\theta}_{\mathcal M} = (\hat{\theta}_{m} : m \in \mathcal M)^T$.  For  $\gamma \in \mathbb{R}^{|\mathcal M|}$, define the CAM $U$-Statistic
\[
\hat{\theta}_{\gamma}^{\mathcal M} := \hat{\theta}_0 - \gamma^T (\hat{\theta}_{0,\mathcal M} - \hat{\theta}_{\mathcal M}).
\]  
\sloppy Here $\phi_m$ is left unspecified, in practice one would aim to choose $\phi_m$ to mimic to optimal choice above.  In fact, we show in Theorem~\ref{thm:Ustat} that the CAM $U$-Statistic has smaller mean squared error than the complete-case $U$-Statistic whenever $\mathrm{Cov}\{\phi\bigl(Z_{1},\ldots, Z_{r}\bigr),\phi_m\bigl(Z_{1}^m, Z_{r+1}^m,\ldots, Z_{2r-1}^m\bigr)\} \neq 0$.

Suppose that the data is missing completely at random. Then we have that $B_{\mathcal M} = \mathbb{E} ( \hat{\theta}_{0,\mathcal M} - \hat{\theta}_{\mathcal M}) = 0$. Recall also that $\Omega = \mathrm{Cov}(\hat{\theta}_{0}, \hat{\theta}_{0,\mathcal M})$ and $\Lambda = \var(\hat{\theta}_{0,\mathcal M} - \hat{\theta}_{\mathcal M})$.  It  follows directly from Proposition~\ref{prop:MSE} that $\mathrm{MSE}(\hat{\theta}_{\gamma}^{\mathcal M}) - \mathrm{MSE}(\hat{\theta}_{0}) =  \gamma^T \Lambda \gamma - 2\gamma^T \Omega.$

Our next two results concern the asymptotic properties of the CAM $U$-Statistic. Let $\Omega_{U}$ be the $|\mathcal M|$-dimensional vector with entries
\[
\Omega_{U,m} :=  \mathrm{Cov}\bigl\{ \phi \bigl(Z_{1},\ldots, Z_{r}\bigr), \phi_{m}\bigl(Z_{1}^{m}, Z_{r+1}^{m},\ldots, Z_{2r-1}^{m}\bigr)\bigr\}.
\]
Further, let $\Lambda_{U}$ be the $|\mathcal M| \times |\mathcal M|$ symmetric matrix with diagonal entries
\[
\Lambda_{U, m, m} := \Bigl(1 + \frac{p_0}{p_m}\Bigr) \mathrm{Cov}\bigl\{ \phi_{m} \bigl(Z_{1}^{m},\ldots, Z_{r}^{m}\bigr), \phi_{m}\bigl(Z_{1}^{m},Z_{r+1}^{m},\ldots, Z_{2r-1}^{m}\bigr)\bigr\} 
\]
and off-diagonal entries
\[
\Lambda_{U, m_1, m_2} := \mathrm{Cov}\bigl\{ \phi_{m_1} \bigl(Z_{1}^{m_1},\ldots, Z_{r}^{m_1} \bigr), \phi_{m_2}\bigl(Z_{1}^{m_2}, Z_{r+1}^{m_2},\ldots, Z_{2r-1}^{m_2}\bigr)\bigr\}.
\]
We see in Theorem~\ref{thm:Ustat} that, under moment assumptions, we have $\Omega \rightarrow \Omega_{U}$ and $\Lambda  \rightarrow \Lambda_{U}$ as $n \rightarrow \infty$, and that the CAM $U$-Statistic is unbiased and asymptotically Gaussian.   

\begin{theorem} 
\label{thm:Ustat}
Suppose the data is missing completely at random, $\mathbb{E}\{\phi^2\bigl(Z_1,\ldots, Z_r\bigr)\} < \infty$ and, for $m \in \mathcal{M}$, $\mathbb{E}\{\phi_m^2\bigl(Z_1^m,\ldots, Z_r^m\bigr)\} < \infty$. 
Then, for $\gamma \in \mathbb{R}^{|\mathcal M|}$,
\[
\sqrt{n_0} (\hat{\theta}_{\gamma}^{\mathcal M} - \theta) \rightarrow^d N\bigl(0,  r^2 (\psi_U + \gamma^T \Lambda_U \gamma - 2\gamma^T \Omega_U ) \bigr)
\]
as $n \rightarrow \infty$.
\end{theorem}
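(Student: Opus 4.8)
The plan is to linearise each $U$-statistic appearing in $\hat\theta_\gamma^{\mathcal M}$ via its Hájek projection and then apply a central limit theorem to the resulting scalar, which is a linear combination of sums over disjoint, independent data blocks. Throughout I work conditionally on the realised missingness pattern. Under MCAR this means that the complete cases $\{Z_i : i \in A_0\}$ are i.i.d.\ from $P$, the reduced observations $\{Z_i^m : i \in A_m\}$ are i.i.d.\ from $P_m$ for each $m \in \mathcal M$, and the $|\mathcal M|+1$ collections are mutually independent; I also use that $n_m \to \infty$ with $n_0/n_m \to p_0/p_m$.

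First I would introduce the projection kernels $g(z) := \mathbb{E}\{\phi(z, Z_2, \ldots, Z_r)\} - \theta$ and $g_m(z^m) := \mathbb{E}\{\phi_m(z^m, Z_2^m, \ldots, Z_r^m)\} - \mu_m$, where $\mu_m := \mathbb{E}\{\phi_m(Z_1^m, \ldots, Z_r^m)\}$. The standard projection argument for $U$-statistics (cf.\ \citet[Chapter~12.1]{vanderVaart:1998}) gives, under the assumed finite second moments,
\[
\hat\theta_0 - \theta = \frac{r}{n_0}\sum_{i \in A_0} g(Z_i) + R_0, \quad \hat\theta_{0,m} - \mu_m = \frac{r}{n_0}\sum_{i \in A_0} g_m(Z_i^m) + R_{0,m}, \quad \hat\theta_m - \mu_m = \frac{r}{n_m}\sum_{i \in A_m} g_m(Z_i^m) + R_m,
\]
where each remainder has variance of order $n^{-2}$, so that $\sqrt{n_0}$ times any of them is $o_p(1)$.

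Substituting these expansions into $\sqrt{n_0}(\hat\theta_\gamma^{\mathcal M} - \theta) = \sqrt{n_0}(\hat\theta_0 - \theta) - \sum_{m \in \mathcal M}\gamma_m \sqrt{n_0}(\hat\theta_{0,m} - \hat\theta_m)$ and using $B_{\mathcal M} = 0$, I would regroup the summands according to the data block on which they are built:
\[
\sqrt{n_0}(\hat\theta_\gamma^{\mathcal M} - \theta) = \frac{r}{\sqrt{n_0}}\sum_{i \in A_0}\Bigl\{ g(Z_i) - \sum_{m \in \mathcal M}\gamma_m g_m(Z_i^m)\Bigr\} + \sum_{m \in \mathcal M} \gamma_m\, r\sqrt{\tfrac{n_0}{n_m}}\,\frac{1}{\sqrt{n_m}}\sum_{i \in A_m} g_m(Z_i^m) + o_p(1).
\]
The first term is a normalised sum of i.i.d.\ mean-zero variables over the complete cases $A_0$, and each term in the second sum is such a sum over an independent block $A_m$, with deterministic factor $\sqrt{n_0/n_m} \to \sqrt{p_0/p_m}$. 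Since the $|\mathcal M|+1$ blocks are mutually independent, applying the classical CLT to each block and summing the independent Gaussian limits yields asymptotic normality with mean zero.

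Finally I would compute the limiting variance by adding the block variances: the $A_0$ block contributes $r^2\,\mathrm{Var}\{g(Z) - \sum_m \gamma_m g_m(Z^m)\}$ and each $A_m$ block contributes $\gamma_m^2 r^2 (p_0/p_m)\,\mathrm{Var}\{g_m(Z^m)\}$. The key identity is that two $U$-statistic kernels sharing exactly one of their $r$ arguments have covariance equal to the covariance of their one-dimensional projections; applying it gives $\mathrm{Var}\{g(Z)\} = \psi_U$, $\mathrm{Cov}\{g(Z), g_m(Z^m)\} = \Omega_{U,m}$, and $\mathrm{Cov}\{g_{m_1}(Z^{m_1}), g_{m_2}(Z^{m_2})\} = \Lambda_{U,m_1,m_2}$ for $m_1 \neq m_2$, while $\mathrm{Var}\{g_m(Z^m)\}$ equals the covariance factor in $\Lambda_{U,m,m}$. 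Expanding the $A_0$ variance, the diagonal term $\gamma_m^2 r^2\,\mathrm{Var}\{g_m(Z^m)\}$ combines with the independent $A_m$ contribution $\gamma_m^2 r^2 (p_0/p_m)\,\mathrm{Var}\{g_m(Z^m)\}$ to produce exactly $\gamma_m^2 r^2 \Lambda_{U,m,m}$, the linear term becomes $-2\gamma^T\Omega_U$, and the off-diagonal cross terms assemble the off-diagonal of $\gamma^T\Lambda_U\gamma$, giving the stated variance $r^2(\psi_U + \gamma^T\Lambda_U\gamma - 2\gamma^T\Omega_U)$. The main obstacle is precisely this bookkeeping: matching $\psi_U$, $\Omega_U$ and $\Lambda_U$ (defined through overlapping evaluations of $\phi$ and $\phi_m$) to the projection variances and covariances, and in particular checking that the $(1+p_0/p_m)$ factor in $\Lambda_{U,m,m}$ arises exactly from adding the $A_0$ and $A_m$ diagonal pieces; the remainder control is routine and the variance is additive across blocks by independence, so the rest is careful term-matching.
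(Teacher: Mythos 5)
Your proposal is correct and takes essentially the same route as the paper: both decompose $\hat{\theta}_{\gamma}^{\mathcal M}$ into the $A_0$-block statistic $\hat{\theta}_0 - \gamma^T\hat{\theta}_{0,\mathcal M}$ and the independent $A_m$-block statistics $\hat{\theta}_m$, apply a central limit theorem to each block, combine the limits by independence using $n_0/n_m \rightarrow p_0/p_m$, and then carry out the same variance bookkeeping identifying $\psi_U$, $\Omega_U$ and $\Lambda_U$ with projection (co)variances via the shared-argument covariance identity. The only difference is one of packaging: where the paper invokes the classical $U$-Statistic CLT (van der Vaart, 1998, Theorem~12.3) applied to the combined kernel $\tilde{\phi} = \phi - \sum_{m \in \mathcal M}\gamma_m \phi_m$ on $A_0$, you unwind that citation and re-derive it from the H\'{a}jek projection plus the ordinary i.i.d.\ CLT, which is precisely how that theorem is proved.
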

Theorem~\ref{thm:Ustat} shows that an asymptotically optimal choice of $\gamma$, which minimises the asymptotic variance of the CAM $U$-Statistic, is $\gamma^* = \Lambda_U^{-1} \Omega_U$. The optimal leading order asymptotic variance reduction is $n_0^{-1}r^2\Omega_U^T \Lambda_U^{-1} \Omega_U$. This is of the same order as the asymptotic variance of the complete-case $U$-Statistic, which is $n_0^{-1}r^2 \psi_U$.   Of course $\Omega_U$ and $\Lambda_U$ are typically unknown.  However, to estimate these we can further exploit the use of $U$-Statistics.   Note that 
\begin{align*}
\Omega_{U, m}  & = \frac{1}{2} \mathbb{E}\bigl[\bigl\{ \phi(Z_{1},\ldots, Z_{r}) - \phi(Z_{2r}, \ldots, Z_{3r-1})\bigr\} 
\\ & \hspace{150pt} \bigl\{ \phi_m(Z^m_{1}, Z^m_{r+1},\ldots, Z^m_{2r-1}) - \phi_{m}(Z_{2r}^{m}, Z_{3r}^{m},\ldots, Z_{4r-2}^{m})\bigr\} \bigr].
\end{align*}
Thus, we can estimate $\Omega_{U}$ using a $U$-Statistic of order $4r-2$ (see \eqref{eq:OmegaU} in Section~\ref{sec:Utech}).  A similar expression can be derived for the entries of $\Lambda_{U}$, but for brevity we exclude the formulas here -- they are given in \eqref{eq:Lambdam} and \eqref{eq:Lambdam12} in Section~\ref{sec:Utech}.  Let  $\hat{\Lambda}_U$ and $\hat{\Omega}_U$ denote resulting $U$-Statistic estimators of $\Lambda_U$ and $\Omega_U$, respectively. (In practice, averaging over all subsamples of size $4r-2$ will be computationally expensive, in our simulations $\hat{\Omega}_U$ and $\hat{\Lambda}_{U}$ are approximated using $10^5$ random subsamples.)

Now, let $\hat{\gamma} := \hat{\Lambda}_U^{-1}\hat{\Omega}_U$ and consider the practical CAM $U$-Statistic $\hat{\theta}^{\mathcal M}_{\hat{\gamma}}$. Theorem~\ref{thm:Ustat2} shows that we can mimic the performance of the optimal CAM $U$-Statistic $\hat{\theta}^{\mathcal M}_{\gamma^*}$ using the data-driven choice of $\gamma$. 
\begin{theorem} 
 	\label{thm:Ustat2}
 	Suppose the data is missing completely at random, $\mathbb{E}\{\phi^4\bigl(Z_1,\ldots, Z_r\bigr)\} < \infty$ and, for $m \in \mathcal M$, $\mathbb{E}\{\phi_m^4\bigl(Z_1^m,\ldots, Z_r^m\bigr)\} < \infty$. 
 	Then
 	\begin{equation}
	\label{eq:UStat2}
 	\sqrt{n_0} (\hat{\theta}_{\hat{\gamma}}^{\mathcal M} - \theta) \rightarrow^d N\bigl(0,  r^2 (\psi_U - \Omega_U \Lambda_U^{-1}\Omega_U ) \bigr).
 	\end{equation}
 \end{theorem}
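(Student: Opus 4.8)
The plan is to reduce everything to Theorem~\ref{thm:Ustat} evaluated at the deterministic optimum $\gamma^* = \Lambda_U^{-1}\Omega_U$, treating the data-driven $\hat\gamma$ as an asymptotically negligible perturbation. Since Theorem~\ref{thm:Ustat} applies only to a fixed weight vector, I cannot substitute the random $\hat\gamma$ directly; instead I would start from the exact identity
\[
\sqrt{n_0}(\hat\theta_{\hat\gamma}^{\mathcal M} - \theta) = \sqrt{n_0}(\hat\theta_{\gamma^*}^{\mathcal M} - \theta) - (\hat\gamma - \gamma^*)^T \sqrt{n_0}(\hat\theta_{0,\mathcal M} - \hat\theta_{\mathcal M}),
\]
and argue that the second term converges to zero in probability while the first already has the desired limit.

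The leading term is handled immediately by Theorem~\ref{thm:Ustat} with $\gamma = \gamma^*$: it converges in distribution to $N\bigl(0, r^2(\psi_U + (\gamma^*)^T\Lambda_U\gamma^* - 2(\gamma^*)^T\Omega_U)\bigr)$, and plugging in $\gamma^* = \Lambda_U^{-1}\Omega_U$ collapses the variance to $r^2(\psi_U - \Omega_U^T\Lambda_U^{-1}\Omega_U)$, which is precisely the limiting variance in \eqref{eq:UStat2}. For the correction term I need two facts. First, the vector $\sqrt{n_0}(\hat\theta_{0,\mathcal M} - \hat\theta_{\mathcal M})$ is $O_p(1)$; this tightness follows from the joint Hájek projection argument underlying Theorem~\ref{thm:Ustat} (its entries are jointly asymptotically Gaussian with covariance $r^2\Lambda_U$, obtainable by letting $\gamma$ range over coordinate directions). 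Second, and crucially, $\hat\gamma \to^p \gamma^*$; granted this, the correction term is $o_p(1)\cdot O_p(1) = o_p(1)$ by Slutsky's lemma, and the decomposition yields the claim. Note that no independence between $\hat\gamma$ and the remaining statistics is required, since $\hat\gamma - \gamma^*$ converges to a constant in probability.

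The heart of the proof is therefore establishing $\hat\gamma \to^p \gamma^*$. By the continuous mapping theorem applied to matrix inversion at the nonsingular matrix $\Lambda_U$, it suffices to show $\hat\Omega_U \to^p \Omega_U$ and $\hat\Lambda_U \to^p \Lambda_U$. Here I would use that each of these is a $U$-statistic of fixed order $4r-2$ whose kernel is built from products such as $\{\phi(\cdot) - \phi(\cdot)\}\{\phi_m(\cdot) - \phi_m(\cdot)\}$ (cf.~\eqref{eq:OmegaU}, \eqref{eq:Lambdam}, \eqref{eq:Lambdam12}). By Cauchy--Schwarz the second moment of such a kernel is controlled by $\{\mathbb{E}(\phi^4)\,\mathbb{E}(\phi_m^4)\}^{1/2} < \infty$, which is exactly where the fourth-moment hypotheses of Theorem~\ref{thm:Ustat2}---strengthening the second-moment hypotheses of Theorem~\ref{thm:Ustat}---enter. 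Finite kernel variance gives the standard $U$-statistic variance bound, so the entries of $\hat\Omega_U$ and $\hat\Lambda_U$ have variance $O(n^{-1})$; combined with (asymptotic) unbiasedness this delivers $L^2$, and hence in-probability, consistency.

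I expect the main obstacle to be bookkeeping rather than anything conceptually deep: verifying that every kernel appearing in the definitions of $\hat\Omega_U$ and $\hat\Lambda_U$ genuinely has finite second moment under the fourth-moment assumption, and that the $1+p_0/p_m$ weighting in the diagonal of $\Lambda_U$ is reproduced in the limit of $\hat\Lambda_U$ (this is where the ratio $n_0/n_m \to p_0/p_m$ under MCAR is used). A secondary point worth confirming is that replacing the full order-$(4r-2)$ average by a subsampled version, as done in practice, does not affect consistency---it introduces only additional mean-zero, asymptotically negligible noise---though for the theorem statement one may take $\hat\Omega_U$ and $\hat\Lambda_U$ to be the exact $U$-statistics.
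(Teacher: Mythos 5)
Your proposal is correct and follows essentially the same route as the paper's proof: the identical decomposition $\sqrt{n_0}(\hat\theta_{\hat\gamma}^{\mathcal M}-\theta) = \sqrt{n_0}(\hat\theta_{\gamma^*}^{\mathcal M}-\theta) - (\hat\gamma-\gamma^*)^T\sqrt{n_0}(\hat\theta_{0,\mathcal M}-\hat\theta_{\mathcal M})$, Theorem~\ref{thm:Ustat} applied at the fixed $\gamma^* = \Lambda_U^{-1}\Omega_U$, consistency of $\hat\Omega_U$ and $\hat\Lambda_U$ via classical $U$-statistic theory (hence $\hat\gamma \to^p \gamma^*$ by continuous mapping), joint asymptotic normality of $\sqrt{n_0}(\hat\theta_{0,\mathcal M}-\hat\theta_{\mathcal M})$ inherited from the proof of Theorem~\ref{thm:Ustat}, and Slutsky's theorem. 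Your Cauchy--Schwarz justification of where the fourth-moment hypotheses enter simply spells out a step the paper leaves implicit.
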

 The asymptotic variance in~\eqref{eq:UStat2} can be estimated by plugging in the estimators of $\psi_U$,  $\Omega_U$, and $\Lambda_U$.   Here a $U$-Statistic estimator of $\psi_U$, denoted by $\hat \psi_U$, can be constructed in the same way as $\hat\Omega_U$ and $\hat\Lambda_U$.  Then one can show that $\sqrt{n_0}(\hat\psi_U - \hat\Omega_U \hat\Lambda_U^{-1}\hat\Omega_U )^{-1/2} (\hat{\theta}_{\hat{\gamma}}^{\mathcal M} - \theta) \rightarrow^d N(0,r^2)$, which can be used for statistical inference such as constructing  confidence intervals and testing hypotheses. 
 
 In order to understand the improvement the CAM $U$-Statistic achieves over the complete-case method it is helpful to consider some examples.

\begin{example}[Marginal mean estimation]
Suppose that $d = 1$ and we are interested in estimating the parameter $\theta = \mathbb{E}(X)$.  Suppose further that the $X$ variable is missing completely at random. 
We observe $\mathcal{T}_{A_0,0} = \{(X_i, Y_i) : m_i = 0\}$ and $\mathcal{T}_{A_1,1} = \{Y_i : m_i = 1\}$, where the respective sample sizes are $n_0$ and $n_1$. Let $n=n_0 + n_1$.  In contrast to the illustrative example in Section \ref{sec:toy}, we are not assuming the joint distribution of $(X,Y)$ is Gaussian.   In this setting the complete-case $U$-Statistic  is $\hat{\theta}_0 = \frac{1}{n_0} \sum_{i \in A_0} X_i.$  Of course, by the Central Limit Theorem, if $\mathbb{E}(X^2) < \infty$, then $\sqrt{n_0} (\hat{\theta}_0 - \theta) \rightarrow^d N(0, \var(X))$. 

Next we consider the CAM estimator with $\mathcal{M} = \{1\}$ which takes into account the information from variable $Y$. For a generic function $\phi_1 : \mathbb{R} \rightarrow \mathbb{R}$ satisfying $\var(\phi_1(Y_1))<\infty$ define
\[
\hat{\theta}_{0,1} =  \frac{1}{n_0} \sum_{i \in A_0} \phi_1(Y_i); \quad \hat{\theta}_{1} =  \frac{1}{n_1} \sum_{i \in A_1} \phi_1(Y_i).
\]
Then $\Lambda = \bigl(\frac{1}{n_0} + \frac{1}{n_1}\bigr) \var\{\phi_{1} (Y_{1})\}$ and  $\Omega = \frac{1}{n_0} \mathrm{Cov} \{X_1, \phi_{1} (Y_{1})\}$. We have that
\[
\mathrm{MSE}(\hat{\theta}_{\gamma^*}^{\mathcal M}) - \mathrm{MSE}(\hat{\theta}_{0})= \var(\hat{\theta}_{\gamma^*}^{\mathcal M}) - \var(\hat{\theta}_{0})  =  - \frac{n_1 \mathrm{Cov}^2 \{X, \phi_{1} (Y)\}}{n_0 n \var\{\phi_{1}(Y)\}} \leq 0.
\]
Thus, there is a guaranteed improvement in MSE as long as $X$ and $\phi_1(Y)$ are correlated. 

The optimal choice of $\phi_1$ in this case is $\phi_1^*(y) := \mathbb{E}(X | Y= y)$, (cf.~the discussion at the end of the previous section), and the corresponding first order variance reduction is $\frac{n_1}{n_0 n} \var\{\mathbb{E}(X|Y)\}$.  If $X$ and $Y$ are independent, then $\var\{\mathbb{E}(X|Y)\} = 0$, i.e.~as expected, the CAM estimator will not lead to an improvement over the complete-case estimator, since the $Y$ variable tells us nothing about the marginal $X$ distribution. On the other hand, in the pathological case that $X$ can be written as a deterministic function of $Y$, we see that the variance reduction is  $\frac{n_1}{n_0n}\var(X)$. Consequently,  $\var(\hat{\theta}_{\gamma^*}^\mathcal{M}) = \frac{\var(X)}{n_0}\bigl(1 - \frac{n_1}{n}\bigr) = \frac{\var(X)}{n}$, i.e. the variance that could be achieved by using a fully observed dataset!

Of course the regression function $\mathbb{E}(X | Y= y)$ will typically be unknown to the user. Consider instead therefore the practical choice $\phi_m(y) := y$. Then we have that
\[
\var(\hat{\theta}_{\gamma^*}^{\mathcal M}) =  \frac{\var(X)}{n_0} \Bigl\{1 - \frac{n_1 }{n}\mathrm{Corr}^2 (X, Y)\Bigr\}. 
\]
In fact, the above derivation holds as long as $\phi_m(y)$ is a linear function of $y$.

In the left panel of Figure~\ref{fig:UStat1}, we present the sampling distributions of the complete-case and CAM $U$-Statistic for the mean of $X \sim \mathrm{Exp}(1)$, where $Y | \{X = x\} \sim N(x, \sigma^2)$. We set $n = 1000$, $\sigma = 0.2$ and the $X$ variable is missing with probability $0.5$.  We present the results for the practical choice $\phi_m(y) = y$ and the optimal choice $\phi_m(y) = \mathbb{E}(X | Y = y) = y - \sigma^2 + \sigma\frac{\Phi'(\sigma - y/\sigma)}{1 - \Phi(\sigma - y/\sigma)}$, where $\Phi'(\cdot)$ and $\Phi(\cdot)$ denote the standard Normal density and distribution function, respectively.  The variance reduction can be clearly seen from the plots.  We see also that the practical CAM $U$-Statistic has very similar performance to the optimal CAM $U$-Statistic. 
\end{example}

 \begin{figure}[ht!]
 	\centering
 	\includegraphics[width=0.45\textwidth]{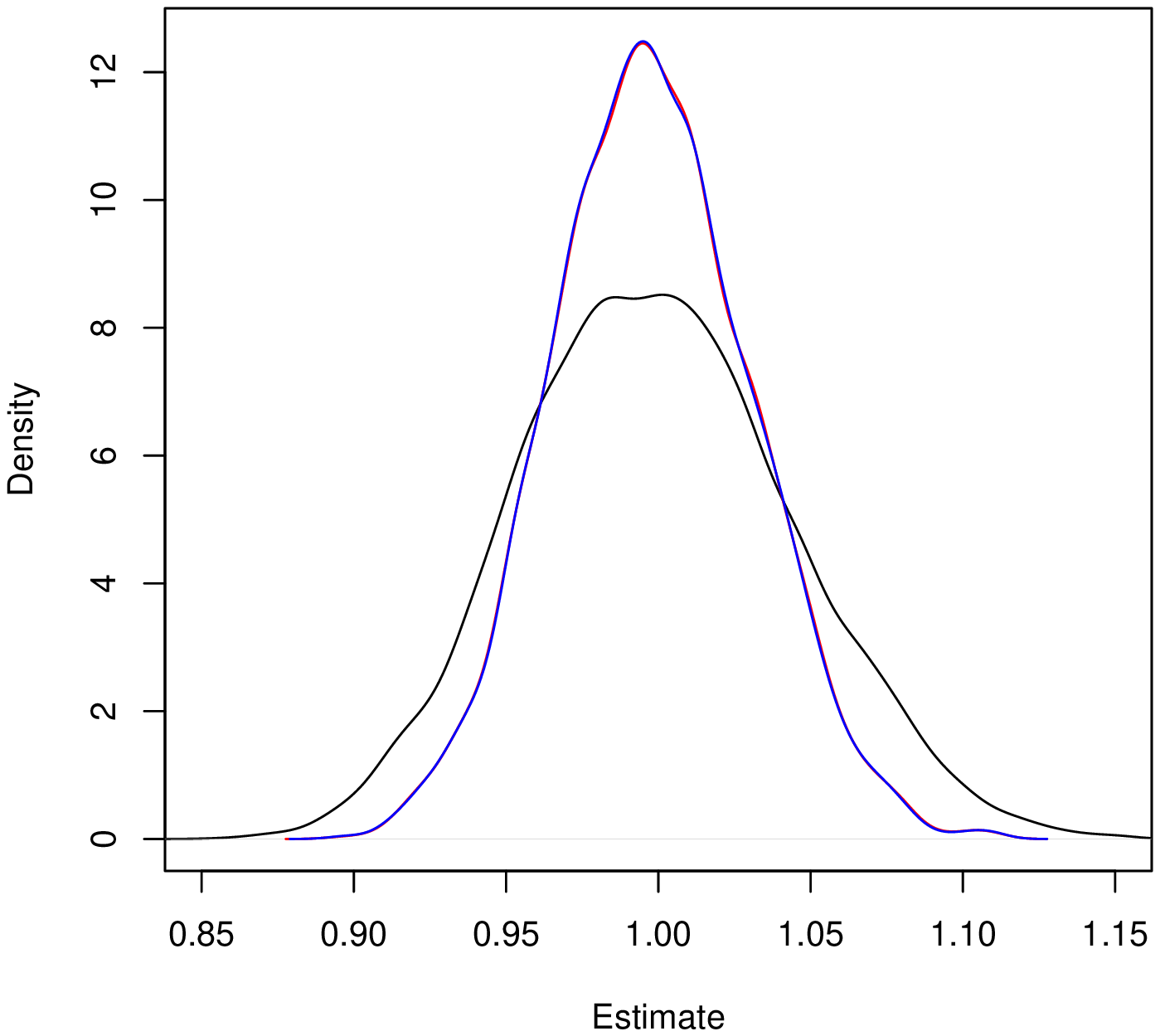} 	\includegraphics[width=0.45\textwidth]{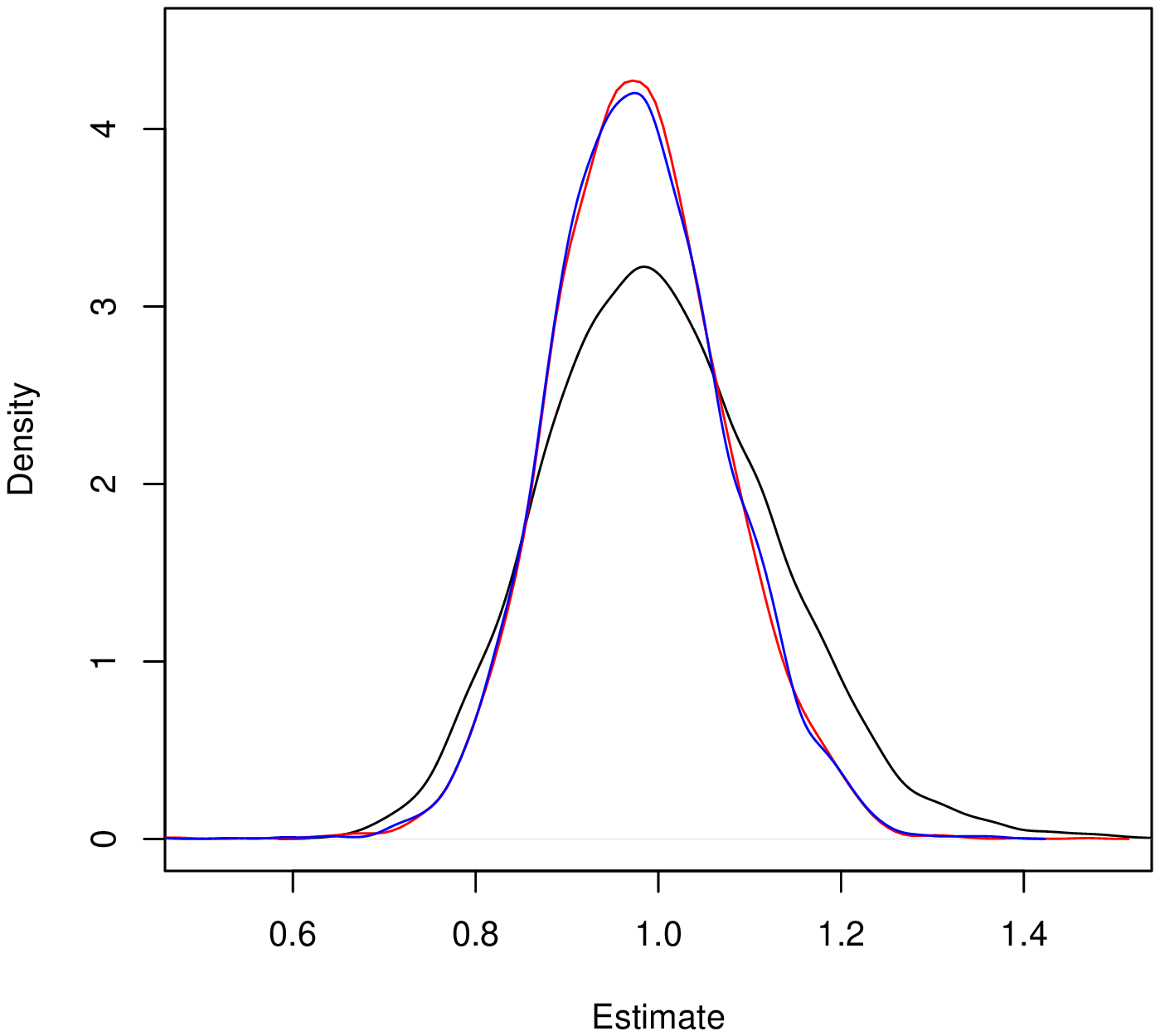} 
 	\caption{Sampling distributions of the complete-case $U$-Statistic $\hat{\theta}_0$ in black, the CAM $U$-Statistic $\hat{\theta}_{\hat{\gamma}}^{\mathcal M}$ in red and the optimal CAM $U$-Statistic (i.e. with $\phi_m = \phi_m^*$ and $\gamma = \gamma^*$) in blue for Example~1 (left) and~2 (right).}
 	\label{fig:UStat1}
 \end{figure}

\begin{example}[Covariance estimation]
Consider the same set-up as in Example~1, but suppose now we are interested in the parameter $\theta = \mathrm{Cov}(X,Y) = \frac{1}{2} \mathbb{E}\{(X_1 - X_2)(Y_1 - Y_2)\}$.   In this case, we have the complete-case $U$-Statistic 
\[
\hat{\theta}_0 = \frac{1}{2{n_0 \choose 2}  } \sum_{\{i, j\} \subseteq A_0} (X_{i} - X_j)(Y_{i} - Y_{j}).
\]
If  $\mathbb{E}\{(X_1 - X_2)^2(Y_1 - Y_2)^2\} < \infty$, then by \citet[Theorem 12.3]{vanderVaart:1998} we have that $\sqrt{n_0} (\hat{\theta}_0 - \theta) \rightarrow^d N(0, \psi_1 )$,  where $\psi_1 := \frac{1}{4}\mathrm{Cov}\{(X_1 - X_2)(Y_1 - Y_2), (X_1 - X_3)(Y_1 - Y_3)\}$.

Now, for a generic function $\phi_1 : \mathbb{R}^2 \rightarrow \mathbb{R}$ consider
\[
\hat{\theta}_{0,1} =  \frac{2}{n_0(n_0-1)} \sum_{\{i,j\} \subseteq A_0} \phi_1(Y_i,Y_j); \quad \hat{\theta}_{1} =  \frac{2}{n_1(n_1 - 1)} \sum_{\{i,j\} \subseteq A_1} \phi_1(Y_i,Y_j).
\]
Here the optimal function 
 is $\phi_1^*(y_1,y_2) =\frac{1}{2} \{\mathbb{E}(X | Y = y_1) - \mathbb{E}(X | Y =  y_2)\}(y_1 - y_2)$.  Recall that $p_1 = \lim_{n\rightarrow \infty} \frac{n_1}{n}$. The corresponding CAM U-statistic satisfies
\[
\lim_{n \rightarrow \infty} n_0 \var (\hat{\theta}^{\mathcal{M}}_{\gamma^*} )= \psi_1 - \frac{p_1}{4} \mathrm{Cov}\{\phi_1^*(Y_1,Y_2), \phi_1^*(Y_1,Y_3)\}.
\]
Thus, we have first order variance reduction as long as
$\mathrm{Cov}\{\phi_1^*(Y_1,Y_2), \phi_1^*(Y_1,Y_3)\}\neq 0$.
 
However, since  $\phi^*_1$ is generally unknown, consider the practical choice $\phi_1(y_1,y_2) = \frac{1}{2}(y_1 - y_2)^2$.  This is motivated by supposing that $\mathbb{E}(X | Y = y)$ is a linear function of $y$.  Then we have
\[
\lim_{n\rightarrow \infty} n_0 \var(\hat{\theta}_{\gamma^*}^\mathcal{M})  =  \psi_1 -  \frac{p_1 \mathrm{Cov}^2\{(X_1 - X_2)(Y_1 - Y_2), (Y_1 - Y_3)^2\}}{4\mathrm{Cov}\{(Y_1 - Y_2)^2,(Y_1 - Y_3)^2\}}
\]
In the right panel of Figure~\ref{fig:UStat1}, we present the sampling distributions of the complete-case, practical CAM, and optimal CAM $U$-Statistics for $\theta = \mathrm{Cov}(X,Y)$.  As in Example~1, the data generating distribution is $X \sim \mathrm{Exp}(1)$ and $Y | \{X = x\} \sim N(x, \sigma^2)$, $n = 1000$, and  $p_1 = 0.5$. 
\end{example}

\subsection{The Terneuzen birth cohort dataset}
We now demonstrate how the CAM $U$-Statistic can be used in practice.  In particular, we will apply our proposal from the previous two examples to the Terneuzen birth cohort data available from the \texttt{mice} package on CRAN.  The full dataset consists of 3951 observations of 11 features covering 306 people.  We simplify the problem by taking a subset of the data and only include the first measurement for each person. Furthermore, we retain only 4 of the features, namely ``sex'', ``height $Z$-score'', ``weight $Z$-score'', and ``bmi $Z$-score''.  In the resulting dataset, there are 306 observations (one for each patient), of which 105 are missing both the height and bmi features.  In order to fit this in the framework introduced above let $Y$ denote sex (1 for female, 0 for male), and let $X$ be the 3-dimensional vector of weight, height, and bmi.  

We have 201 complete cases in $A_0$, and 105 cases in $A_{m}$ for $m = (1,0,1)^T$, where only $Y$ (sex) and $X^{(2)}$ (weight) are observed.  We consider two problems; (i) to estimate the average bmi $Z$-score in the cohort, and (ii) estimate the covariance between the height and weight $Z$-scores. In both cases, we consider two choices of $\phi_m$, a simple choice, and a regression estimate. 

In problem (i), recall that we can write the marginal mean as a $U$-Statistic with $\phi(Z) = X^{(3)}$, and the complete-case estimator is $\frac{1}{n_0}\sum_{i \in A_0} X^{(3)}_i$.  To construct the CAM $U$-Statistics, we first consider $\phi_m(Z^m) = X^{(2)}$, i.e.~the weight $Z$-score.  For our second choice, write 
\begin{equation}
\label{eq:Tbc1}
\phi_m(Z^m) = \beta_0 + \beta_1 Y + \beta_2 X^{(2)} + \beta_3 X^{(2)} Y.
\end{equation}
We choose $(\beta_0, \beta_1, \beta_2, \beta_3)$ by fitting a linear model (with an interaction) of bmi on height and sex using the complete cases (in \texttt{R} this is simply done using the \texttt{lm} function). The idea here is to approximate $\phi^*_m(Z^m) = \mathbb{E}\{\phi(Z)| Z^m \}$.  Then $\hat{\theta}_{0,m}$ is the sample average of the fitted values, and $\hat{\theta}_{m}$ is the average of the predictions made on the data in $A_m$.  

\begin{table}[ht!]
	\centering
	\caption{\label{tab:tbc} Comparison of the complete-case and CAM $U$-Statistics using the Terneuzen birth cohort dataset}
	\begin{tabular}{l | c | c | c}
		Method & Point est. & 95$\%$ CI & CI width \\
		\hline
		\multicolumn{4}{l}{(i) Marginal mean of bmi score}\\ 
		\hline
		Complete-case & 				0.55 & (0.38, 0.73) & 0.35\\
		CAM: $\phi_m(Z^m) = X^{(2)}$ & 	0.55 & (0.44, 0.66) & \textbf{0.22}\\
		CAM: $\phi_m(Z^m)$ linear -- see \eqref{eq:Tbc1}       & 	0.55 & (0.44, 0.66) & \textbf{0.22}\\
		\hline
		\multicolumn{4}{l}{(ii) Covariance between height and weight}\\
		\hline
		Complete-case & 				 1.27 & (0.50, 2.04) & 1.54\\
		CAM: $\phi_m(Z_1^m,Z_2^m) = \frac{1}{2} (X_1^{(2)} - X_2^{(2)})^2$ & 	 1.17 & (0.66, 1.68) & 1.02\\
		CAM: $\phi_m(Z_1^m,Z_2^m)$ linear  -- see \eqref{eq:Tbc2} 		&1.19 & (0.70, 1.69) & \textbf{0.99}\\
	\end{tabular}
\end{table}

For problem (ii), we have $\phi(Z_1, Z_2) = \frac{1}{2} (X_1^{(1)} - X_2^{(1)})(X_1^{(2)} - X_2^{(2)})$.  In this case, to construct the CAM estimator, we first use the simple choice $\phi_m(Z_1^m, Z_2^m) = (X_1^{(2)} - X_2^{(2)})^2$. Then, similarly to the previous problem we consider 
\begin{equation}
\label{eq:Tbc2}
\phi_m(Z_1^m,Z_2^m) = \{(\beta_1 (Y_1 - Y_2) + \beta_2 (X_1^{(2)} - X_2^{(2)}) + \beta_4 (X_1^{(2)} Y_1 - X_2^{(2)} Y_2)\}(X_1^{(2)} - X_2^{(2)}),
\end{equation}
where, again, the idea is to approximate the optimal $\phi^*_m(Z_1^m, Z_2^m) = \mathbb{E}\{\phi(Z_1, Z_2)| Z_1^m, Z_2^m \}$. 

We compare the performance to the complete-case estimator with both versions of the CAM $U$-Statistic in Table~\ref{tab:tbc}. In each case, we present the point estimate, an approximate 95\% confidence interval based on the result in Theorem~\ref{thm:Ustat2} and the corresponding interval width.  We see that the CAM estimator has a much narrower interval width in both problems.  Moreover, in problem (i) the two CAM approaches lead to identical results (up to 3 significant figures), whereas in the second problem, the second CAM $U$-Statistic performs slightly better. 

\section{Nonparametric statistical learning}
\label{sec:RCD}
We now study two fundamental statistical learning problems, namely density estimation and regression.   Typically in these problems, we are interested in estimating a function from $\mathbb{R}^d$ to $\mathbb{R}$, we show how the CAM estimator can be applied locally, i.e. for each $x \in \mathbb{R}^d$.    Throughout this section we assume that the data is missing completely at random.

Density estimation and regression are canonical problems in statistics, and many nonparametric approaches have been proposed and studied in detail -- see, for instance, \citet{Rosenblatt:1956}, \citet{Parzen:1962}, \citet{Wahba:1990}, \citet{WandJones:1995}, \citet{FanGijbels:1996}, \citet{CarrollRuppertWalsh:1998}, \citet{Tsybakov:2004} and \citet{BiauDevroye:2015}.  We focus our study kernel based methods.

\subsection{Kernel density estimation} 
\label{sec:density}
In this subsection, assume we only observe $X_1^{m_1}, \ldots, X_n^{m_n}$ and we are interested in estimating $f_X$, the density of the marginal distribution of $X$. We specialise the setting introduced in Section \ref{sec:setting} by letting $\theta(P) = f_X(x)$, the marginal density of $X$ at a fixed $x \in \mathbb{R}^d$. 

Let $h > 0$ be the bandwidth and, for $m \in \mathcal{M}$, let $K_m : \mathbb{R}^{d_m} \rightarrow [0,\infty)$ be a $d_m$-dimensional Kernel function.  For $x \in \mathbb{R}^d$, $A \subseteq \{1,\ldots,n\}$ and $m \in \mathcal{M}$, let 
\begin{equation}
\label{eq:kdedef}
\hat{f}_{A,m}(x^{m}) = \hat{f}_{A,m,h,K_m}(x^{m}) :=  \frac{1}{|A| h^{d_m}} \sum_{i \in A} K_m\Bigl( \frac{X_i^{m} - x^{m}}{h}\Bigr).  
\end{equation}
This can be thought of as an estimator of the marginal density $f_{X^m}(x^m)$ of $X^m$ at $x^m$. In particular, the complete-case estimator of $f_X(x)$ is $\hat{f}_{0}:= \hat{f}_{A_0,0,h,K}(x)$, where $K = K_0$ is the $d$-dimensional kernel.  Our CAM density estimator is constructed using $\hat{f}_{0}$, as well as $\hat{f}_{0,m} := \hat{f}_{A_0,m,h,K_m}(x^{m})$ and $\hat{f}_{m} := \hat{f}_{A_m,m,h,K_m}(x^{m})$, for $m \in \mathcal{M}$.  To understand our choice of $\hat{f}_{0,m}$ and $\hat{f}_{m}$, recall from the discussion after Proposition~\ref{prop:MSE} that the optimal choice of $\hat{f}_{0,m}$ is
\begin{equation}
\label{eq:oracleK}
\hat{f}^*_{0,m} = \mathbb{E} (\hat{f}_{0}|\mathcal T_{A_0,m}) = \frac{1}{n_0 h^{d}} \sum_{i \in A_0} \mathbb{E}\Bigl\{K\Bigl( \frac{X_i - x}{h}\Bigr) \Big| X_i^m\Bigr\}. 
\end{equation} 
This takes a similar form as the kernel density estimator in \eqref{eq:kdedef}.  At the end of this subsection we will see that $\hat{f}^*_{0,m}$ can often be well 
	approximated using a local constant estimator with a practical choice of kernel $K_m$ that depends on $K$.  Finally, we have also chosen $\hat{f}_{m}$ so that $\mathbb{E}(\hat{f}_{m}) = \mathbb{E}(\hat{f}_{0,m})$.

Now let $\hat{f}_{0,\mathcal{M}} := (\hat{f}_{0,m} : m \in \mathcal{M})^T \in [0,\infty)^{|\mathcal{M}|}$, and $\hat{f}_{\mathcal{M}} := (\hat{f}_{m} : m \in \mathcal{M})^T \in [0,\infty)^{|\mathcal{M}|}$.  
Then, for $\gamma \in \mathbb{R}^{|\mathcal{M}|}$, define the CAM kernel density estimator
\[
\hat{f}_{\gamma}^{\mathcal{M}} = \hat{f}_{\gamma}^{\mathcal{M}}(x)  := \hat{f}_0  -  \gamma^T (\hat{f}_{0,\mathcal{M}} - \hat{f}_{\mathcal{M}}).
\]

Our theoretical results in this section will make use of conditions \textbf{A1} and \textbf{A2} given in Section~\ref{sec:RCDtech}.  The Lipshitz assumption on $f_X$ and $f_{X^m}$ in $\textbf{A1}$ allow us to approximate the accuracy of the kernel density estimates of $f_X$ and $f_{X^m}$, respectively.  Whereas the assumption on the Kernel functions in $\textbf{A2}$ is satisfied by many commonly used kernels. 

For an estimate $\hat{f}$ of $f_X(x)$, let $\mathrm{MSE}(\hat{f}) = \mathrm{MSE}(\hat{f})(x) := \mathbb{E}[\{\hat{f}  - f_X(x)\}^2]$.  Under our assumptions, it is well-known that the complete-case estimator satisfies $\mathrm{MSE}(\hat{f}_{0}) = O(1/(n_0 h^d) + h^2)$ as $n_0 \rightarrow \infty$; see, for example, \citet[Propositions 1.1 and 1.2]{Tsybakov:2004} for the $d=1$ case. 

Let $\nu = \nu(K) := \int_{\mathbb{R}^d} K^2(z) \, dz < \infty$, and, for each $m \in \mathcal{M}$, let $\nu_m = \nu_{m}(K_m) := \int_{\mathbb{R}^{d_m}} K_m^2(z) \, dz < \infty$. Furthermore, for $m_1 \neq m_2  \in \mathcal{M}$, let $m^{1,2} = \mathrm{pmax}\{m_1,m_2\} \in \{0,1\}^d$ and $m_{1,2} = \mathrm{pmin}\{m_1,m_2\} \in \{0,1\}^d$ denote the entrywise maximums and minimums, respectively, of $m_1$ and $m_2$. For $m_1 \neq m_2$,  let $\nu_{m_1,m_2} = \nu_{m_1,m_2}(K_{m_1},K_{m_2}) := \int_{\mathbb{R}^{d_{m_{1,2}}}} K_{m_1}(z^{m_1}) K_{m_2}(z^{m_2})  \, dz^{m_{1,2}}$. 

Our next result shows that the asymptotic difference between the mean squared error of our proposal and the complete-case estimator can be written in terms of $\gamma$, the $|\mathcal{M}|$-dimensional vector $\Omega_{\mathrm{D}} := (\frac{\nu_{0,m}f_X(x)}{n_0 h^{d_m}} : m \in \mathcal{M})^T$ and $\Lambda_{\mathrm{D}}$, the symmetric $|\mathcal{M}|\times|\mathcal{M}|$ matrix with entries
\begin{equation} 
\label{eq:lambdaD}
\Lambda_{\mathrm{D}, m,m} :=  \frac{\nu_{m}f_{X^m}(x^m) }{h^{d_m}}\Bigl(\frac{1}{n_0} + \frac{1}{n_m}\Bigr); \quad \Lambda_{\mathrm{D},m_1, m_2} :=  \frac{\nu_{m_{1},m_{2}}f_{X^{m_{1,2}}}(x^{m_{1,2}}) } {n_0 h^{d_{m^{1,2}}}}, 
\end{equation}
for $m, m_1 \neq m_2 \in \mathcal{M}$.

\begin{theorem}
	\label{thm:DensityPi} 
	Assume \textbf{A1} and \textbf{A2}. For $0 < \alpha < \beta < 1/d$, we have
   \[
	\mathrm{MSE}(\hat{f}^\mathcal{M}_{\gamma}) - \mathrm{MSE}(\hat{f}_{0}) =  \bigl(\gamma^T \Lambda_{\mathrm{D}} \gamma - 2\gamma^T\Omega_{\mathrm{D}}\bigr)\{1 + o(1)\}
	\]
	as $n \rightarrow \infty$, uniformly for $h \in [n^{-\beta}, n^{-\alpha}]$. 
	\end{theorem}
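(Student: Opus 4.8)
The plan is to invoke Proposition~\ref{prop:MSE} and then pin down the leading-order behaviour of the covariance quantities $\Omega$ and $\Lambda$ by the standard kernel-smoothing expansions, checking that every neglected term is uniformly of smaller order over the bandwidth range.

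First I would use the MCAR assumption to eliminate the bias term. Since $M$ is independent of $(X,Y)$, for every $i \in A_0$ and every $i \in A_m$ the vector $X_i^m$ has the marginal law of $X^m$, so $\mathbb{E}(\hat{f}_{0,m}) = h^{-d_m}\mathbb{E}\{K_m((X^m - x^m)/h)\} = \mathbb{E}(\hat{f}_m)$ \emph{exactly}, giving $B_{\mathcal M} = 0$. Proposition~\ref{prop:MSE} then yields, with no approximation at this stage, $\mathrm{MSE}(\hat{f}_\gamma^{\mathcal M}) - \mathrm{MSE}(\hat{f}_0) = \gamma^T\Lambda\gamma - 2\gamma^T\Omega$, where $\Omega = \mathrm{Cov}(\hat{f}_0, \hat{f}_{0,\mathcal M})$ and $\Lambda = \mathrm{Var}(\hat{f}_{0,\mathcal M} - \hat{f}_{\mathcal M})$. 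It thus remains to show that, uniformly for $h \in [n^{-\beta}, n^{-\alpha}]$, the entries of $\Omega$ and $\Lambda$ match those of $\Omega_{\mathrm D}$ and $\Lambda_{\mathrm D}$ up to relative error $o(1)$.

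For $\Omega_m = \mathrm{Cov}(\hat{f}_0, \hat{f}_{0,m})$, both estimators use $A_0$, so expanding the double sum and using independence across observations leaves only the $n_0$ diagonal terms, giving $\Omega_m = (n_0 h^{d+d_m})^{-1}\mathrm{Cov}\{K((X-x)/h), K_m((X^m - x^m)/h)\}$. The substitution $z = (u-x)/h$ turns the leading integral into $h^d f_X(x)\,\nu_{0,m}\{1+o(1)\}$, while the product-of-means contribution is $O(h^{d+d_m})$ and hence of smaller order; this gives $\Omega_m = \Omega_{\mathrm D,m}\{1+o(1)\}$. The matrix $\Lambda$ is handled by the same mechanism: by the disjointness of $A_0$ and $A_m$ the diagonal splits as $\Lambda_{m,m} = \mathrm{Var}(\hat{f}_{0,m}) + \mathrm{Var}(\hat{f}_m)$, each a textbook kernel-density variance of order $(nh^{d_m})^{-1}$; for the off-diagonal, only the shared-$A_0$ piece $\mathrm{Cov}(\hat{f}_{0,m_1}, \hat{f}_{0,m_2})$ survives, and integrating over the union coordinates $m_{1,2} = \mathrm{pmin}\{m_1,m_2\}$ produces the factor $h^{d_{m_{1,2}}}$, so that the denominator reduces via the identity $d_{m_1} + d_{m_2} - d_{m_{1,2}} = d_{m^{1,2}}$ to $h^{d_{m^{1,2}}}$, reproducing $\Lambda_{\mathrm D, m_1,m_2}$. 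Throughout, the Lipschitz condition in \textbf{A1} controls each remainder by replacing $f_{X^m}(x^m + hz)$ with $f_{X^m}(x^m)$ at the cost of an $O(h)$ relative error, and \textbf{A2} guarantees that $\nu$, $\nu_m$, $\nu_{m_1,m_2}$ and the relevant first moments of the kernels are finite.

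The final step is to assemble the entrywise estimates into the stated factored form and to secure uniformity in $h$. Because $\gamma$ is a fixed vector of fixed dimension $|\mathcal M|$, the entrywise relations $\Omega = \Omega_{\mathrm D}\{1+o(1)\}$ and $\Lambda = \Lambda_{\mathrm D}\{1+o(1)\}$ propagate to $\gamma^T\Lambda\gamma - 2\gamma^T\Omega = (\gamma^T\Lambda_{\mathrm D}\gamma - 2\gamma^T\Omega_{\mathrm D})\{1+o(1)\}$. Here the hypothesis $\beta < 1/d$ enters: it forces $n h^d \geq n^{1-\beta d} \to \infty$, so that the $(\mathbb{E}K)^2$-type terms and the Lipschitz remainders are genuinely negligible relative to the $O((nh^{d_m})^{-1})$ leading terms for all $h$ in the stated interval, while $h \leq n^{-\alpha} \to 0$ makes every relative error $o(1)$ simultaneously. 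I expect the main obstacle to be precisely this uniform bookkeeping: one must verify that the lower-order contributions are dominated uniformly over $[n^{-\beta}, n^{-\alpha}]$ and keep careful track of the union/intersection dimension counts in the off-diagonal entries, since these determine the exact powers of $h$ appearing in $\Lambda_{\mathrm D}$; the individual covariance computations themselves are routine.
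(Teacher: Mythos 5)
Your proposal is correct and takes essentially the same approach as the paper's proof: reduce the MSE difference to $\gamma^T\Lambda\gamma - 2\gamma^T\Omega$ using the exact mean-zero property of $\hat f_{0,\mathcal M}-\hat f_{\mathcal M}$ under MCAR (the paper re-derives the Proposition~\ref{prop:MSE} identity inline rather than citing it, which is the only cosmetic difference), and then identify the entries of $\Omega$ and $\Lambda$ with those of $\Omega_{\mathrm D}$ and $\Lambda_{\mathrm D}$ up to $1+o(1)$ relative error via the substitution $u=(z-x)/h$ together with the Lipschitz control from \textbf{A1} and the kernel moment bounds from \textbf{A2}. Your off-diagonal treatment, including the independence/disjointness argument and the dimension identity $d_{m_1}+d_{m_2}-d_{m_{1,2}}=d_{m^{1,2}}$, matches the paper's computation of $\mathrm{Cov}(\hat f_{0,m_1},\hat f_{0,m_2})$ exactly.
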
 
The optimal $\gamma$, which maximises the leading order asymptotic improvement in mean squared error over the complete-case estimator, is $\gamma_{\mathrm{D}}^* :=  \Lambda_{\mathrm{D}}^{-1}\Omega_{\mathrm{D}}$. This leads to an improvement of $ \Omega_{\mathrm{D}}^T \Lambda_{\mathrm{D}}^{-1} \Omega_{\mathrm{D}}$.  Suppose $\mathcal{M} = \{m\}$, then the optimal asymptotic improvement simplifies to 
\[
\frac{n_m \nu_{0,m}^2 f_X(x)^2}{n_0 h^{d_m}(n_0 + n_m)\nu_{m} f_{X^m}(x^m)} = \frac{n_m \nu_{0,m}^2 f_X(x) f_{X | X^m} (x|x^m)}{n_0h^{d_m}(n_0 + n_m)\nu_{m}} = O\Bigl(\frac{1}{n_0h^{d_m}}\Bigr),
\]
where $f_{X | X^m} (x|x^m)$ denotes the conditional density of $X$ at $x$ given $X^m = x^m$. We see that a larger improvement is possible when $n_m$ is large compared to $n_0$, or if $f_{X | X^m} (x|x^m)$ is large. Note, however, that we only obtain a second order improvement over the complete-case approach. To understand this further, in contrast to the $U$-Statistics setting, in the density estimation problem $\hat{f}_0$ and $\hat f_{0,m}-\hat{f}_m$ have different convergence rates, with the later converging at a faster rate because of the smaller dimension $d_m$. Therefore the covariance between $\hat{f}_0$ and $\hat f_{0,m}-\hat{f}_m$ is negligible compared to the asymptotic variance of $\hat{f}_0$.   Nevertheless, we will see in our numerical study in Section~\ref{sec:sims}, that the improvement CAM offers over the complete-case method is appreciable in finite sample problems.   Of course, $ \Omega_{\mathrm{D}}$ and $\Lambda_{\mathrm{D}}$ are unknown.  Nonetheless, we have an immediate corollary that for any $\gamma$ such that $\gamma^T \Lambda_{\mathrm{D}} \gamma < 2\gamma^T\Omega_{\mathrm{D}}$ the corresponding CAM estimator will lead to an improvement over the complete-case approach. 

It remains to propose practical choices of tuning parameters. First, we suppose that the complete-case kernel $K$ and bandwidth $h$ are given to us; if needed these can be chosen using cross-validation on the complete cases.   Now, to choose $\gamma$  we attempt to approximate the optimal choice $\gamma = \Lambda_{\mathrm{D}}^{-1}\Omega_{\mathrm{D}}$.  More precisely, let $\hat{\gamma}_{\mathrm{D},m} = \frac{\nu_{0,m} n_m \hat{f}_0}{\nu_{m}(n_0\hat{f}_{0,m} + n_m \hat{f}_m)}$, where we have used $\hat{f}_0$ and $\frac{n_0\hat{f}_{0,m} + n_m \hat{f}_m}{n_0 + n_m}$ as estimates of $f_X(x)$ and $f_{X^m}(x^m)$, respectively.  We also approximate the off-diagonal terms in $\Lambda_{\mathrm{D}}$ by $0$, since they are of smaller order than the terms on the diagonal, i.e.~$\Lambda_\mathrm{D} = \mathrm{diag}(\Lambda_\mathrm{D})\{1+o(1)\}$ -- see \eqref{eq:lambdaD}.   

We choose the kernel $K_m$ in an attempt to mimic the optimal choice in \eqref{eq:oracleK}.  Lemma~\ref{lem:Kernel} in Section \ref{sec:RCDtech} shows that for a large family of kernels, under appropriate smoothness conditions on $f_{X|X^m}$, we can approximate $\hat{f}_{0,m}^*$ up to first order using a kernel density estimator with practical choice of kernel $K_m$ that depends only on $K$.  For instance, if $K$ is the Gaussian kernel $K(t) = \frac{1}{ (2\pi)^{d/2}} \exp(-\|t\|^2/2)$, for $t \in \mathbb{R}^d$, then $\hat{f}_{0,m}^*$ is well-approximated by using the $d_m$-dimensional Gaussian kernel $K_m(z) = \frac{1}{(2\pi)^{d_m/2}} \exp(-\|z\|^2/2)$,  for $z \in \mathbb{R}^{d_m}$, in \eqref{eq:kdedef}. 

\subsection{Local constant regression}
\label{sec:reg}
We now consider the standard homoscedastic nonparametric regression problem, where the pair $(X,Y)$ takes values in $\mathbb{R}^d \times \mathbb{R}$ and satisfies the relationship
\[
Y = \eta(X) + \sigma \epsilon.
\]
Here $\sigma > 0$ and $\eta: \mathbb{R}^d \rightarrow \mathbb{R}$ is the regression function, i.e. $\eta(x) := \mathbb{E}(Y | X = x)$. The random variable $\epsilon$ has mean zero and variance one, and is independent of $X$.  We are interested in estimating $\theta(P) = \eta(x)$, the regression function at a fixed $x\in \mathbb{R}^d$.  

Consider also the regression model when $X$ is missing the features $m \in \{0,1\}^d$. It is convenient to define $\eta_m(x^m): = \mathbb{E}( Y | X^m = x^m)$, and $\tau_m(x^m) := \mathrm{Var}\{\eta(X) | X^m = x^m\}$, where $\tau_0(\cdot) = \var\{\eta(X)|X=x\}= 0$.  Finally, for $m_1, m_2 \in \mathcal{M},$ let $\tau_{m_1,m_2}(x^{m_{1,2}}) := \mathbb{E}[\{\eta(X) - \eta_{m_1}(x^{m_1})\}\{\eta(X) - \eta_{m_2}(x^{m_2})\}|X^{m_{1,2}} = x^{m_{1,2}}]$. 

Recall the bandwidth $h>0$ and kernel function $K_m$, for $m \in \mathcal{M}$, used in the previous section.  For $x \in \mathbb{R}^d$ and $A \subseteq \{1,\ldots,n\}$, the \emph{local constant estimator} of $\eta_m(x^m)$ is  
\begin{equation}
\label{eq:loccon}
\hat{\eta}_{A,m}(x^m) = \hat{\eta}_{A,m,h,K}(x^{m})  := \argmin_{\alpha \in \mathbb{R}} \Biggl\{ \sum_{i \in A} K_m\Bigl(\frac{X^{m}_i - x^{m}}{h}\Bigr) (Y_i - \alpha)^2 \Biggr\}.
\end{equation}
In particular, the complete-case estimator of $\eta(x)$ is $\hat{\eta}_0 := \hat{\eta}_{A_0,0, h, K}(x)$, where $K = K_0$ is a $d$-dimensional kernel.  Further, let 
$\hat{\eta}_{0,m} := \hat{\eta}_{A_0,m, h, K_m}(x)$ and $\hat{\eta}_m := \hat{\eta}_{A_m,m, h, K_m}(x)$; in contrast to the density estimation setting, it is less clear why the form of $\hat{\eta}_{0,m}$ is effective here -- we postpone discussion of this until the end of this subsection.

Let $\hat{\eta}_{0, \mathcal{M}} = (\hat{\eta}_{0,m} : m \in \mathcal{M})^T$ and  $\hat{\eta}_{\mathcal{M}} = (\hat{\eta}_{m} : m \in \mathcal{M})^T$.  Then, for $\gamma \in \mathbb{R}^{|\mathcal{M}|}$, we define the CAM local constant regression estimator
\[
\hat{\eta}^{\mathcal{M}}_{\gamma} = \hat{\eta}^{\mathcal{M}}_{\gamma}(x) :=  \hat{\eta}_0 - \gamma^T (\hat{\eta}_{0, \mathcal{M}} - \hat{\eta}_{\mathcal{M}}).
\]

Our main theoretical result in this section will make use of two further assumptions on the regression function; see \textbf{A3} and \textbf{A4} given in Section \ref{sec:RCDtech}. In particular, we ask that the functions $\eta$, $\eta_m$,  $\tau_m$ and $\tau_{m_1,m_2}$ are Lipschitz.    Now, for $m \in \mathcal{M}$, let $\mu_{0,m} = \mu_{0,m}(K_m) := \int_{\mathbb{R}^{d_m}} K_m(z) \, dz < \infty$.  Further, let 
\[
\Omega_{\mathrm{R}} := \Bigl(\frac{\sigma^2 \nu_{0,m}}{\mu_{0,m}f_{X^m}(x^m) n_0 h^{d_m}} : m \in \mathcal{M}\Bigr)^T.
\]  
Let $\Lambda_{\mathrm{R}}$ be the $|\mathcal{M}| \times |\mathcal{M}|$ matrix with diagonal entries 
\[
\Lambda_{\mathrm{R},m,m} :=  \frac{\nu_{m}\{\sigma^2 + \tau_m(x^m)\}}{\mu^2_{0,m} f_{X^m}(x^m)h^{d_m}} \Bigl(\frac{1} {n_0} + \frac{1}{n_m}\Bigr),
\] 
and off diagonal entries 
\[
\Lambda_{\mathrm{R},m_1,m_2} := \frac{ \nu_{m_{1},m_{2}} f_{X^{m_{1,2}}}(x^{m_{1,2}}) \bigl\{\sigma^2 + \tau_{m_1,m_2}(x^{m_{1,2}}) \bigr\}}{\mu_{0,m_{1}}  f_{X^{m_{1}}}(x^{m_{1}})\mu_{0,m_2} f_{X^{m_{2}}}(x^{m_{2}}) n_0 h^{d_{m^{1,2}}}}.
\]
Note that the off-diagonal terms of $\Lambda_{\mathrm{R}}$ are of smaller order than the terms on the diagonal, i.e.~$\Lambda_\mathrm{R} = \mathrm{diag}(\Lambda_\mathrm{R})\{1+o(1)\}$.  Finally, for a regression estimator $\hat{\eta}$ of $\eta(x)$, we write 
\[
\mathrm{MSE}(\hat{\eta}) = \mathrm{MSE}(\hat{\eta})(x) := \mathbb{E}[\{\hat{\eta} - \eta(x)\}^2 | Z_1^{m_1}, \ldots, Z_n^{m_n}].
\]
\begin{theorem}
	\label{thm:RegressionPi}
Assume \textbf{A1}, \textbf{A2}, \textbf{A3} and \textbf{A4}.  Then, for each $0 < \alpha < \beta < 1/d$, we have 
\[
	\mathrm{MSE}(\hat{\eta}^{\mathcal{M}}_{\gamma}) - 	\mathrm{MSE}(\hat{\eta}_0) = \bigl(\gamma^T \Lambda_{\mathrm{R}}\gamma - 2\gamma^T\Omega_{\mathrm{R}}\bigr)\{1 + o_p(1)\}
\]
as $n \rightarrow \infty$, uniformly for  $h \in [n^{-\beta}, n^{-\alpha}].$
\end{theorem}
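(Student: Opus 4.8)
The plan is to follow the template of the proof of Theorem~\ref{thm:DensityPi}, the only genuinely new ingredient being the linearisation of the Nadaraya--Watson ratio. Throughout, expectations are taken conditional on the missingness pattern, and the $\{1+o_p(1)\}$ factor will absorb both the deterministic kernel-approximation errors and the stochastic fluctuation of the random denominators appearing in the local constant estimators. First I would reduce the statement to two moment computations. Writing $D := \hat\eta_{0,\mathcal M} - \hat\eta_{\mathcal M}$, expanding the square in $\hat\eta^{\mathcal M}_\gamma - \eta(x) = (\hat\eta_0 - \eta(x)) - \gamma^T D$ and noting that the $(\hat\eta_0-\eta(x))^2$ contribution cancels in the difference gives, exactly as in Proposition~\ref{prop:MSE},
\[
\mathrm{MSE}(\hat\eta^{\mathcal M}_\gamma) - \mathrm{MSE}(\hat\eta_0) = \gamma^T\bbE(DD^T)\gamma - 2\gamma^T\bbE\{(\hat\eta_0 - \eta(x))D\}.
\]
It therefore suffices to show $\bbE(DD^T) = \Lambda_{\mathrm R}\{1+o_p(1)\}$ and $\bbE\{(\hat\eta_0-\eta(x))D\} = \Omega_{\mathrm R}\{1+o_p(1)\}$ uniformly for $h\in[n^{-\beta},n^{-\alpha}]$, and to check that the bias contributions are of smaller order.

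Next I would linearise each local constant estimator. For $A\in\{A_0,A_m\}$ write $\hat\eta_{A,m}(x^m) - \eta_m(x^m) = N_{A,m}/\hat f_{A,m}$, where $\hat f_{A,m} := (|A|h^{d_m})^{-1}\sum_{i\in A}K_m\{(X_i^m - x^m)/h\}$ is the kernel density estimate from Section~\ref{sec:density} and $N_{A,m} := (|A|h^{d_m})^{-1}\sum_{i\in A}K_m\{(X_i^m-x^m)/h\}(Y_i - \eta_m(x^m))$. Under \textbf{A1} and \textbf{A2}, $\hat f_{A,m} \to \mu_{0,m}f_{X^m}(x^m)$ in probability, so replacing the denominator by its limit reduces both $D$ and $\hat\eta_0-\eta(x)$ to linear (average) statistics of exactly the type handled in the proof of Theorem~\ref{thm:DensityPi}, up to a multiplicative $\{1+o_p(1)\}$ error. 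This step is the one absent from the density case, where the estimators are already linear.

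I would then compute the leading covariances of these linearised averages, using MCAR, the independence of the disjoint index sets $A_0$ and $A_m$, and the model $Y = \eta(X)+\sigma\epsilon$ with $\epsilon$ independent of $X$. For the diagonal of $\Lambda_{\mathrm R}$, the estimators $\hat\eta_{0,m}$ and $\hat\eta_m$ are local constant regressions of $Y$ on $X^m$ built from two independent samples drawn (under MCAR) from the common law of $(X^m, Y)$; the relevant per-observation residual variance is $\var(Y\mid X^m = x^m) = \sigma^2 + \tau_m(x^m)$, which together with $\nu_m$ and the denominator limit $\mu_{0,m}f_{X^m}(x^m)$ produces $\Lambda_{\mathrm R,m,m}$. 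The off-diagonal entries receive a non-negligible contribution only from the shared sample $A_0$ (the $A_{m_1},A_{m_2}$ terms vanishing by independence) and evaluate, via $\nu_{m_1,m_2}$ and $\tau_{m_1,m_2}$, to $\Lambda_{\mathrm R,m_1,m_2}$, which is of the smaller order $h^{-d_{m^{1,2}}}$. For $\Omega_{\mathrm R}$, since $\hat\eta_m$ is independent of $\hat\eta_0$, only $\mathrm{Cov}(\hat\eta_0,\hat\eta_{0,m})$ survives; here the full kernel $K$ localises $X$ near $x$, so the product of residuals $(Y-\eta(x))(Y-\eta_m(x^m))$ collapses in expectation to $\sigma^2$, and the shared-observation cross term yields $\nu_{0,m}=\int K(z)K_m(z^m)\,dz$ and hence $\Omega_{\mathrm R,m}$. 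Finally I would verify negligibility of the bias terms: under MCAR $\hat\eta_{0,m}$ and $\hat\eta_m$ share the same leading bias so $\bbE(D)$ enters only at higher order, while the Lipschitz hypotheses in \textbf{A1} and \textbf{A3} bound each kernel bias by $O(h)$, which is dominated by the stated variance orders.

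The main obstacle is uniformity in $h$: I must show that the denominator replacement $\hat f_{A,m}\mapsto \mu_{0,m}f_{X^m}(x^m)$ and the associated ratio-linearisation remainder are $o_p(1)$ simultaneously for all $h\in[n^{-\beta},n^{-\alpha}]$, rather than merely pointwise. I would establish a uniform concentration bound for the kernel averages $\hat f_{A,m}$, bounding them away from $0$ and from their means across the whole bandwidth range, using the kernel regularity in \textbf{A2} together with the moment conditions in \textbf{A4}, combined with a discretisation of $[n^{-\beta},n^{-\alpha}]$ that exploits the smooth, essentially monotone dependence of the estimators on $h$; the fourth-moment-type control needed to dominate the numerators $N_{A,m}$ over the resulting grid is exactly what \textbf{A4} supplies. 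Once this uniform control is in place, substituting the computations of the previous step yields $\mathrm{MSE}(\hat\eta^{\mathcal M}_\gamma) - \mathrm{MSE}(\hat\eta_0) = (\gamma^T\Lambda_{\mathrm R}\gamma - 2\gamma^T\Omega_{\mathrm R})\{1+o_p(1)\}$ uniformly over the stated range, completing the proof.
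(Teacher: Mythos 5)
Your high-level skeleton coincides with the paper's: both start from the exact Proposition~\ref{prop:MSE}-style expansion of the MSE difference, then identify the quadratic term with $\Lambda_{\mathrm{R}}$ and the cross term with $\Omega_{\mathrm{R}}$, and the leading constants in your covariance calculations are correct. However, there are genuine gaps in how you propose to establish those identifications.

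First, you are computing the wrong expectations. The MSE in Theorem~\ref{thm:RegressionPi} is conditional on the observed covariates $X_1^{m_1},\ldots,X_n^{m_n}$ — that is why the statement reads $o_p(1)$ rather than $o(1)$: the conditional MSE is a random function of the design. You condition only on the missingness pattern, which would make the MSE deterministic and the claimed $\{1+o_p(1)\}$ factor incoherent. This is not mere bookkeeping. Conditionally on the design, the local constant estimators are exactly linear in $Y_{[n]}$, so the conditional MSE is an explicit quadratic form in the weight vectors $H_0$, $H_{0,m}$, $H_m$, and all denominator asymptotics are Slutsky-type $O_p$ statements in design probability; this is the paper's route, via \eqref{eq:HH1}--\eqref{eq:HH4}. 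Unconditionally, by contrast, your step ``replace $\hat f_{A,m}$ by $\mu_{0,m}f_{X^m}(x^m)$'' occurs \emph{inside} an expectation, and convergence in probability of the denominator does not justify it: one must truncate on good-design events and control the contribution of designs where the denominator is small (with a compactly supported kernel it can vanish). Nothing in your sketch addresses this. Relatedly, in the conditional accounting the $\tau_m(x^m)$ part of $\Lambda_{\mathrm{R},m,m}$ does not come from residual variance at all, since $\var(Y_i \mid X_i)=\sigma^2$ for complete cases; it comes from the design fluctuation of the conditional bias term $\{\gamma^T(H_{0,\mathcal{M}}^T-H_{\mathcal{M}}^T)E\}^2$ (the paper's $R_{21}$ analysis). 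Your plan treats the conditional bias purely as a nuisance to be shown negligible, which is incompatible with the statement you actually need to prove.

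Second, two of your supporting claims are wrong. \textbf{A4} is a Lipschitz condition on $\tau_m$ and $\tau_{m_1,m_2}$; it contains no moment conditions, so it cannot supply the ``fourth-moment-type control'' your uniform-in-$h$ discretisation argument invokes — no such moments are assumed anywhere, and the paper needs none because only $\sigma^2$ and the $\tau$'s (second moments of the noise) enter the conditional MSE. And the assertion that each $O(h)$ kernel bias ``is dominated by the stated variance orders'' is false on the range $h\in[n^{-\beta},n^{-\alpha}]$: since $d_m\leq d-1$ and $\alpha<1/d$, taking $h=n^{-\alpha}$ gives $h\gg 1/(n_0h^{d_m})$, so an uncancelled $O(h)$ bias in $D$ would in fact dominate $\Lambda_{\mathrm{R}}$ and $\Omega_{\mathrm{R}}$ and the theorem would fail. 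Negligibility of $\bbE(D)$ must come from the exact cancellation of the leading smoothing bias of $\hat\eta_{0,m}$ and $\hat\eta_m$ under MCAR (which you gesture at), leaving only sample-size-dependent remainders of order $1/(n_0h^{d_m})+1/(n_mh^{d_m})$ whose squares and cross-products are negligible; this needs to be proved, not justified by the false domination claim.
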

Theorem~\ref{thm:RegressionPi} gives the leading order asymptotic difference in mean squared error between the CAM estimator and the complete-case estimator. We see that the optimal leading order improvement in this case is $\Omega_{\mathrm{R}}^T\Lambda_{\mathrm{R}}^{-1}\Omega_{\mathrm{R}}$, which can achieved by taking $\gamma = \gamma^*_{\mathrm{R}} := \Lambda_{\mathrm{R}}^{-1}\Omega_{\mathrm{R}}$. Furthermore, for any $\gamma$ such that $\gamma^T \Lambda_{\mathrm{R}} \gamma < 2\gamma^T\Omega_{\mathrm{R}}$, the CAM estimator $\hat\eta_{\gamma}^{\mathcal{M}}$ leads to an improvement over the complete-case estimator.

Again, in practice, we attempt to mimic the performance of the optimal estimator. First, ignoring the off-diagonal terms in $\Lambda_{\mathrm{R}}$, we have 
\[
\gamma_{\mathrm{R}}^* \approx \Bigl(\frac{\sigma^2 \nu_{0,m} \mu_{0,m} n_m}{\{\sigma^2 + \tau_m(x^m)\} \nu_{m} (n_0 + n_m)} : m \in \mathcal{M} \Bigr)^T.
\]
For the unknown terms, write $\sigma^2 = \mathbb{E}[\{Y - \eta(x)\}^2 | X = x]$ and $
\sigma_m^2 :=  \sigma^2 + \tau_m(x^m) = \mathbb{E}[\{Y - \eta_m(x^m)\}^2 | X^m = x^m]$. These can be estimated in a natural way using the local constant method. In practice, the estimates $\hat{\sigma}^2$ and $\hat{\sigma}_m^2$ say can be calculated directly by reusing the weights from the regression estimators; cf.~\eqref{eq:weights}.
Finally, this leads to a practical choice of 
\[
\hat{\gamma}_{\mathrm{R}} = \Bigl(\frac{\hat{\sigma}^2 \nu_{0,m} \mu_{0,m} n_m}{ \hat{\sigma}_m^2 \nu_{m}(n_0 + n_m)}: m \in \mathcal{M}\Bigr)^T.
\]

As noted above, the choice of kernel $K_m$ is less straightforward than in the density estimation setting.  Here we can write $\hat{\eta}_0 = \sum_{i \in A_0} Y_i K\bigl(\frac{X_i - x}{h}\bigr)/ \{\sum_{i \in A_0} K\bigl(\frac{X_i - x}{h}\bigr)\}$, and the optimal choice of $\hat{\eta}_{0,m}$ is
\[
\hat{\eta}_{0,m}^* = \mathbb{E}(\hat{\eta}_0 | \mathcal{T}_{A_0, m}) = \sum_{i \in A_0} Y_i \mathbb{E} \Bigl\{\frac{K\bigl(\frac{X_i - x}{h}\bigr)}{\sum_{j \in A_0} K\bigl(\frac{X_j - x}{h}\bigr)} \Big| \mathcal{T}_{A_0, m} \Bigr\}.
\]
Lemma~\ref{lem:KernelReg} in Section~\ref{sec:Kernelchoice} shows that, under certain conditions, this optimal choice is well-approximated by our practical choice $\hat{\eta}_{0,m}$ with kernel $K_m$ depending only on $K$. In particular, if $K$ is the $d$-dimensional Gaussian kernel, then $\hat{\eta}_{0,m}$ can be constructed as in~\eqref{eq:loccon} using the $d_m$-dimensional Gaussian kernel.  We see in our numerical study, that our CAM approach with this practical choice of $\hat{\eta}_{0,m}$ does often lead to an appreciable improvement over the complete-case estimator.

\subsection{Numerical examples} 
\label{sec:sims}
We now demonstrate the CAM density and regression estimators with some numerical examples.  Consider the following models 
\begin{itemize}
	\item[(a)] Density model 1: $X \sim N_2(0, \Sigma)$, where $\Sigma = 0.3 I_2 + 0.7(1_21_2^T)$, where $1_2 := (1,1)^T$. 	
	\item[(b)] Density model 2: $X \sim U(B_1(0))$, where $B_1(0) \subseteq \mathbb{R}^2$ denotes the unit disk centred at 0.
	\item[(c)] Density model 3: $X \sim \frac{1}{4} U([-2,-1]\times[-1/2,1/2]) + \frac{3}{4} U([1,2] \times [-1/2, 1/2])$. 
	\item[(d)] Regression model 1:  Let $X \sim U([0,1]^3)$, and $Y = X^{(1)} + X^{(2)} + 0.1\epsilon$, where $\epsilon \sim N(0,1)$ is independent of $X$.	
	\item[(e)] Regression model 2: Let $X \sim U([0,1]^3)$, and $Y = (X^{(1)} - X^{(2)})^2 + 0.1\epsilon$, where $\epsilon \sim N(0,1)$ is independent of $X$.
  	\item[(f)] Regression model 3: Let $X \sim N_2(0, \Sigma)$, where $\Sigma = 0.3 I_2 + 0.8(1_21_2^T)$ and $Y = \sin(2X^{(1)}) + 0.3\epsilon$, where $\epsilon \sim N(0,1)$ is independent of $X$.
\end{itemize} 
In each case, we generate a training set of size $n \in \{200,500\}$, and then introduce missingness by removing first component of $X$ independently with probability $p_1$. 

The kernel density estimators are computed using the \texttt{ks} package available from CRAN.  In particular, we use the \texttt{kde} function with a Gaussian kernel, and the diagonal bandwidth matrices were chosen using the \texttt{Hpi.diag} function.  In the regression settings, we make use of the \texttt{regpro} package available from \texttt{CRAN}.

To measure the performance, recall that for two density functions $f$ and $g$, say, the \emph{Total Variation} distance between $f$ and $g$ is $\mathrm{TV}(f,g) := \frac{1}{2}\int_{\mathbb{R}^d} |f(x) - g(x)| \, dx$.  We present boxplots of 
\begin{equation}
\label{eq:denperf}
\frac{ \mathrm{TV}(\hat{f}_0, f_X) - \mathrm{TV}(\hat{f}^\mathcal{M}_{\gamma}, f_X)} {\mathrm{TV}(\hat{f}_{0}, f_X)}.
\end{equation}   
In the regression problems, we use the mean integrated squared error: for an estimate $\hat{\eta}$ of $\eta$ that is $\mathrm{MISE}(\hat{\eta}) := \int_{\mathbb{R}^d} \{\hat{\eta}(x) - \eta(x)\}^2 \, dP_X(x)$.  Then we present boxplots of
\begin{equation}
\label{eq:regperf}
\frac{ \mathrm{MISE}(\hat{\eta}_0, \eta) - \mathrm{MISE}(\hat{\eta}^\mathcal{M}_{\gamma}, \eta)} {\mathrm{MISE}(\hat{\eta}_{0}, \eta)}.
\end{equation} 

  \begin{figure}[ht!]
 	\centering
 	\includegraphics[width=0.49\textwidth]{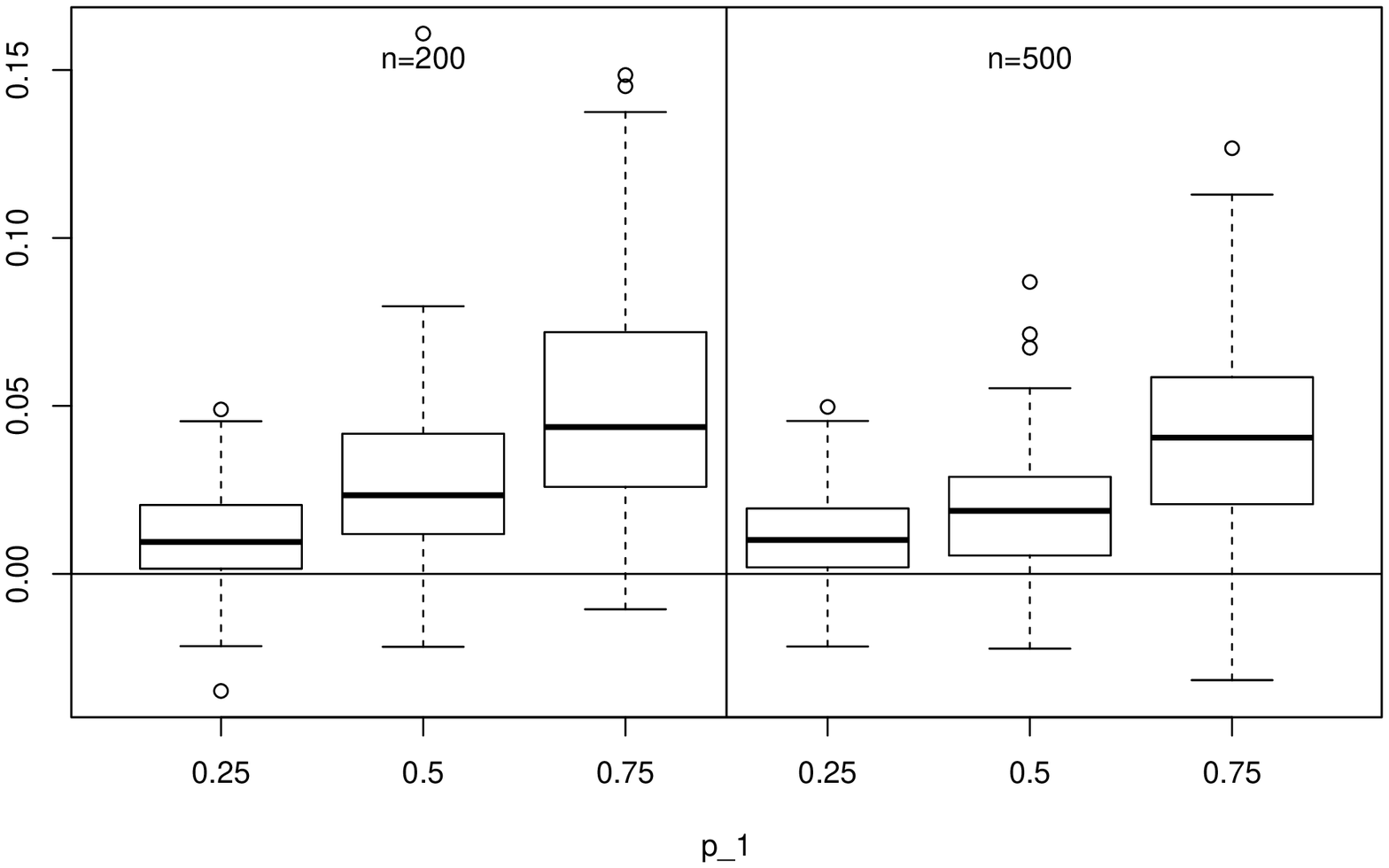}  	\includegraphics[width=0.49\textwidth]{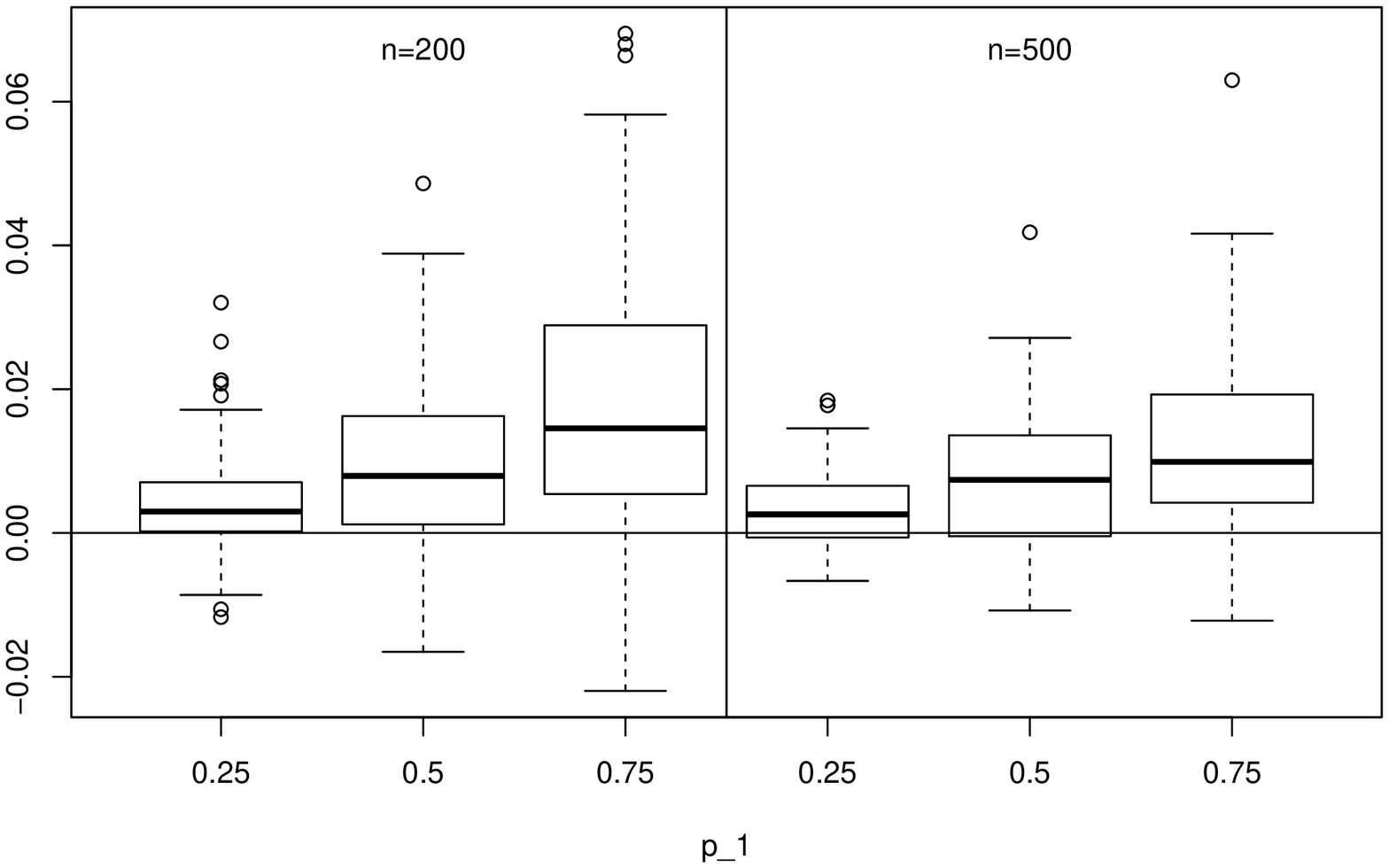}
	 	\includegraphics[width=0.49\textwidth]{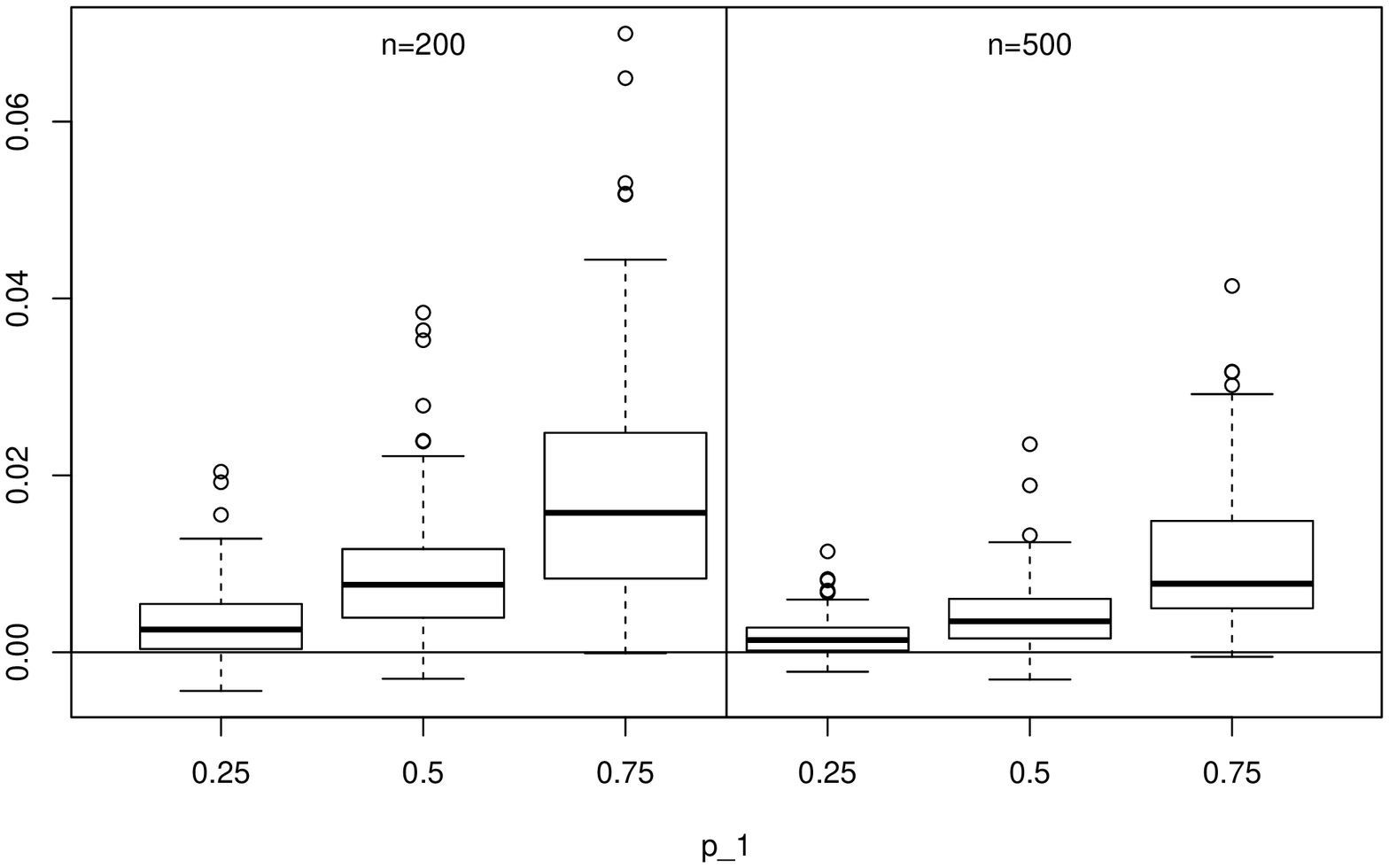} 	
	\caption{Boxplot of the relative performance of the CC and CAM kernel density estimators given by \eqref{eq:denperf} for 100 repetitions of the experiment for density model 1 (top left), density model 2 (top right) and density model 3 (bottom). }
 	\label{fig:Density1}
 \end{figure}
  
\begin{figure}[ht!]
 	\centering
 	\includegraphics[width=0.49\textwidth]{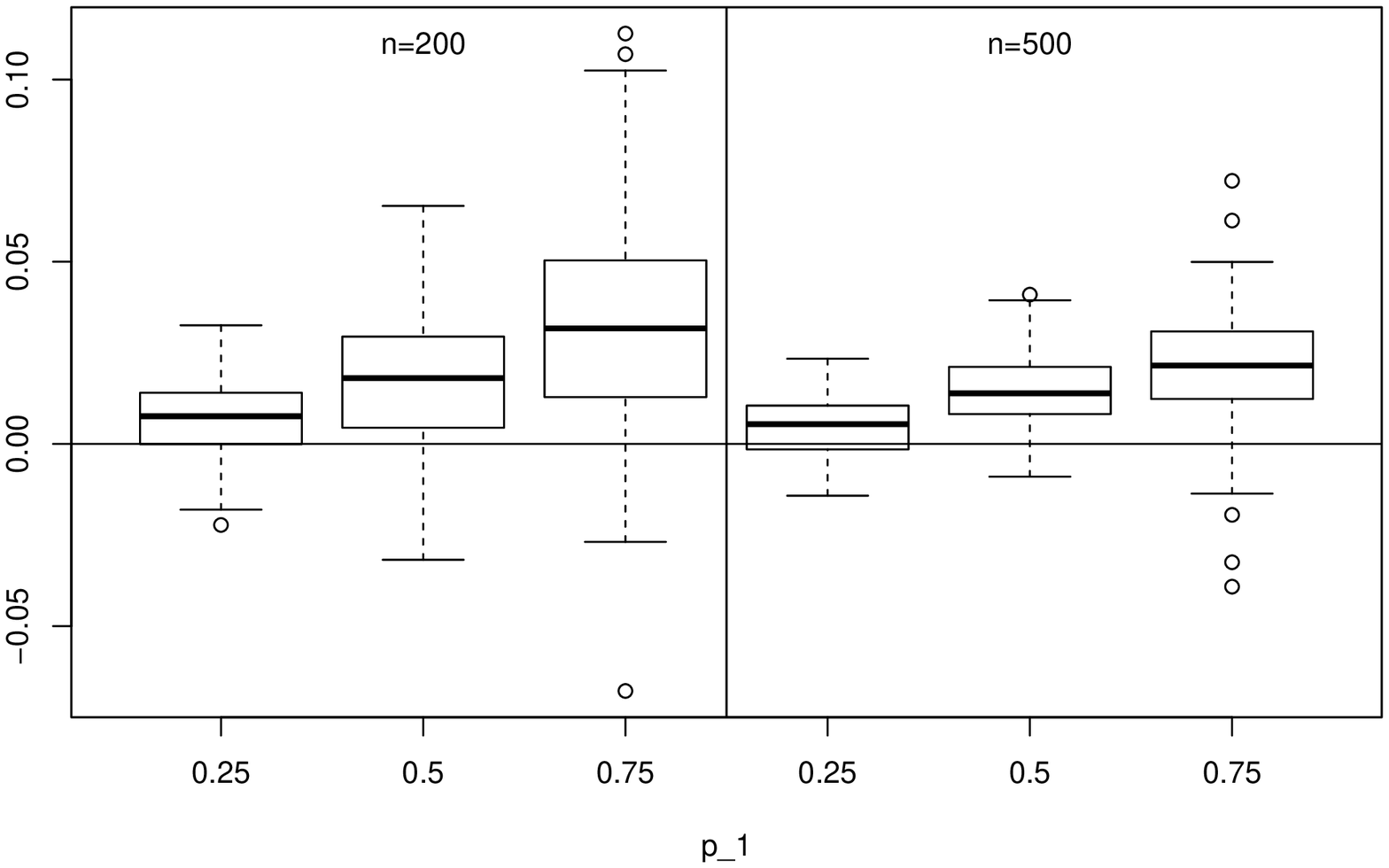} 	 \includegraphics[width=0.49\textwidth]{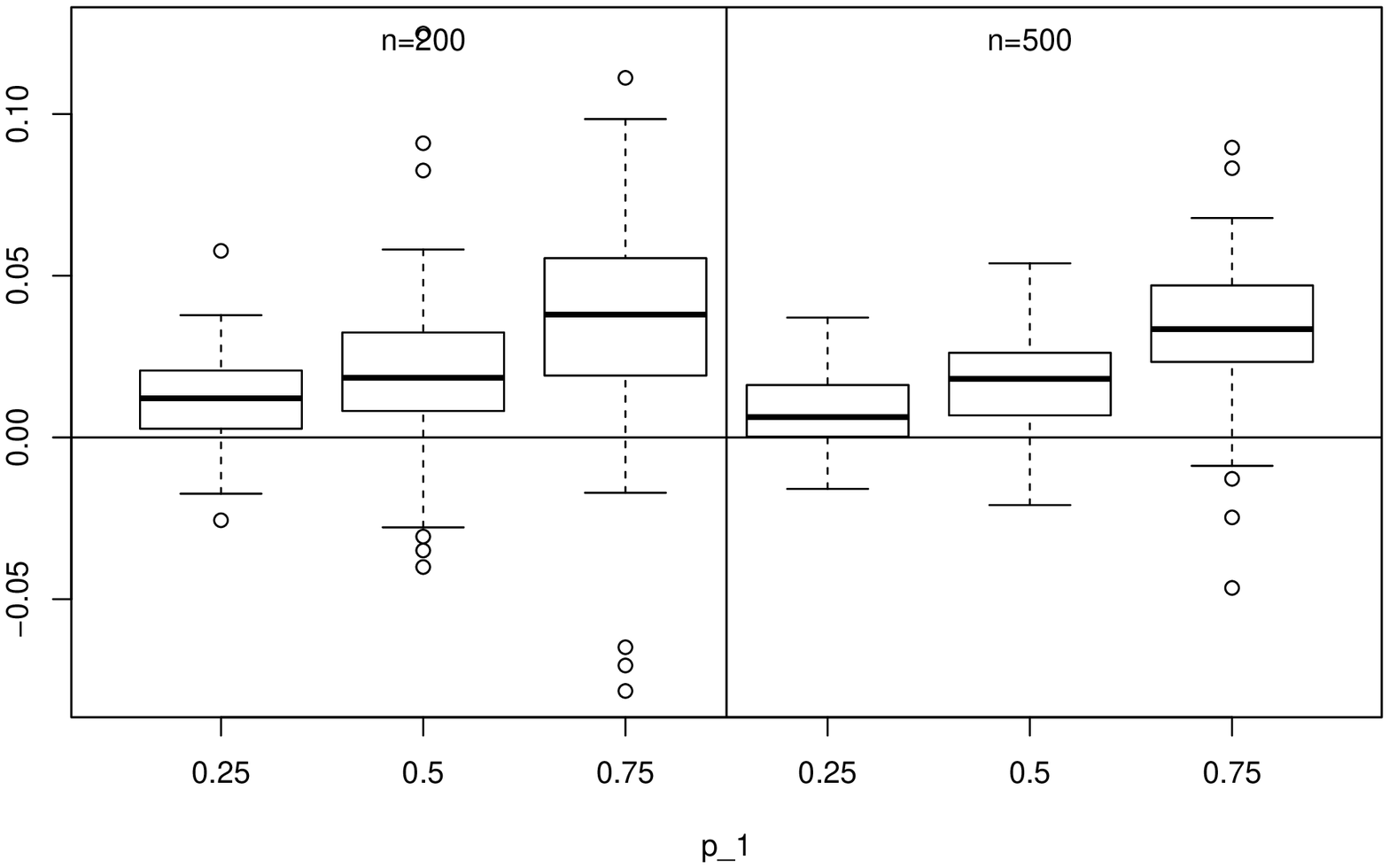}  	
 	\includegraphics[width=0.49\textwidth]{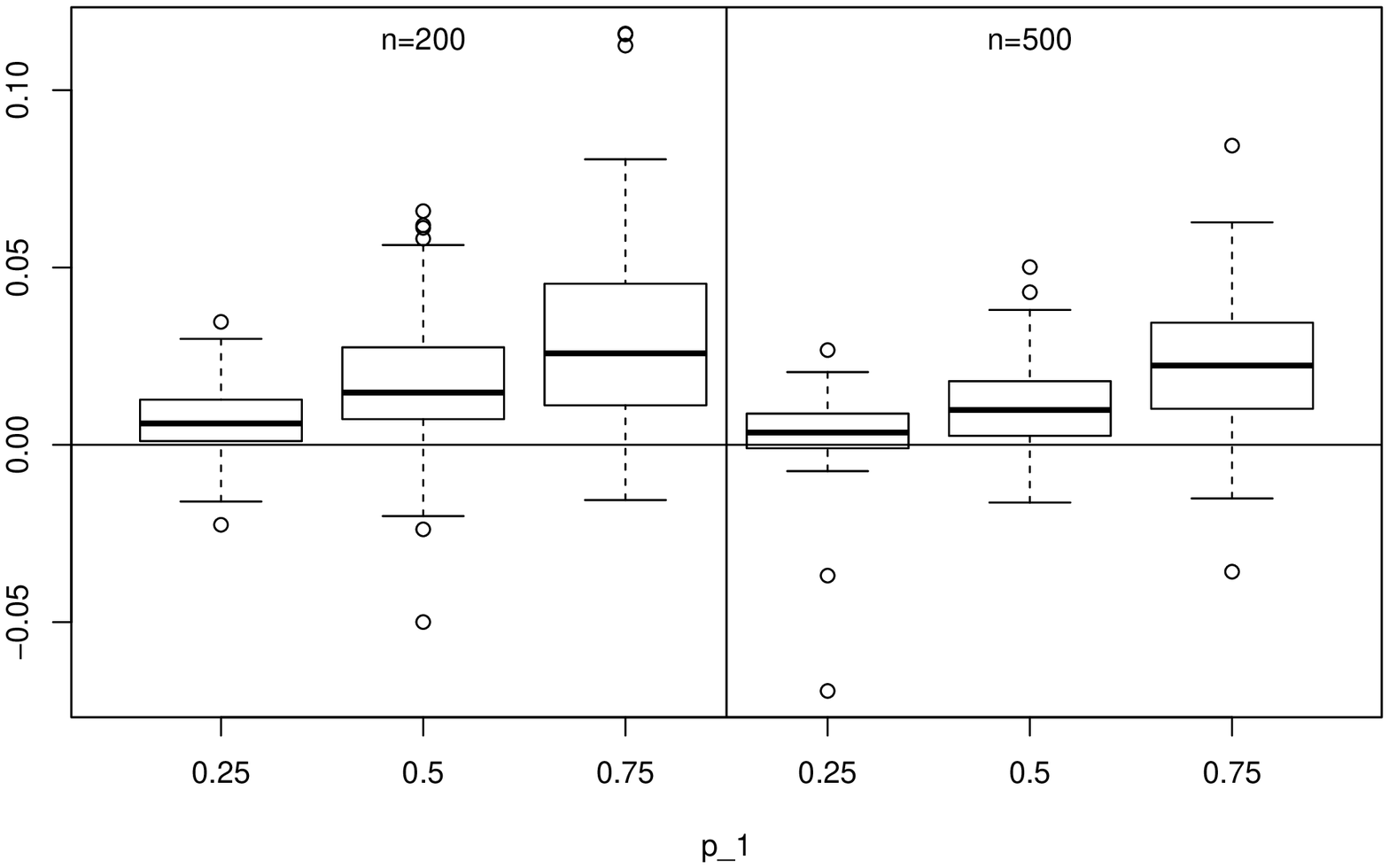} 
	\caption{Boxplots of the relative performance of the CC and CAM local constant regression estimators given by \eqref{eq:regperf} for 100 repetitions of the experiment for  regression problem 1 (top left), regression problem 2 (top right) and regression problem 3 (bottom). }
 	\label{fig:Regression1}
 \end{figure}
 
We see in Figure~\ref{fig:Density1} that the CAM estimator outperforms the CC estimator in terms of total variation distance.  As expected the CAM estimator leads to a greater reduction in error as the missingness probability $p_1$ increases.  On the other hand, as $n$ increases, the relative reduction in error is smaller.  Finally, note that there are a small number of repeats of the experiment in each case where the difference in mean squared error is negative -- this is likely due to the fact that we need to estimate the $\gamma$ using the observed data, there is also a small Monte Carlo error.  

Figure~\ref{fig:Regression1} shows that the CAM local constant regression estimator outperforms the CC estimator. Similarly to the density estimation problems, the CAM estimator leads to a greater reduction in error as the missingness probability $p_1$ increases.  Here we see further that for larger sample sizes, the improvement relative to using the full data set appears to decrease slightly -- this is in agreement with our Theorem~\ref{thm:RegressionPi}, since we only expect a second order improvement in asymptotic mean squared error. 

Finally, as discussed the introduction, a popular approach to overcome issues with missing data is to impute the missing values and treat the result as a full dataset.  We now compare the CAM approach to density estimation and regression with a number of state of the art imputation methods.  These include \emph{mean imputation}, where the missing entries for each feature are imputed with the sample mean (of the observed values) for that feature \citep[Section 4.2.1]{LittleRubin:2002}; \emph{Predictive mean matching (PMM) imputation}, here the missing values are replaced by a value sampled from the observed data, where the sample is taken from the observations that are close to the observation with the missing value \citep[Section 4.3.2]{LittleRubin:2002}; and \emph{Random Forest (RF) imputation}, this uses the Random Forests algorithm to predict the missing entries based on the observed data \citep{Breiman:2002,PantanowitzMarwala:2009}.  For the latter two methods, we use multiple imputation; the missing values are imputed five times and we then take the average of the results after fitting the model on the five imputed datasets.  A detailed outline of these methods can be found in \citet[Chapters 4 and 10]{LittleRubin:2002}. Our implementation utilises the \texttt{mice} R package available from CRAN \citep{MICE:2018}.  

In our experiments, the kernel density estimators are computed using the \texttt{ks} package available from CRAN.  In particular, we use the \texttt{kde} function with a Gaussian kernel, and the diagonal bandwidth matrices were chosen using the \texttt{Hpi.diag} function.  In the regression settings, we make use of the \texttt{regpro} package available from \texttt{CRAN}. For the imputation approaches, the corresponding kernel methods are applied to the imputed datasets.

\begin{table}[ht!]
	\centering
	\caption{\label{tab:one}  Average estimated total variation errors ($*10^3$) for the density estimation problems and mean integrated squared error ($*10^3$) for the regression problems. In each case, $n = 500$ and the first component of $X$ is missing with probability 1/2.}
		\begin{tabular}{c | r | r | r | r r r r}
Model &Full & CAM & CC & Mean & PMM & RF \\
\hline
Density\\ 
\hline
1 & 4.97$_{ 0.02 }$ & 6.07$_{ 0.03 }$ & 6.23$_{ 0.03 }$ &  23.77$_{ 0.24 }$ & 6.76$_{ 0.10 }$ & 6.86$_{ 0.10 }$\\
2  & 35.31$_{ 0.07 }$ & 40.28$_{ 0.08 }$ & 40.58$_{ 0.08 }$ &  101.80$_{ 1.04 }$ & 50.21$_{ 1.07 }$ & 38.31$_{ 0.31 }$\\
3  & 51.84$_{ 0.10 }$ & 52.49$_{ 0.10 }$ & 52.72$_{ 0.10 }$ &  69.66$_{ 0.72 }$ & 52.91$_{ 0.34 }$ & 52.85$_{ 0.35 }$\\
\hline
Regression\\
\hline
1  & 2.74$_{ 0.01 }$ & 4.29$_{ 0.02 }$ & 4.35$_{ 0.02 }$ &  6.49$_{ 0.10 }$ & 2.65$_{ 0.04 }$ & 6.40$_{ 0.11 }$\\
2 & 2.86$_{ 0.02 }$ & 4.12$_{ 0.03 }$ & 4.20$_{ 0.03 }$ &  5.82$_{ 0.11 }$ & 12.28$_{ 0.23 }$ & 6.02$_{ 0.12 }$\\
3 & 19.29$_{ 0.11 }$ & 29.51$_{ 0.19 }$ & 29.84$_{ 0.20 }$ &  61.90$_{ 1.01 }$ & 112.22$_{ 1.63 }$ & 24.44$_{ 0.46 }$\\
\end{tabular}
\end{table}

In Table~\ref{tab:one} we present the estimated total variation or mean integrated squared errors (with standard errors in subscript) over 100 repetitions of each experiment.   For comparison, we also present the results of applying the kernel methods to the full dataset of size $n$.  We see that the CAM technique improves on the complete-case approach in every setting.   On the other hand, as discussed in the introduction, the imputation approaches are not always successful: indeed, there are many settings  here where the imputation methods lead to worse performance than the complete-case method.   Notice that occasionally the imputation methods, eg. random forests in Density Model 2 and predictive mean matching in Regression Model 1, perform very well (even outperforming the full data estimator in a few cases) -- it is unclear why this is the case due to the black-box nature of these approaches.

\subsection{The Brandsma school dataset} 
In this subsection, we show how the CAM local constant regression estimator can be used in practice with the \texttt{brandsma} dataset available in the MICE package.   The full data set consists of 4106 observations of 14 features. We simplify the problem by retaining only 4 features, namely the ``verbal IQ score", the ``SES score", ``language score pre", and ``language score post".  Suppose that we are interested in predicting the verbal IQ score, $Y$, from the remaining features, i.e. $X$ is the 3-dimensional vector consisting of ``SES score", ``language score pre", and ``language score post".  We remove 17 observations for which the response is missing.  In the resulting dataset, we have 3464 complete-cases, 302 with $m = m_1 = (0,1,0)^T$, 182 with $m = m_2 = (0,0,1)^T$, and 108 with $m = m_3 = (1,0,0)^T$.  There are a few observations for other values of $m$, but the corresponding sample sizes are all less than 20 and are therefore ignored.  

In order to evaluate the performance of the CAM estimator, we take a subsample of size 1000 from the complete-cases to use as a test set (this is fixed throughout).  We carry out 100 experiments. In each one, we form a training set by taking another sample of size 200 from the remaining 2464 complete-cases (this sample is different in each experiment). The 200 chosen complete-cases are then combined with the observations in $A_{m_1}$, $A_{m_2}$ and $A_{m_3}$ (which are the same in every experiment). Thus, in each experiment, we have $n_0 = 200$, $n_{m_1} = 302$, $n_{m_2} = 182$, and $n_{m_3} = 108$.  

\begin{figure}[ht!]
	\centering
	\includegraphics[width=0.5\textwidth]{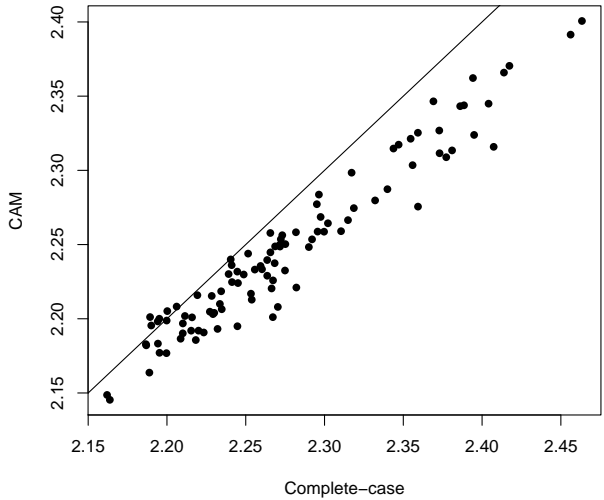} 
	\caption{The average predictive MSE on the test set for the complete-case and CAM estimators for the Brandsma data application. The straight line is ``$y = x$".}
	\label{fig:brandsma}
\end{figure}

In Figure~\ref{fig:brandsma} we plot the average (over the test set) of $\{\hat{\eta}(X) - Y\}^2$ for the complete-case and CAM estimators in each of the 100 experiments.  We use a Gaussian kernel and the bandwidth was chosen using leave-one-out cross-validation.  We see that the CAM estimator has a lower predictive MSE than the complete-case in 94$\%$ of the cases.  In the remaining 6\% of cases, the performance of the two estimators is similar and the discrepancy can be explained by the small sample size used to estimate the MSE.   The overall average MSE (with standard errors) of the complete-case and CAM estimators were 2.28 (0.007) and 2.25 (0.006), respectively, the average improvement in MSE is 0.03 (0.002), whereas the maximum improvement over the 100 experiments was 0.09.

\section{Discussion} 
\label{sec:discuss}
We have seen that our proposed CAM estimator can be used to improve the complete-case estimator in a wide range of statistical problems.   We conclude our paper with a discussion of few extensions of our method. 

First, a key aspect of our CAM proposal is to construct $\hat{\theta}_{0,\mathcal{M}}$ and $\hat{\theta}_{\mathcal{M}}$, so that $\hat{\theta}_{0,\mathcal{M}} - \hat{\theta}_{\mathcal{M}}$ is centered.   One way to guarantee this is to use a resampling method.  Let $n_{0,m} := \min\{n_0, n_m\}$.  Consider the sets of all subsamples (without replacement) of $n_{0,m}$ observations from the data in $A_0$ and $A_m$,  respectively, denoted by $A_0^1, \ldots, A_0^{B_0}$ and $A_m^1, \ldots, A_m^{B_m}$, where $B_0 = {n_0 \choose n_{0,m} }$ and $B_m = {n_m \choose n_{0,m} }$. Notice that at least one of $B_0$ and $B_m$ will be equal to one.   Then, for $m \in \mathcal{M}$, consider the two (independent) statistics  $\bar{\theta}_{0,m} := \frac{1}{B_0} \sum_{b=1}^{B_0} \hat{\theta}_{A_0^b,m}$ and  $\bar{\theta}_{m} := \frac{1}{B_m} \sum_{b=1}^{B_m} \hat{\theta}_{A_m^b,m}$. At least in the missing completely at random setting, we have $\mathbb{E}(\bar{\theta}_{0,m}) = \mathbb{E}(\bar{\theta}_{m})$ -- each of the terms in the sums are estimators calculated on datasets of the same size.   Let $\bar{\theta}_{0,\mathcal{M}} = (\bar{\theta}_{0,m} : m \in \mathcal{M})^T$, and $\bar{\theta}_{ \mathcal{M}} = (\bar{\theta}_m : m \in \mathcal{M})^T$.  Then, for $\gamma \in \mathbb{R}^{|\mathcal{M}|}$, define 
\[
\bar{\theta}^{\mathcal{M}}_{\gamma} :=  \hat{\theta}_{0} - \gamma^T \bigl(\bar{\theta}_{0,\mathcal{M}} - \bar{\theta}_{\mathcal{M}}\bigr).
\]
We have the following corollary to Proposition~\ref{prop:MSE}, which holds for any estimation method and requires no assumptions on the distribution $P$.   The only restriction is that the data is MCAR.  
\begin{corollary} 
\label{cor:msered}
Suppose the data is missing completely at random.  Then
\begin{equation}
\label{eq:subsample}
\mathrm{MSE}(\bar{\theta}^\mathcal{M}_{\gamma}) - \mathrm{MSE}(\hat{\theta}_{0}) = \gamma^T\var(\bar{\theta}_{0,\mathcal{M}} - \bar{\theta}_{\mathcal{M}})\gamma - 2\gamma^T \mathrm{Cov}(\hat{\theta}_{0}, \bar{\theta}_{0,\mathcal{M}}).
\end{equation}
\end{corollary}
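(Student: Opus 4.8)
The plan is to recognise that $\bar{\theta}^{\mathcal{M}}_{\gamma}$ is simply an instance of the generic CAM estimator analysed in Proposition~\ref{prop:MSE}, obtained by substituting the resampled statistics $\bar{\theta}_{0,\mathcal{M}}$ and $\bar{\theta}_{\mathcal{M}}$ in place of $\hat{\theta}_{0,\mathcal{M}}$ and $\hat{\theta}_{\mathcal{M}}$. Since $\bar{\theta}^{\mathcal{M}}_{\gamma} = \hat{\theta}_0 - \gamma^T(\bar{\theta}_{0,\mathcal{M}} - \bar{\theta}_{\mathcal{M}})$ has exactly the same linear form, the purely algebraic MSE expansion underlying Proposition~\ref{prop:MSE} applies verbatim, with $\Lambda$, $\Omega$ and $B_{\mathcal{M}}$ replaced by their resampled analogues $\bar{\Lambda} := \var(\bar{\theta}_{0,\mathcal{M}} - \bar{\theta}_{\mathcal{M}})$, $\bar{\Omega} := \cov(\hat{\theta}_0, \bar{\theta}_{0,\mathcal{M}})$ and $\bar{B}_{\mathcal{M}} := \mathbb{E}(\bar{\theta}_{0,\mathcal{M}} - \bar{\theta}_{\mathcal{M}})$. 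First I would record this substitution and then verify that the two ingredients the proposition relies on remain in place.

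The first ingredient is that $\cov(\hat{\theta}_0, \bar{\theta}_{\mathcal{M}}) = 0$, which is what forces the linear coefficient to be $\bar{\Omega} = \cov(\hat{\theta}_0, \bar{\theta}_{0,\mathcal{M}})$ rather than $\cov(\hat{\theta}_0, \bar{\theta}_{0,\mathcal{M}} - \bar{\theta}_{\mathcal{M}})$. This is immediate from the construction: each $\bar{\theta}_m$ is an average of statistics $\hat{\theta}_{A_m^b,m}$ built from subsamples of $A_m$, and since $A_0 \cap A_m = \emptyset$ the data $\mathcal{T}_{A_m,m}$ are independent of the complete cases $\mathcal{T}_{A_0,0}$ defining $\hat{\theta}_0$; hence the cross-covariance vanishes coordinatewise.

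The second ingredient, and the only place where the MCAR assumption enters, is to show that $\bar{B}_{\mathcal{M}} = 0$. Under MCAR the indicator $M$ is independent of $(X,Y)$, so $P_m = Q_m$ and the observations $\{Z_i^m : i \in A_0\}$ and $\{Z_i^m : i \in A_m\}$ are each i.i.d.\ samples from the same distribution $P_m$. Consequently, for any subsample $A_0^b$ of size $n_{0,m}$ drawn from $A_0$ and any subsample $A_m^{b'}$ of the same size $n_{0,m}$ drawn from $A_m$, the statistics $\hat{\theta}_{A_0^b,m}$ and $\hat{\theta}_{A_m^{b'},m}$ are identically distributed and therefore have equal expectations. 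Averaging over the two families of subsamples yields $\mathbb{E}(\bar{\theta}_{0,m}) = \mathbb{E}(\bar{\theta}_m)$ for every $m \in \mathcal{M}$, i.e.\ $\bar{B}_{\mathcal{M}} = 0$.

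Finally I would substitute $\bar{B}_{\mathcal{M}} = 0$ into the expansion supplied by Proposition~\ref{prop:MSE}: the quadratic term collapses from $\gamma^T(\bar{\Lambda} + \bar{B}_{\mathcal{M}}\bar{B}_{\mathcal{M}}^T)\gamma$ to $\gamma^T \bar{\Lambda}\gamma$, while the linear term loses its $b(\hat{\theta}_0)\bar{B}_{\mathcal{M}}$ contribution, leaving exactly $-2\gamma^T \bar{\Omega}$, which is precisely \eqref{eq:subsample}. Because nothing in this argument constrains $P$ beyond the equal-distribution consequence of MCAR, the corollary holds for any estimation method, as claimed. There is no genuine obstacle here; the only step that requires care is the equal-in-distribution verification giving $\bar{B}_{\mathcal{M}} = 0$, which rests squarely on the MCAR assumption together with the fact that the competing subsamples share the common size $n_{0,m}$.
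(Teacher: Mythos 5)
Your proof is correct and takes essentially the same route as the paper: both arguments reduce the corollary to the algebraic expansion of Proposition~\ref{prop:MSE} (using independence of the statistics built from the disjoint index sets) and then use the MCAR assumption together with the common subsample size $n_{0,m}$ to conclude that the subsampled statistics are identically distributed, hence $\mathbb{E}(\bar{\theta}_{0,\mathcal{M}} - \bar{\theta}_{\mathcal{M}}) = 0$, which kills the bias terms. The only cosmetic difference is that the paper phrases the final step as a ``direct calculation'' of the variance, whereas you invoke the proposition's expansion explicitly; the content is the same.
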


Another consideration is the choice of the set $\mathcal{M}$.  This choice is primarily driven by the data -- in the first instance we might let $\mathcal{M} = \mathcal{M}^* := \{m \in \{0,1\}^d\setminus \{0_d \} : |A_m| > 0 \}$.  In some cases, however, we may consider using a different set $\mathcal{M}$.   For instance, for some $m$, the corresponding sample size $n_m$ may be non-zero but small.   In our numerical analysis in Section~\ref{sec:sims}, we drop $m$ if $|A_m|$ is less than 20.   Another potential option here is to use a data integration method.  More specifically, let $\bar{A}_m := \{i \in A_0^c: m_i \leq m\}$, where the partial order on $\{0,1\}^d$ is defined by $m_i \leq m$ if $\{j : m_i^j = 1\} \subseteq \{j : m^j = 1\}$.  Then  $\bar{A}_m$ is the largest set in $A_0^c$ which has complete observations for variables in $\{j \in \{1, \ldots d\} : m^j=0\}$. Notice also that by construction $\bar{A}_m$ and $A_0$ are disjoint.  To include the data integration in our CAM method, we can simply replace $A_m$ with $\bar{A}_m$, for $m \in \mathcal{M}$.  There are other options too. Suppose we have $m_1 \neq m_2 \in \{0,1\}^d\setminus \{0_d\}$, with $|A_{m_1}| > 0$, $|A_{m_2}| = 0$ and $m_1 \leq m_2$.  Then rather than using $A_{m_1}$, we may opt to use $\bar{A}_{m_2} \supseteq A_{m_1}$ instead.

Finally we discuss the challenging non-MCAR settings.  When the data is MAR or MNAR there is an additional challenge.   In this case, the  naive complete-case estimator will potentially be asymptotically biased.  Conditionally on $M_1, \ldots, M_n$, the data in $\mathcal{T}_{A_0,0}$ can be interpreted as independent and identically distributed pairs from $Q_0$, that is the joint distribution of a generic pair $(X,Y) | \{M = 0\}$.   Thus, we expect the complete-case estimator $\hat{\theta}_{0}$ to be close to $\theta(Q_0)$ as opposed to $\theta(P_0)$.  Two natural questions arise: (i) under what conditions do we have $\theta(P_0) = \theta(Q_0)$ and $\mathbb{E}(\hat{\theta}_{0,m} - \hat{\theta}_m) \approx 0$ (cf.~Proposition~\ref{prop:MSE})?; and (ii) if the conditions in (i) do not hold, how can we adapt the CAM estimator.

A partial answer to (i) is provided by the following in the regression setting:  Suppose we observe $n$ independent and identically distributed copies of $Z^M = (X^{M}, Y)$ and are interested in estimating $\eta(x) = \mathbb{E}(Y| X = x)$.  Assume that $M$ is independent of $Y$ given $X^M$. (Note that this is slightly different to the missing at random condition, which assumes that $M$ is independent of $Z$ given $Z^M$.)  In this case, we have that 
$\theta(Q_0) = \mathbb{E}(Y | X = x, M = 0) = \mathbb{E}(Y | X = x) = \theta(P_0)$ and
\[
\theta_m(Q_m) = \mathbb{E}(Y | X^m = x^m, M = m) = \mathbb{E}(Y | X^m = x^m) = \mathbb{E}(Y | X^m = x^m, M = 0) = \theta_m(Q_0).
\]
Thus, we can still expect that the complete-case approach will target $\eta(x)$, and moreover, can hope that $B_m = \mathbb{E}(\hat{\theta}_{0,m} - \hat{\theta}_0) \approx \theta_m(Q_0)-  \theta_m(Q_m) = 0$. 

Relating to the problem in (ii), in the missing at random case, there are many methods that aim to correct the bias of the naive complete-case estimator; see, for instance, \citet[Chapter~3.3]{LittleRubin:2002}.  One approach is to weight the observations according to the probability that they are (non)missing.   For concreteness, we focus on one such idea, which advocates reweighting the observations according to their (inverse) propensity score -- see \citet[Chapter~3.7]{LittleRubin:2002}.  Recall that $p_m(z) = p_{m}(x,y) = \mathbb{P}(M = m | X = x, Y = y)$.  Of course, $p_{m}(x,y)$ is typically unknown, and needs to be estimated; in fact, we can only hope to estimate $p_{m}(x,y)$ from the observed data in the missing at random setting.  

Consider the $U$-Statistics setting, where we are interested in estimating $\theta = \theta(P) = \mathbb{E}\{\phi(Z_1, \ldots,Z_r)\}$.  The complete-case approach we will in fact construct an unbiased estimator for $\theta(Q) = \mathbb{E}\bigl\{\phi(Z_1, \ldots,Z_r) \big| M_1 = \ldots =  M_r = 0 \bigr\} \neq \theta(P)$.  One solution here is to consider the Horvitz-Thompson estimators \citep{HorvitzThompson:1956} in place of $\hat{\theta}_0$, $\hat{\theta}_{0,m}$ and $\hat{\theta}_m$. In this problem, the complete-case analogue is
\[
\tilde{\theta}_0 = \tilde{\theta}_{A_0,0} =\frac{1}{ \sum_{\{i_1, \ldots, i_r\} \subseteq A_0} \frac{1}{\prod_{j=1}^r p_0(Z_{i_j})}} \sum_{\{i_1, \ldots, i_r\} \subseteq A_0} \frac{\phi(Z_{i_1}, \ldots,Z_{i_r})}{\prod_{j=1}^r p_0(Z_{i_j})}. 
\]
Moreover, similar expressions can be derived for $\tilde{\theta}_{0,m}$ and $\tilde{\theta}_{m}$.  The CAM estimator can then be constructed as in the MCAR case.  Of course, in practice,  $p_0(z)$ and $p_m(z^m)$ need to be estimated using the observed data. This is non-trivial, and further study in this direction is left for future work. 

\section*{Acknowledgements}
The work of the second author was partially supported by an NIH award 1R01GM131407. 

\spacingset{1}

\newpage

\spacingset{1.1} 

\appendix

\section{Appendix: Technical arguments} 
\label{sec:tech}

\subsection{Proofs for Section \ref{sec:toy}}
\label{sec:techprelim}
\begin{proof}[Proof of claims in Section \ref{sec:toy}]
	\emph{Claim 1}: The statistic $T = (T_1, T_2)^T = (\hat{\nu}_{X,1} - \hat{\nu}_{Y,1}\Gamma_{12}/\Gamma_{22}, \hat{\nu}_{Y,1} + \hat{\nu}_{Y,2})^T$ is sufficient for $\nu = (\nu_X, \nu_Y)^T$.  To see this we use the factorisation criteria: let $Z_i = (X_i, Y_i)^T$, the full likelihood is 
	\begin{align*}
	L(\nu) & = \prod_{i = 1}^n \frac{1}{\sqrt{2\pi |\Gamma|}} \exp\Bigl(-\frac{1}{2}(Z_i - \nu)^T \Gamma^{-1}(Z_i - \nu)\Bigr) \prod_{i = n+1}^{2n} \frac{1}{\sqrt{2\pi \Gamma_{22}}} \exp\Bigl(-\frac{1}{2\Gamma_{22}}(Y_i - \nu_Y)^2\Bigr)
	\\ & = \frac{1}{(2\pi)^n |\Gamma|^{n/2}\Gamma_{22}^{n/2}}  \exp\Bigl(-\frac{1}{2}\sum_{i=1}^n (Z_i - \nu)^T \Gamma^{-1}(Z_i - \nu) -\frac{1}{2\Gamma_{22}}\sum_{i = n+1}^{2n} (Y_i - \nu_Y)^2\Bigr)
	\\ & \propto  \exp\Bigl(\sum_{i=1}^n \nu^T\Gamma^{-1}Z_i - \frac{1}{2}\nu^T\Gamma^{-1}\nu + \frac{1}{\Gamma_{22}} \sum_{i = n+1}^{2n} Y_i\nu_Y - \frac{nv_Y^2}{2\Gamma_{22}} \Bigr)
	\\ & =  \exp\Bigl(n \nu^T \Gamma^{-1} (\hat{\nu}_{X,1}, \hat{\nu}_{Y,1})^T - \frac{1}{2}\nu^T\Gamma^{-1}\nu + \frac{1}{\Gamma_{22}} n \hat{\nu}_{Y,2} \nu_Y -  \frac{nv_Y^2}{2\Gamma_{22}} \Bigr).
	\end{align*}
	Now write 
	\[
	\nu^T \Gamma^{-1} (\hat{\nu}_{X,1}, \hat{\nu}_{Y,1})^T = \frac{1}{|\Gamma|} \Bigl\{(\Gamma_{22}\hat{\nu}_{X,1} - \Gamma_{12} \hat{\nu}_{Y,1}) \nu_X + (\Gamma_{11}\hat{\nu}_{Y,1} - \Gamma_{12} \hat{\nu}_{X,1})\nu_Y\Bigr\}. 
	\]
	By collecting the coefficients of $\nu_X$ and $\nu_Y$, it follows that 
	\[ 
	\Bigl(\hat{\nu}_{X,1} - \hat{\nu}_{Y,1} \Gamma_{12}/\Gamma_{22}, \hat{\nu}_{X,1} - \hat{\nu}_{Y,1} \Gamma_{11}/\Gamma_{12} - \hat{\nu}_{Y,2} \Gamma_{11}/\Gamma_{12} + \hat{\nu}_{Y,2} \Gamma_{12}/{\Gamma_{22}}\Bigr)^T = \Bigl(T_1, T_1 + \bigl(\frac{\Gamma_{12}}{\Gamma_{22}} - \frac{\Gamma_{11}}{\Gamma_{12}}\bigr) T_2\Bigr)^T
	\]
	is a sufficient statistic for $\nu$. Thus Claim 1 is true. 
	
	\bigskip 
	
	\emph{Claim 2:} We have that $\tilde{\nu}_{X} = \mathbb{E}(\hat{\nu}_{X,1} | T )$.  To prove this, first observe that
		\[
		(\hat{\nu}_{X,1}, \hat{\nu}_{Y,1}, \hat{\nu}_{Y,2})^T \sim N_3((\nu_X,\nu_Y,\nu_Y)^T, \Gamma'),
		\]
		where 
		\[
		\Gamma' = \left( \begin{array}{c c} \Gamma & 0 \\ 0 & \Gamma_{22} \end{array}\right).
		\]
		Therefore
	\[
(\hat{\nu}_{X,1}, T_1, T_2)^T \sim N_3((\nu_X,\nu_X - \nu_Y\Gamma_{12}/\Gamma_{22}, 2\nu_Y)^T, \Gamma''),
    \]
    	where 
    	\[
    	\Gamma'' = \left( \begin{array}{c c c} \Gamma_{11} & \Gamma_{11} - \Gamma_{12}^2/\Gamma_{22} &\Gamma_{12} \\ \Gamma_{11} - \Gamma_{12}^2/\Gamma_{22} & \Gamma_{11} - \Gamma_{12}^2/\Gamma_{22} & 0 \\ \Gamma_{12} & 0 & 2\Gamma_{22} \end{array}\right).
    	\]
	By standard Gaussian distribution theory, it follows that
	\begin{align*}
	& \mathbb{E}\{\hat{\nu}_{X,1} | (T_1, T_2) \}  \\ 
	& = \nu_X + (\Gamma_{11} - \Gamma_{12}^2/\Gamma_{22}, \Gamma_{12}) \left( \begin{array}{c c} \Gamma_{11} - \Gamma_{12}^2/\Gamma_{22} & 0 \\ 0 & 2\Gamma_{22} \end{array}\right)^{-1} \left( \begin{array}{c} T_1 - \nu_X + \nu_Y\Gamma_{12}/\Gamma_{22} \\  T_2 - 2\nu_Y \end{array}\right)
		\\ & = T_1 - T_2\Gamma_{12}/(2\Gamma_{22}) = \tilde{\nu}_X,  
 \end{align*}
 which completes the proof. 
\end{proof}

\subsection{Proofs of the results in Section~\ref{sec:setting}}
\begin{proof}[Proof of Proposition~\ref{prop:MSE}]
		First, we have that 
		\begin{align*}
& \mathbb{E}\{(\hat{\theta}_{\gamma}^{ \mathcal{M}}  - \theta)^2 - (\hat{\theta}_{0}  - \theta)^2\} 
\\  & \hspace{60pt} =  \mathbb{E}\{(\hat{\theta}_{\gamma}^{\mathcal{M}} - \hat{\theta}_{0})^2 + 2\{\hat{\theta}_{\gamma}^{\mathcal{M}} - \hat{\theta}_{0})(\hat{\theta}_{0}  - \theta)\}
		\\ &  \hspace{60pt}=  \gamma^T\mathbb{E}\{(\hat{\theta}_{0,\mathcal{M}} - \hat{\theta}_{\mathcal{M}}) (\hat{\theta}_{0,\mathcal{M}} - \hat{\theta}_{\mathcal{M}})^T\} \gamma- 2\gamma^T\mathbb{E}\{(\hat{\theta}_{0,\mathcal{M}} - \hat{\theta}_{\mathcal{M}}) (\hat{\theta}_{0}  - \theta)\}.
		\end{align*} 
Then, using the fact that $\hat{\theta}_{0,\mathcal{M}}$ and $\hat{\theta}_{\mathcal{M}}$ are independent, write
\begin{align*}
& \mathbb{E}\{(\hat{\theta}_{0,\mathcal{M}} - \hat{\theta}_{\mathcal{M}}) (\hat{\theta}_{0,\mathcal{M}} - \hat{\theta}_{\mathcal{M}})^T\} 
\\ &  \hspace{60pt} = \mathbb{E}(\hat{\theta}_{0,\mathcal{M}}\hat{\theta}_{0,\mathcal{M}}^T ) -\mathbb{E}(\hat{\theta}_{0,\mathcal{M}}\hat{\theta}_{\mathcal{M}}^T)  - \mathbb{E}(\hat{\theta}_{\mathcal{M}}\hat{\theta}_{0,\mathcal{M}}^T) + \mathbb{E}(\hat{\theta}_{\mathcal{M}}\hat{\theta}_{\mathcal{M}}^T) 
\\ &  \hspace{60pt} = \var(\hat{\theta}_{0,\mathcal{M}})  + \var(\hat{\theta}_{\mathcal{M}}) +  \{\mathbb{E}(\hat{\theta}_{0,\mathcal{M}}) - \mathbb{E}(\hat{\theta}_{\mathcal{M}})\}\{\mathbb{E}(\hat{\theta}_{0,\mathcal{M}}) - \mathbb{E}(\hat{\theta}_{\mathcal{M}})\}^T 
\\ &  \hspace{60pt} =   \mathrm{Var}(\hat{\theta}_{0,\mathcal{M}}-\hat{\theta}_{\mathcal{M}}) +  \{\mathbb{E}(\hat{\theta}_{0,\mathcal{M}}) - \mathbb{E}(\hat{\theta}_{\mathcal{M}})\}\{\mathbb{E}(\hat{\theta}_{0,\mathcal{M}}) - \mathbb{E}(\hat{\theta}_{\mathcal{M}})\}^T.
\end{align*}
Moreover, since $\hat{\theta}_{0}$ and $\hat{\theta}_{\mathcal{M}}$ are independent, we have
\begin{align*}
\mathbb{E}\{(\hat{\theta}_{0,\mathcal{M}} - \hat{\theta}_{\mathcal{M}}) (\hat{\theta}_{0}  - \theta)\} & = \mathbb{E}[(\hat{\theta}_{0,\mathcal{M}} - \hat{\theta}_{\mathcal{M}}) \{\hat{\theta}_{0} - \mathbb{E}(\hat{\theta}_{0}) + \mathbb{E}(\hat{\theta}_{0}) - \theta\}]
\\ & =  \mathbb{E}[(\hat{\theta}_{0,\mathcal{M}} - \hat{\theta}_{\mathcal{M}}) \{\hat{\theta}_{0} - \mathbb{E}(\hat{\theta}_{0})\}] + \mathbb{E}(\hat{\theta}_{0,\mathcal{M}} - \hat{\theta}_{\mathcal{M}}) \mathbb{E}(\hat{\theta}_{0} - \theta)
\\ & =  \mathrm{Cov}(\hat{\theta}_{0,\mathcal{M}},\hat{\theta}_{0}) + \mathbb{E}(\hat{\theta}_{0,\mathcal{M}} - \hat{\theta}_{\mathcal{M}}) \mathbb{E}(\hat{\theta}_{0} - \theta).
\end{align*}
The result follows.
\end{proof} 

\subsection{Proofs of the results in Section \ref{sec:UStatistics}}
\label{sec:Utech}
\begin{proof}[Proof of Theorem~\ref{thm:Ustat}]
Note that $\hat{\theta}_{\gamma}^\mathcal{M}$ can be written as
\begin{align*}
\hat{\theta}_{\gamma}^\mathcal{M} = (\hat\theta_0 - \gamma^T\hat{\theta}_{0,\mathcal{M}}) + \gamma^T\hat\theta_\mathcal{M}.
\end{align*}
The two terms on the right hand side are independent because the former uses data in $A_0$ and the latter uses data in $A_0^c$. In addition, the first term on the right hand side is a $U$-Statistic with kernel 
\[
\tilde \phi(z_1,\cdots, z_r) = \phi(z_1,\cdots, z_r)-\sum_{m\in \mathcal{M}}\gamma_m\phi_m(z_1^m,\cdots, z_r^m),
\]
and the second term  is a linear combination of $\mathcal{M}$ independent $U$-Statistics each with kernel $\phi_m(z_1^m,\cdots, z_r^m)$.  If $\bbE\{\phi^2(Z_1,\cdots, Z_r)\}<\infty$ and $\bbE\{\phi_m^2(Z_1^m,\cdots, Z_r^m)\}<\infty$ for a all $m\in \mathcal{M}$, we have
\[
\mathbb{E} \Bigl\{\phi(Z_1,\cdots, Z_r)-\sum_{m\in \mathcal{M}}\gamma_m\phi_m(Z_1^m,\cdots, Z_r^m)\Bigr\}^2<\infty.
\]
Let $\theta_m = \mathbb{E}\{\phi_m\bigl(Z_{1}^m,\ldots, Z_{r}^m\bigr)\}$ and $\theta_\mathcal{M} = (\theta_m, m\in \mathcal{M})^T$. By classical $U$-Statistics theory (see for, example, \citet[Theorem~12.3]{vanderVaart:1998}), we have that
\begin{align*}
&\sqrt{n_0}\Bigl\{\hat\theta_0 - \gamma^T\hat\theta_{0,\mathcal{M}} - (\theta-\gamma^T\theta_\mathcal{M}) \Bigr\} \rightarrow^d N(0, r^2v_0); \quad \sqrt{n_m}(\hat\theta_{m} - \theta_m)\rightarrow^d N(0, r^2v_m)
\end{align*}
as $n \rightarrow \infty$, where $v_0  := \cov\{\tilde\phi(Z_1, Z_2,\cdots, Z_r), \tilde\phi(Z_1, Z_{r+2},\cdots, Z_{2r})\}$ and \\ $v_m := \cov\{\phi_m(Z_1^m, Z_2^m,\cdots, Z_r^m), \phi(Z_1^m, Z_{r+2}^m,\cdots, Z_{2r}^m)\}$.

Now, by the independence of $\hat\theta_0 - \gamma^T\hat\theta_{0,\mathcal{M}}$ and $\hat\theta_{\mathcal{M}}$, and noting that $\lim_{n\rightarrow \infty} \frac{n_m}{n_0} = \frac{q_m}{q_0}$, we have
\[
\sqrt{n_0}\bigl(\hat\theta_{\gamma}^\mathcal{M} - \theta\bigr)\rightarrow^d N\Bigl(0, r^2\bigl(v_0 + \sum_{m\in \mathcal{M}}\gamma_mq_m^{-1}q_0v_m\bigr)\Bigr).
\]
Finally, by the definition of $\tilde\phi(z_1,\cdots, z_r)$ we can derive that $v_0 + \sum_{m\in \mathcal{M}}\gamma_mq_m^{-1}q_0v_m = \psi_1 + \gamma^T\Lambda_1\gamma - 2\gamma^T\Omega_1$. This completes the proof of the theorem. 
 \end{proof} 

\bigskip

\begin{proof}[Proof of Theorem~\ref{thm:Ustat2}]
	First, we formally define our $U$-Statistic estimates of $\Omega_{U}$ and $\Lambda_U$.  We have
\begin{align}
\label{eq:OmegaU}
	\hat{\Omega}_{U,m} &:= \frac{1}{2{n_0 \choose 4r-2} } \sum_{\{i_1, \ldots, i_{4r-2}\} \subseteq A_0} \Bigl[\bigl\{ \phi \bigl(Z_{i_1},\ldots, Z_{i_r}\bigr) - \phi\bigl(Z_{i_{2r}}, \ldots, Z_{i_{3r-1}}\bigr)\bigr\}  \nonumber
	\\ & \hspace{120pt} \bigl\{ \phi_m\bigl(Z^m_{i_1}, Z^m_{i_{r+1}},\ldots, Z^m_{i_{2r-1}}\bigr) - \phi_{m}\bigl(Z_{i_{2r}}^{m}, Z_{i_{3r}}^{m},\ldots, Z_{i_{4r-2}}^{m}\bigr)\bigr\} \Bigr].
\end{align}
Moreover
\begin{align}
\label{eq:Lambdam}
\hat{\Lambda}_{U, m, m} &:= \Bigl(1 + \frac{n_0}{n_m}\Bigr)  \frac{1}{2{n_0 + n_m \choose 4r-2} } \sum_{\{i_1, \ldots, i_{4r-2}\} \subseteq A_0 \cup A_m} \Bigl[\bigl\{ \phi_{m} \bigl(Z_{i_1}^{m},\ldots, Z_{i_r}^{m}\bigr) - \phi_{m}\bigl(Z_{i_{2r}}^{m}, \ldots, Z_{i_{3r-1}}^{m}\bigr)\bigr\}  \nonumber
\\ & \hspace{90pt} \bigl\{ \phi_{m}\bigl(Z_{i_1}^{m}, Z_{i_{r+1}}^{m},\ldots, Z_{i_{2r-1}}^{m}\bigr) - \phi_{m}\bigl(Z_{i_{2r}}^{m}, Z_{i_{3r}}^{m},\ldots, Z_{i_{4r-2}}^{m}\bigr)\bigr\} \Bigr];
\end{align}
and
\begin{align}
\label{eq:Lambdam12}
\hat{\Lambda}_{U, m_1, m_2} &:=\frac{1}{2{n_0 + n_{m_{1,2}} \choose 4r-2} } \sum_{\{i_1, \ldots, i_{4r-2}\} \subseteq A_0 \cup A_{m_{1,2}} } \Bigl[\bigl\{ \phi_{m_1} \bigl(Z_{i_1}^{m_1},\ldots, Z_{i_r}^{m_1}\bigr) - \phi_{m_1}\bigl(Z_{i_{2r}}^{m_1}, \ldots, Z_{i_{3r-1}}^{m_1}\bigr)\bigr\}  \nonumber
	\\ & \hspace{60pt} \bigl\{ \phi_{m_2}\bigl(Z_{i_1}^{m_2}, Z_{i_{r+1}}^{m_2},\ldots, Z_{i_{2r-1}}^{m_2}\bigr) - \phi_{m_2}\bigl(Z_{i_{2r}}^{m_2}, Z_{i_{3r}}^{m_2},\ldots, Z_{i_{4r-2}}^{m_2}\bigr)\bigr\} \Bigr].
\end{align}

Then, by classical $U$-Statistics theory \citep[Theorem~12.3]{vanderVaart:1998}, we can prove that  $\hat\Lambda$ and $\hat\Omega$ are consistent estimators of $\Lambda$ and $\Omega$, respectively. Then the consistency of $\hat\gamma$ to $\gamma^*$ follows automatically. Next note that 
	\begin{align*}
	\sqrt{n_0}(\hat\theta_{\hat\gamma}^\mathcal{M} - \theta) - \sqrt{n_0}(\hat\theta_{\gamma}^\mathcal{M} - \theta)  = \sqrt{n_0}(\hat\theta_{\hat\gamma}^\mathcal{M} - \hat\theta_{\gamma}^\mathcal{M}) = \sqrt{n_0}(\hat{\theta}_{0,\mathcal{M}} - \hat\theta_\mathcal{M})^T (\hat\gamma - \gamma^*). 
	\end{align*}
	By the proof of Theorem \ref{thm:Ustat}, the vector $\sqrt{n_0}(\hat{\theta}_{0,\mathcal{M}} - \hat\theta_\mathcal{M})$ is jointly asymptotically normal. Since $\hat\gamma - \gamma^* = o_p(1)$, it follows from the Slutsky's theorem that 
	\[
		\sqrt{n_0}(\hat\theta_{\hat\gamma}^\mathcal{M} - \theta) - \sqrt{n_0}(\hat\theta_{\gamma}^\mathcal{M} - \theta) \rightarrow^p 0.
	\]
	This, together with Theorem \ref{thm:Ustat}, completes the proof of the theorem. 
	\end{proof}
	
	\bigskip

\subsection{Conditions and proofs for the results in Section \ref{sec:RCD}}
\label{sec:RCDtech}
We first formally state our assumptions. In the following, $L > 0$ is some universal constant.  
\begin{assumption} 
	Suppose that $P_X$ and, for $m \in \{0,1\} ^d \setminus \{(1,\ldots,1)^T\}$, the marginal $X^m$ distribution $P_{X^m}$ have densities $f_X$, and $f_{X^m}$, respectively, that satisfy $|f_X(z_1) - f_X(z_2)| \leq L\|z_1 - z_2\|$, for all $z_1,z_2 \in \mathbb{R}^{d}$, and, for each $m \in \{0,1\} ^d \setminus \{(1,\ldots,1)^T\}$, we have  $|f_{X^m}(z_1^m) - f_{X^m}(z_2^m)| \leq L \|z_1^m - z_2^m\|$, for all $z_1^m,z_2^m \in \mathbb{R}^{d_m}$.
\end{assumption} 

\begin{assumption} 
	Suppose the kernel is such that $\bar{K} := \sup_{z \in \mathbb{R}^d} (1 + \|z\|) K(z) < \infty$, and that $\mu_0 = \mu_0(K) := \int_{\mathbb{R}^d} K(z) \, dz = 1$, $\mu_1 = \mu_1(K) := \int_{\mathbb{R}^d} \|z\| K(z) \, dz < \infty$.   We also ask that $\nu = \nu(K) := \int_{\mathbb{R}^d} K^2(z) \, dz < \infty$.  Moreover, for each $m \in \mathcal{M}$, we have  $\bar{K}_m := \sup_{z \in \mathbb{R}^{d_m}} (1 + \|z\|) K_m(z) < \infty$, $\mu_{0,m} = \mu_{0,m}(K_m) := \int_{\mathbb{R}^{d_m}} K_m(z) \, dz < \infty$, $\mu_{1,m} = \mu_{1,m}(K_m) := \int_{\mathbb{R}^{d_m}} \|z\| K_m(z) \, dz <\infty$, and $\nu_m = \nu_{m}(K_m) := \int_{\mathbb{R}^{d_m}} K_m^2(z) \, dz < \infty$. Finally for $m_1 \neq m_2  \in \mathcal{M}$, letting $m^{1,2} = \mathrm{pmax}\{m_1,m_2\} \in \{0,1\}^d$ and $m_{1,2} = \mathrm{pmim}\{m_1,m_2\} \in \{0,1\}^d$ denote the entrywise maximums and minimums, respectively, of $m_1$ and $m_2$, finally we suppose that $\nu_{m_1, m_2} = \nu_{m_1,m_2}(K_{m_1}, K_{m_2}) := \int_{\mathbb{R}^{d_{m_{1,2}}}} K_{m_1}(z^{m_1}) K_{m_2}(z^{m_2})  \, dz^{m_{1,2}} < \infty.$
\end{assumption} 

\begin{assumption}  
	\label{ass:Lip}
	We have that $|\eta(z_1) - \eta(z_2)| \leq L\|z_1 - z_2\|$, for all $z_1,z_2 \in \mathbb{R}^{d}$, and, for each $m \in \mathcal{M}$, $|\eta_m(z_1^m) - \eta_m(z_2^m)| \leq L \|z_1^m - z_2^m\|$, for all $z_1^m,z_2^m \in \mathbb{R}^{d_m}$. 
\end{assumption} 
\begin{assumption}  
	\label{ass:tauLip}
	For each $m \in \mathcal{M}$, we have $|\tau_m(z_1^m) - \tau_m(z_2^m)| \leq L\|z_1^m - z_2^m\|$, for all $z_1^m,z_2^m \in \mathbb{R}^{d_m}$.  
	Finally, we ask, for all $m_1, m_2 \in \mathcal{M}$, that $|\tau_{m_1,m_2}(z_1^{m_{1,2}}) - \tau_{m_1,m_2}(z_2^{m_{1,2}}) | \leq L\|z_1^{m_{1,2}} - z_2^{m_{1,2}}\|$, for all $z_1^{m_{1,2}} ,z_2^{m_{1,2}} \in \mathbb{R}^{d_{m_{1,2}}}$.
\end{assumption}  

\bigskip

\begin{proof}[Proof of Theorem~\ref{thm:DensityPi}]
	First, we have  
	\begin{align*}
	&\mathbb{E}[\{\hat{f}_{\gamma}^{\mathcal{M}}  - f_X(x)\}^2 - \{\hat{f}_0  - f_X(x)\}^2]  
	\\ & \hspace{60pt} =  \mathbb{E}[(\hat{f}_{\gamma}^{\mathcal{M}}(x) - \hat{f}_0)^2 + 2(\hat{f}_{\gamma}^{\mathcal{M}} - \hat{f}_0)\{\hat{f}_0  - f_X(x)\}]
	\\ & \hspace{60pt}  =  \gamma^T\mathbb{E}\{(\hat{f}_{0,\mathcal{M}} - \hat{f}_{\mathcal{M}}) (\hat{f}_{0,\mathcal{M}} - \hat{f}_{\mathcal{M}})^T\} \gamma- 2\gamma^T\mathbb{E}[(\hat{f}_{0,\mathcal{M}} - \hat{f}_{\mathcal{M}}) \{\hat{f}_0  - f_X(x)\}].
	\end{align*} 
	Now observe that, for each $m \in \mathcal{M}$,
	\[
	\mathbb{E}(\hat{f}_{0,m} - \hat{f}_{m}) = \frac{1}{n_0 h^{d_m}} \sum_{i \in A_0} \mathbb{E}\Bigl\{K_m\Bigl( \frac{X_i^{m} - x^{m}}{h}\Bigr)\Bigr\} - \frac{1}{n_m h^{d_m}} \sum_{i \in A_m} \mathbb{E}\Bigl\{K_m\Bigl( \frac{X_i^{m} - x^{m}}{h}\Bigr) \Bigr\} = 0.
	\]
	Thus, using also that $\hat{f}_{0,\mathcal{M}}$ and $\hat{f}_{\mathcal{M}}$ are independent, we can write
	\begin{align}
	\label{eq:denVar}
	&\mathbb{E}\{(\hat{f}_{0,\mathcal{M}} - \hat{f}_{\mathcal{M}})(\hat{f}_{0,\mathcal{M}} - \hat{f}_{\mathcal{M}})^T\}  \nonumber
	\\ & \hspace{30pt} = \mathbb{E}[\{\hat{f}_{0,\mathcal{M}} - \mathbb{E}(\hat{f}_{0,\mathcal{M}}) + \hat{f}_{\mathcal{M}} - \mathbb{E}(\hat{f}_{\mathcal{M}}) \}\{\hat{f}_{0,\mathcal{M}} - \mathbb{E}(\hat{f}_{0,\mathcal{M}}) + \hat{f}_{\mathcal{M}} - \mathbb{E}(\hat{f}_{\mathcal{M}}) \}^T] \nonumber
	\\ & \hspace{30pt}  = \mathrm{Cov}(\hat{f}_{0,\mathcal{M}}) + \mathrm{Cov}(\hat{f}_{\mathcal{M}}).
	\end{align} 
It remains to show that $\mathrm{Cov}(\hat{f}_{0,\mathcal{M}}) + \mathrm{Cov}(\hat{f}_{\mathcal{M}}) = \Lambda_D(1+o(1))$ and $\mathbb{E}[(\hat{f}_{0,\mathcal{M}} - \hat{f}_{\mathcal{M}}) \{\hat{f}_0  - f_X(x)\}] = \Omega_D(1+o(1))$.

For the diagonal terms in the covariances in \eqref{eq:denVar}, we have 
	\begin{align*}
	\var(\hat{f}_{0,m})  + \var(\hat{f}_{m}) &= \frac{1}{h^{2d_m}}\Bigl(\frac{1}{n_0} + \frac{1}{n_m}\Bigr) \var\Bigl\{K_m\Bigl( \frac{X^{m} - x^{m}}{h}\Bigr) \Bigr\}
	\\ & = \frac{1}{h^{2d_m}}\Bigl(\frac{1}{n_0} + \frac{1}{n_m}\Bigr)\biggl[ \int_{\mathbb{R}^{d_m}} K_m^2\Bigl( \frac{z^{m} - x^{m}}{h}\Bigr) f_{X^m}(z^m) \, dz^m 
	\\ & \hspace{ 120pt} - \Bigl\{\int_{\mathbb{R}^{d_m}} K_m\Bigl( \frac{z^{m} - x^{m}}{h}\Bigr) f_{X^m}(z^m) \, dz^m\Bigr\}^2 \biggr].
	\end{align*}
 After making the substitution $u^m = \frac{z^m - x^m}{h}$ and using assumptions  \textbf{A1} and \textbf{A2}, we deduce that
    \begin{align*} 
 & \Bigl| \int_{\mathbb{R}^{d_m}} K_m^2\Bigl( \frac{z^{m} - x^{m}}{h}\Bigr) f_{X^m}(z^m) \, dz^m - h^{d_m} f_{X^m}(x^m) \nu_{m} \Bigr| 
   \\ & \hspace{60pt} \leq h^{d_m}\int_{\mathbb{R}^{d_m}} K_m^2(u^{m}) |f_{X^m}(x^m + h u^m) - f_{X^m}(x^m)| \, du^m \leq L h^{d_m+1} \bar{K}_m \mu_{1,m}.
    	\end{align*}
Whereas   	
	    \begin{align} 
	    \label{eq:denbias}
&	\Bigl|  \int_{\mathbb{R}^{d_m}} K_m\Bigl( \frac{z^{m} - x^{m}}{h}\Bigr) f_{X^m}(z^m) \, dz^m - h^{d_m}f_{X^m}(x^m) \mu_{0,m} \Bigr| \nonumber
\\ & \hspace{60pt} \leq h^{d_m}\int_{\mathbb{R}^{d_m}} K_m(u^{m}) |f_{X^m}(x^m + h u^m) - f_{X^m} (x^m) | \, du^m \leq L h^{d_m+1} \mu_{1,m}. 
		\end{align}
It follows that 
	\[
	\var(\hat{f}_{0,m})  + \var(\hat{f}_{m}) = \frac{f_{X^m}(x^m) \nu_{m}}{h^{d_m}}\Bigl(\frac{1}{n_0} + \frac{1}{n_m}\Bigr)\{1 + o(1)\},
	\]
as $n \rightarrow \infty$, uniformly for $h \in [n^{-\beta}, n^{-\alpha}]$.  
    
    For the off-diagonal terms in \eqref{eq:denVar}: first, for $m_1 \neq m_2 \in \mathcal{M}$, we have that $\mathrm{Cov}(\hat{f}_{m_1},\hat{f}_{m_2}) = 0,$ since $A_{m_1}$ and $A_{m_2}$ are disjoint for $m_1 \neq m_2$.      For the remaining terms, we have 
	\begin{align*}
	& \mathrm{Cov}(\hat{f}_{0,m_1},\hat{f}_{0,m_2}) 
	\\ &\hspace{10 pt}  = \frac{1}{n_0 h^{d_{m_1} + d_{m_2}}} \mathrm{Cov}\Bigl\{K_{m_1}\Bigl(\frac{X^{m_1} - x^{m_1}}{h}\Bigr), K_{m_2}\Bigl( \frac{X^{m_2} - x^{m_2}}{h}\Bigr) \Bigr\}
	\\ &\hspace{10 pt} = \frac{1}{n_0h^{d_{m_1} + d_{m_2}}} \biggl\{ \int_{\mathbb{R}^{d}} K_{m_1} \Bigl(\frac{z^{m_1} - x^{m_1}}{h}\Bigr) K_{m_2}\Bigl(\frac{z^{m_2} - x^{m_2}}{h}\Bigr) f_{X}(z) \, dz \biggr\}
    \\ & \hspace{40 pt} - \frac{1}{n_0 h^{d_{m_1} + d_{m_2}}} \biggl\{\int_{\mathbb{R}^{d}} K_{m_1}\Bigl( \frac{z^{m_1} - x^{m_1}}{h}\Bigr) f_{X}(z) \, dz \biggr\} \biggl\{\int_{\mathbb{R}^{d}} K_{m_2}\Bigl( \frac{z^{m_2} - x^{m_2}}{h}\Bigr) f_{X}(z) \, dz \biggr\}.
    \end{align*} 
    As above, we make the substitution $u = \frac{z - x}{h}$, which gives
    \begin{align*} 
	& \int_{\mathbb{R}^{d}} K_{m_1}\Bigl( \frac{z^{m_1} - x^{m_1}}{h}\Bigr) K_{m_2}\Bigl(\frac{z^{m_2} - x^{m_2}}{h}\Bigr) f_{X}(z) \, dz 
    \\ & \hspace{90 pt}= h^d \int_{\mathbb{R}^{d}} K_{m_1}(u^{m_1}) K_{m_2}(u^{m_2}) f_{X}(x + hu) \, du 
    \\ & \hspace{90 pt}= h^d \int_{\mathbb{R}^{d_{m_{1,2}}}} \int_{\mathbb{R}^{d - d_{m_{1,2}}}} K_{m_1}(u^{m_1}) K_{m_1}(u^{m_2}) f_{X}(x + hu) \, du^{1_d - m_{1,2}}  du^{m_{1,2}} 
    \\ & \hspace{90 pt}= h^d \int_{\mathbb{R}^{d_{m_{1,2}}}}  K_{m_1}(u^{m_1}) K_{m_2}(u^{m_2})  \int_{\mathbb{R}^{d - d_{m_{1,2}}}} f_{X}(x + hu) \, du^{1_d - m_{1,2}}  du^{m_{1,2}} 
	\\ & \hspace{90 pt}= h^{d_{m_{1,2}}}\int_{\mathbb{R}^{d_{m_{1,2}}}} K_{m_1}(u^{m_1}) K_{m_2}(u^{m_2}) f_{X^{m_{1,2}}}(x^{m_{1,2}} + hu^{m_{1,2}}) \, du^{m_{1,2}} 
	\\ & \hspace{90 pt} = h^{d_{m_{1,2}}}\nu_{m_{1},m_{2}} f_{X^{m_{1,2}}}(x^{m_{1,2}}) \{1+o(1)\},
	\end{align*}
uniformly for $h \in [n^{-\beta}, n^{-\alpha}]$.  It follows from the previous calculation and~\eqref{eq:denbias}, that 
	\[
	 \mathrm{Cov}(\hat{f}_{0,m_1},\hat{f}_{0,m_2}) =  \frac{\nu_{m_{1},m_{2}}f_{X^{m_{1,2}}}(x^{m_{1,2}})}{n_0 h^{d_{m^{1,2}}}}\{1+o(1)\},
    \]
uniformly for $h \in [n^{-\beta}, n^{-\alpha}]$. This proves that first claim that $\mathrm{Cov}(\hat{f}_{0,\mathcal{M}}) + \mathrm{Cov}(\hat{f}_{\mathcal{M}}) = \Lambda_D(1+o(1))$.
	
	Finally, since $\hat{f}_0$ and $\hat{f}_{\mathcal{M}}$ are independent, we have that
	\begin{align*}
	\mathbb{E}[(\hat{f}_{0,\mathcal{M}} - \hat{f}_{\mathcal{M}}) \{\hat{f}_0  - f_X(x)\}] &= \mathbb{E}[(\hat{f}_{0,\mathcal{M}} - \hat{f}_{\mathcal{M}} ) \{\hat{f}_0 - \mathbb{E}(\hat{f}_0)  + \mathbb{E}(\hat{f}_0) - f_X(x)\}]
	\\ & = \mathbb{E}[\{\hat{f}_{0,\mathcal{M}} - \mathbb{E}(\hat{f}_{0,\mathcal{M}}) - \hat{f}_{\mathcal{M}}+ \mathbb{E}(\hat{f}_{\mathcal{M}})\} \{\hat{f}_0 - \mathbb{E}(\hat{f}_0)\}]
	\\ & = \mathrm{Cov}(\hat{f}_{0,\mathcal{M}}, \hat{f}_0) +  \mathrm{Cov}(\hat{f}_{\mathcal{M}}, \hat{f}_0) =  \mathrm{Cov}(\hat{f}_{0,\mathcal{M}}, \hat{f}_0).
	\end{align*} 
	Then, reusing the covariance calculation above, for $m \in \mathcal{M}$, we have
	\begin{align*}
	\mathrm{Cov} (\hat{f}_{0}, \hat{f}_{0,m}) & = \frac{1}{n_0 h^{d+d_m}} \mathrm{Cov}\Bigl\{ K\Bigl(\frac{X - x}{h}\Bigr),  K_m\Bigl(\frac{X^m - x^m}{h}\Bigr) \Bigr\} = \frac{\nu_{0,m}f_X(x)}{n_0h^{d_m}} \{1+o(1)\},
	\end{align*}
	uniformly for $h \in [n^{-\beta}, n^{-\alpha}]$. This completes the proof of the second claim and hence concludes the proof of the theorem.
\end{proof}

\bigskip

\begin{proof}[Proof of Theorem~\ref{thm:RegressionPi}]
First, conditionally on the observed data, we have
\begin{align} \label{eq: MSE-all}
&\mathbb{E}[\{\hat{\eta}^{\mathcal{M}}_{\gamma} - \eta(x)\}^2 |X_1^{m_1}, \ldots, X_n^{m_n}] - \mathbb{E}[\{\hat{\eta}_{0} - \eta(x)\}^2 | X_1^{m_1}, \ldots, X_n^{m_n}] \nonumber
\\ & \hspace{30 pt} = \mathbb{E}\{(\hat{\eta}^{\mathcal{M}}_{\gamma} - \hat\eta_0)^2 |X_1^{m_1}, \ldots, X_n^{m_n}\} + 2\mathbb{E}[\{\hat{\eta}_{0} - \eta(x)\}\{\hat\eta^{\mathcal M}_{\gamma}-\hat\eta_0\} | X_1^{m_1}, \ldots, X_n^{m_n}]. 
\end{align}
We analyse the two terms in \eqref{eq: MSE-all} separately.  We first introduce some notation and facts that will be used repeatedly in the proof. Recall that we can write the local constant estimator as a linear function of the responses $Y_{[n]} := (Y_1, \ldots, Y_n)^T \in  \mathbb{R}^n$. Indeed, let 
\begin{equation}
\label{eq:weights}
W_{A,m} = W_{A,m,h,K_m} := \biggl(K_m\Bigl(\frac{X^{m}_1 - x^{m}}{h}\Bigr)\mathbbm{1}_{\{1 \in A\}}, \ldots, K_m\Bigl(\frac{X^{m}_n - x^{m}}{h}\Bigr)\mathbbm{1}_{\{n \in A\}}\biggr)^T.
\end{equation}
Then $\hat{\eta}_{A,m}(x^{m}) =  H_{A,m,h,K_m}^T Y_{[n]}$, where $H_{A,m} = H_{A,m,h,K_m} := (\sum_{i = 1}^{n} W_{A,m,i})^{-1} W_{A,m}$. For simplicity of presentation, write $H_0 = H_{A_0,0}$, $H_{0,m} = H_{A_0,m}$ and $H_m = H_{A_m,m}$, and let $H_{0,\mathcal M} = (H_{0,m} : m \in \mathcal{M})$ be the $n \times |\mathcal{M}|$ matrix with columns $H_{0,m}$, for $m \in \mathcal{M}$, and similarly let $H_{\mathcal{M}} = (H_m : m \in \mathcal{M}) \in \mathbb{R}^{n \times |\mathcal{M}|}$.

Next, let $E = \mathbb{E}(Y_{[n]} | X_1^{m_1}, \ldots, X_n^{m_n}) =  \bigl(\eta_{m_1}(X_1^{m_1}), \ldots, \eta_{m_n}(X_n^{m_n})\bigr)^T$ be the conditional expectation of the responses, and let $\Gamma :=  \mathrm{diag}\{\tau_{m_1}(X_1^{m_1}),\ldots, \tau_{m_n}(X_n^{m_n})\}$. Further,  let $\Gamma^* = \mathrm{diag}\{\tau_{m_1}(x^{m_1}),\ldots, \tau_{m_n}(x^{m_n})\}$ and  $E^* = \{\eta_{m_1}(x^{m_1}), \ldots, \eta_{m_n}(x^{m_n})\}^T$. We will make use of the following facts (i) $1_n^T H_{A,m} = 1$; (ii) $E^{*T} H_0 = \eta(x)$,  (iii) $H_m^T E^{*} = \eta_m(x^m)$,  (iv) $H_{0,m}^TE^{*} = \eta(x)$, (v) $H_m^T \Gamma^{*}H_m =\tau_m(x^m) H_m^TH_m$, (vi) $H_{0,m}^T \Gamma^{*} = 0$, (vii) $\Gamma H_0 = \Gamma^{*}H_0 = 0$ and (viii) $H_{\mathcal{M}}^TH_0 = 0$.

Furthermore, we claim that, for $m \in \mathcal{M}$, the following results are true
\begin{equation}
\label{eq:HH1}
H_{0}^T H_{0,m} = \frac{\nu_{0,m}}{\mu_{0,m}f_{X^m}(x^m) n_0 h^{d_m}}\{1+O_p(h)\};
\end{equation}
\begin{equation}
\label{eq:HH2}
H_{0,m}^T H_{0,m} = \frac{\nu_{m}}{\mu^2_{0,m}f_{X^m}(x^m) n_0 h^{d_m}}\{1+O_p(h)\};
\end{equation}
and 
\begin{equation}
\label{eq:HH3}
H_{m}^T H_{m} = \frac{\nu_{m}}{\mu^2_{0,m}f_{X^m}(x^m) n_m h^{d_m}}\{1+O_p(h)\},
\end{equation}
uniformly for $h \in [n^{-\beta}, n^{-\alpha}]$. Furthermore, for $m_1 \neq m_2 \in \mathcal{M}$, we have 
\begin{equation}
\label{eq:HH4}
H_{0,m_1}^T H_{0,m_2} = \frac{\nu_{m_{1},m_{2}} f_{X^{m_{1,2}}}(x^{m_{1,2}})} {\mu_{0,m_{1}}f_{X^{m_{1}}}(x^{m_{1}})\mu_{0,m_{2}}f_{X^{m_{2}}}(x^{m_{2}}) n_0 h^{d_{m^{1,2}}}}\{1+O_p(h)\};
\end{equation}
uniformly for $h \in [n^{-\beta}, n^{-\alpha}]$.  To see \eqref{eq:HH1}, write
\begin{align}
\label{eq:numer}
H_{0}^T H_{0,m} & = \frac{\sum_{i\in A_0}K\bigl(\frac{X_i - x}{h}\bigr)K_m\bigl(\frac{X^{m}_i - x^{m}}{h}\bigr)}{\sum_{i \in A_0} K\bigl(\frac{X_i - x}{h}\bigr) \sum_{i \in A_0} K_m\bigl(\frac{X^{m}_i - x^{m}}{h}\bigr)} \nonumber
\\ & = \frac{\frac{1}{n_0^2 h^{d+d_m}} \sum_{i\in A_0}K\bigl(\frac{X_i - x}{h}\bigr)K_m\bigl(\frac{X^{m}_i - x^{m}}{h}\bigr)}{\frac{1}{n_0h^d}\sum_{i \in A_0} K\bigl(\frac{X_i - x}{h}\bigr) \frac{1}{n_0h^{d_m}}\sum_{i \in A_0} K_m\bigl(\frac{X^{m}_i - x^{m}}{h}\bigr)}.
\end{align}
We consider the numerator and denominator above separately. For the numerator, observe that
\begin{align*}
&\mathbb{E}\Bigl\{\frac{1}{n_0^2 h^{d+d_m}} \sum_{i\in A_0}K\bigl(\frac{X_i - x}{h}\bigr)K_m\bigl(\frac{X^{m}_i - x^{m}}{h}\bigr)\Bigr\} 
\\ & \hspace{90pt} = \frac{1}{n_0 h^{d+d_m}} \int_{\mathbb{R}^d} K\bigl(\frac{z - x}{h}\bigr)K_m\bigl(\frac{z^{m} - x^{m}}{h}\bigr) f_X(z) \, dz 
\\ & \hspace{90pt}=  \frac{1}{n_0 h^{d_m}} \int_{\mathbb{R}^d} K(u)K_m(u^m) f_X(x + hu) \, dz 
\\ & \hspace{90pt} =  \frac{\nu_{0,m} f_X(x)}{n_0 h^{d_m}} + \frac{1}{n_0 h^{d_m}} \int_{\mathbb{R}^d} K(u)K_m(u^m) \{f_X(x + hu) - f_X(x)\} \, dz.
\end{align*}
Moreover, by Assumptions \textbf{A1} and \textbf{A2}, we have  
\[
\int_{\mathbb{R}^d} \Bigl| K(u)K_m(u^m) \{f_X(x + hu) - f_X(x)\} \Bigr| \, dz \leq Lh \int_{\mathbb{R}^d} K(u) K_m(u^m) \|u\| \, du \leq L h \mu_1 \bar{K}_m.
\]
Hence, using Markov's inequality, the numerator in \eqref{eq:numer} admits the following expression
\[
\frac{1}{n_0^2 h^{d+d_m}} \sum_{i\in A_0}K\Bigl(\frac{X_i - x}{h}\Bigr)K_m\Bigl(\frac{X^{m}_i - x^{m}}{h}\Bigr) = \frac{\nu_{0,m} f_X(x)}{n_0 h^{d_m}} + O_p(1/(n_0 h^{d_m-1})),
\]
uniformly for $h \in [n^{-\beta}, n^{-\alpha}]$. For the denominator, by appealing to similar arguments to those in the proof of Theorem~\ref{thm:DensityPi}, we have that
\[
\frac{1}{n_0h^d}\sum_{i \in A_0} K\Bigl(\frac{X_i - x}{h}\Bigr) = f_X(x) + O_p(h),
\]
and 
\[
\frac{1}{n_0h^{d_m}}\sum_{i \in A_0} K_m\Bigl(\frac{X^m_i - x^m}{h}\Bigr) = f_{X^m}(x^m) \mu_{0,m} + O_p(h),
\]
uniformly for $h \in [n^{-\beta}, n^{-\alpha}]$. The claim in \eqref{eq:HH1} then follows by Slutsky's Theorem.  Moreover, the claims in \eqref{eq:HH2} and \eqref{eq:HH3} follow by the same argument with only minor changes.   To see \eqref{eq:HH4}, observe that 
\begin{align*}
&\mathbb{E}\Bigl\{\frac{1}{n_0^2 h^{d_{m_1}+d_{m_2}}} \sum_{i\in A_0} K_{m_1}\bigl(\frac{X^{m_1}_i - x^{m_1}}{h}\bigr)K_{m_2}\bigl(\frac{X^{m_2}_i - x^{m_2}}{h}\bigr)\Bigr\} 
\\ & \hspace{60pt} = \frac{1}{n_0 h^{d_{m_1}+d_{m_2}}} \int_{\mathbb{R}^d} K_{m_1}\bigl(\frac{z^{m_1} - x^{m_1}}{h}\bigr)K_{m_2}\bigl(\frac{z^{m_2} - x^{m_2}}{h}\bigr) f_X(z) \, dz 
\\ & \hspace{60pt} = \frac{1}{n_0 h^{d_{m_1}+d_{m_2}}} \int_{\mathbb{R}^{d_{m_{1,2}}}} K_{m_1}\bigl(\frac{z^{m_1} - x^{m_1}}{h}\bigr)K_{m_2}\bigl(\frac{z^{m_2} - x^{m_2}}{h}\bigr) f_{X^{m_{1,2}}}(z^{m_{1,2}}) \, dz^{m_{1,2}} 
\\ & \hspace{60pt}=  \frac{1}{n_0 h^{d_{m_1}+d_{m_2} -d_{m_{1,2}}}} \int_{\mathbb{R}^{d_{m_{1,2}}}} K_{m_1}(u^{m_1}) K_{m_2}(u^{m_2}) f_{X^{m_{1,2}}}(x^{m_{1,2}} + h u^{m_{1,2}}) \, du^{m_{1,2}} 
\\ & \hspace{60pt} = \frac{\nu_{m_1,m_2}f_{X^{m_{1,2}}}(x^{m_{1,2}}) }{n_0 h^{d_{m^{1,2}}}}  + O(1/(n_0 h^{d_{m^{1,2}}-1})),
\end{align*}
uniformly for $h \in [n^{-\beta}, n^{-\alpha}]$. In the last step above we have used the fact that, by Assumptions \textbf{A1} and \textbf{A2}, we have   
\begin{align*}
&\int_{\mathbb{R}^{d_{m_{1,2}}}} K_{m_1}(u^{m_1}) K_{m_2}(u^{m_2}) \{f_{X^{m_{1,2}}}(x^{m_{1,2}} + h u^{m_{1,2}}) - f_{X^{m_{1,2}}}(x^{m_{1,2}})\} \, du^{m_{1,2}} 
\\ & \hspace{20pt}  \leq Lh \int_{\mathbb{R}^{d_{m_{1,2}}}} K_{m_1}(u^{m_1}) K_{m_2}(u^{m_2}) \|u^{m_{1,2}}\| \, du^{m_{1,2}}
\\ & \hspace{20pt}  \leq Lh \int_{\mathbb{R}^{d_{m_{1,2}}}} K_{m_1}(u^{m_1}) K_{m_2}(u^{m_2}) (\|u^{m_{1}}\| + \|u^{m_{2}}\|) \, du^{m_{1,2}} \leq L h(\mu_{1, m_1} \bar{K}_{m_2} + \mu_{1, m_2} \bar{K}_{m_1}).
\end{align*}
The claim in \eqref{eq:HH4} then follows by similar arguments used to prove \eqref{eq:HH1}.  We now return to the main argument.

\textbf{Part I: the first term in \eqref{eq: MSE-all}}:  By definition of $Y_{[n]}$ we have  
\[
\mathbb{E}(Y_{[n]}Y_{[n]}^T | X_1^{m_1}, \ldots, X_n^{m_n}) = E E^T + \sigma^2 I_{n\times n} + \Gamma.
\]
It follows from the definitions of $\hat{\eta}_{\gamma}^{\mathcal{M}}$ and $\hat{\eta}_{0}$ that the first term in \eqref{eq: MSE-all} takes the form
\begin{align}
\label{eq:mse1}
\mathbb{E}\{(\hat{\eta}_{\gamma}^{\mathcal{M}} - \hat{\eta}_{0})^2| X_1^{m_1}, \ldots, X_n^{m_n}\} & = \gamma^T \bigl(H_{0,\mathcal{M}}^T - H_{\mathcal{M}}^T\bigr) (EE^T + \sigma^2 I_{n\times n} + \Gamma) \bigl(H_{0,\mathcal{M}} - H_{\mathcal{M}}\bigr)\gamma \nonumber\\
& = \gamma^T\Lambda_R\gamma + R_1 + R_2,
\end{align}
where
\[
R_1 :=  \gamma^T  H_{\mathcal{M}}^T (\Gamma - \Gamma^*) H_{\mathcal{M}} \gamma
\]
and 
\[
R_2  := \{\gamma^T(H_{0,\mathcal{M}}^T - H_{\mathcal{M}}^T) E\}^2 +   \sigma^2 \gamma^T  H_{0,\mathcal{M}}^T H_{0,\mathcal{M}}\gamma + \gamma^T H_{\mathcal{M}}^T (\sigma^2I + \Gamma^*)H_{\mathcal{M}} \gamma - \gamma^T \Lambda_{\mathrm{R}} \gamma.
\]
Here we have used that $H_{0,\mathcal{M}}^T (\Gamma - \Gamma^*) H_{0,\mathcal{M}} = 0$. We next show that both $R_1$ and $R_2$ are small order terms.  

\textit{To bound $R_{1}$}: By assumption \textbf{A4},  we have that
\[
|\tau_{m}(x_i^{m}) - \tau_{m}(x^{m})| \leq L \|x_i^{m} - x^{m}\|.
\]
Moreover, similarly to \eqref{eq:HH3}, by Assumption \textbf{A2}, we have
\begin{align*}
H_{m}^T\mathrm{diag}(\|X_i^{m_i} - x^{m_i}\|, i = 1, \ldots, n)H_{m} &=  \frac{\sum_{i\in A_m} \|X_i^m - x^m\| K_m^2\bigl(\frac{X^{m}_i - x^{m}}{h}\bigr)}{\sum_{i \in A_m} K_m\bigl(\frac{X_i^m - x^m}{h}\bigr) \sum_{i \in A_m} K_m\bigl(\frac{X^{m}_i - x^{m}}{h}\bigr)}
\\ & \leq \frac{\frac{1}{n_m^2 h^{2d_m-1}} \sum_{i\in A_m} \bar{K}_m K_m\bigl(\frac{X^{m}_i - x^{m}}{h}\bigr)}{\{\frac{1}{n_mh^{d_m}}\sum_{i \in A_m} K_m\bigl(\frac{X^{m}_i - x^{m}}{h}\bigr)\}^2} 
\\ & = O_p\bigl(\frac{1}{n_m h^{d_m-1}}\bigr).
\end{align*}
uniformly for $h \in [n^{-\beta}, n^{-\alpha}]$.   Thus
\begin{align*}
|R_1| & = \Bigl|\gamma^TH_{\mathcal{M}}^T ( \Gamma - \Gamma^*)H_{\mathcal{M}}\gamma\Bigr| 
\\ & \leq \sum_{m \in \mathcal{M}} \gamma_m^2H_{m}^T\mathrm{diag}(\|X_i^{m_i} - x^{m_i}\|, i = 1, \ldots, n)H_{m} =  O_p \Bigl(\sum_{m\in \mathcal{M}}  \frac{\gamma_m^2}{n_m h^{d_m-1}}\Bigr),
\end{align*}
uniformly for $h \in [n^{-\beta}, n^{-\alpha}]$. 

\textit{To bound $R_2$}: First we decompose $R_2$ into the sum of three terms.  To that end, let
\begin{align*}
R_{21} &:= \{\gamma^T(H_{0,\mathcal{M}}^T - H_{\mathcal{M}}^T) E\}^2 - \sum_{m\in \mathcal{M}} \frac{\gamma_m^2\nu_{m} \tau_m(x^m)}{\mu^2_{0,m}f_{X^m}(x^m) n_0 h^{d_m}} 
\\ & \hspace{120pt} - \sum_{m_1 \neq m_2 \in \mathcal{M}}\frac{ \gamma_{m_1}\gamma_{m_2}  \tau_{m_1,m_2}(x^{m_{1,2}}) \nu_{m_{1},m_{2}} f_{X^{m_{1,2}}}(x^{m_{1,2}})} {\mu_{0,m_{1}}f_{X^{m_{1}}}(x^{m_{1}})\mu_{0,m_{2}}f_{X^{m_{2}}}(x^{m_{2}}) n_0 h^{d_{m^{1,2}}}};
\end{align*} 
\begin{align*} 
R_{22} &:= \sigma^2 \gamma^T  H_{0,\mathcal{M}}^T H_{0,\mathcal{M}}\gamma - \sum_{m\in \mathcal{M}} \frac{\gamma_m^2\nu_{m} \sigma^2}{\mu^2_{0,m}f_{X^m}(x^m) n_0 h^{d_m}} 
\\ & \hspace{120pt} - \sum_{m_1 \neq m_2 \in \mathcal{M}}\frac{ \gamma_{m_1}\gamma_{m_2}  \sigma^2 \nu_{m_{1},m_{2}} f_{X^{m_{1,2}}}(x^{m_{1,2}})} {\mu_{0,m_{1}}f_{X^{m_{1}}}(x^{m_{1}})\mu_{0,m_{2}}f_{X^{m_{2}}}(x^{m_{2}}) n_0 h^{d_{m^{1,2}}}};
\end{align*} 
and
\begin{align*} 
R_{23} := \gamma^T H_{\mathcal{M}}^T (\sigma^2I + \Gamma^*)H_{\mathcal{M}} \gamma - \sum_{m\in \mathcal{M}} \frac{\gamma_m^2\nu_{m} \{\sigma^2 + \tau_m(x^m)\}}{\mu^2_{0,m}f_{X^m}(x^m) n_m h^{d_m}}.
\end{align*}
Note that $R_2 = R_{21} + R_{22} + R_{23}$, since
\begin{align*} 
\gamma^T \Lambda_{\mathrm{R}} \gamma &= \sum_{m\in \mathcal{M}} \frac{\gamma_m^2\nu_{m} \{\sigma^2 + \tau_m(x^m)\}}{\mu^2_{0,m}f_{X^m}(x^m) n_0 h^{d_m}} +  \sum_{m\in \mathcal{M}} \frac{\gamma_m^2\nu_{m} \{\sigma^2 + \tau_m(x^m)\}}{\mu^2_{0,m}f_{X^m}(x^m) n_m h^{d_m}}.
\\ & \hspace{120pt} - \sum_{m_1 \neq m_2 \in \mathcal{M}}\frac{ \gamma_{m_1}\gamma_{m_2} \{\sigma^2 +  \tau_{m_1,m_2}(x^{m_{1,2}})\} \nu_{m_{1},m_{2}} f_{X^{m_{1,2}}}(x^{m_{1,2}})} {\mu_{0,m_{1}}f_{X^{m_{1}}}(x^{m_{1}})\mu_{0,m_{2}}f_{X^{m_{2}}}(x^{m_{2}}) n_0 h^{d_{m^{1,2}}}}.
\end{align*} 
We now bound $R_{21}$, $R_{22}$ and $R_{23}$ in turn. 

\textit{To bound $R_{21}$:} Write $E_m =(\eta_{m}(X_1^{m}), \ldots \eta_{m}(X_n^{m}))^T$ and let $\delta_m = E - E_m$. Then $H_m\delta_m = 0$ and
\begin{align}
\nonumber \gamma^T \bigl(H_{0,\mathcal{M}}^T - H_{\mathcal{M}}^T\bigr)E & = \sum_{m\in \mathcal{M}} \gamma_m\bigl(H_{0,m}^T - H_{m}^T\bigr) E 
\\ \nonumber & = \sum_{m\in \mathcal{M}} \gamma_m\bigl(H_{0,m}^T - H_{m}^T\bigr) (E - E_m + E_m)
\\ & = \sum_{m\in \mathcal{M}} \gamma_m\bigl\{H_{0,m}^T\delta_m + (H_{0,m}^T - H_{m}^T)E_m\bigr\}.\label{eq:001}
\end{align}
For $m \in \mathcal{M}$, using the fact $H_{0,m}^T1_n = H_{m}^T1_n= 1$ we have that
\begin{align*}
\label{eq:by1}
&|(H_{0,m}^T - H_{m}^T)E_m|  = |(H_{0,m}^T - H_{m}^T)\{E_m - \eta_m(x^m) 1_n \}|
\\ & = \Bigl| \frac{\frac{1}{n_0h^{d_m}} \sum_{i\in A_0} K_m\bigl(\frac{X^{m}_i - x^{m}}{h}\bigr)\{\eta_m(X_i^m) - \eta_m(x^m)\}}{\frac{1}{n_0h^{d_m}} \sum_{i \in A_0} K_m\bigl(\frac{X^{m}_i - x^{m}}{h}\bigr)}  
\\& \hspace{120pt}  -  \frac{\frac{1}{n_mh^{d_m}} \sum_{i\in A_m} K_m\bigl(\frac{X^{m}_i - x^{m}}{h}\bigr) \{\eta_m(X_i^m) - \eta_m(x^m)\}}{\frac{1}{n_mh^{d_m}}\sum_{i \in A_m} K_m\bigl(\frac{X^{m}_i - x^{m}}{h}\bigr)}\Bigr|
\\ & = \frac{1}{\hat{f}_{0,m} \hat{f}_m n_0n_mh^{2d_m}} \Bigl|\sum_{i\in A_0, j \in A_m} \Bigl[K_m\Bigl(\frac{X^{m}_i - x^{m}}{h}\Bigr)\{\eta_m(X_i^m) - \eta_m(x^m)\}  K_m\Bigl(\frac{X^{m}_j - x^{m}}{h}\Bigr)
\\ & \hspace{120pt} - K_m \Bigl(\frac{X^{m}_j - x^{m}}{h}\Bigr) \{\eta_m(X_j^m) - \eta_m(x^m)\}  K_m\Bigl(\frac{X^{m}_i - x^{m}}{h}\Bigr) \Bigr] \Bigr|,
\end{align*}
The last line in the display above is the sum of $n_0n_m$ mean zero terms. Moreover, for each term we have
\begin{align*}
&\mathbb{E}\Bigl[K_m\bigl(\frac{X^{m}_i - x^{m}}{h}\bigr)\{\eta_m(X_i^m) - \eta_m(x^m)\}  K_m\bigl(\frac{X^{m}_j - x^{m}}{h}\bigr)
\\ & \hspace{120pt} - K_m \bigl(\frac{X^{m}_j - x^{m}}{h}\bigr) \{\eta_m(X_j^m) - \eta_m(x^m)\}  K_m\bigl(\frac{X^{m}_i - x^{m}}{h}\bigr) \Bigr]^2 
\\ & = \int_{\mathbb{R}^{d_m}} \int_{\mathbb{R}^{d_m}}  \Bigl[K_m\bigl(\frac{z_1 - x^{m}}{h}\bigr)\{\eta_m(z_1) - \eta_m(x^m)\}  K_m\bigl(\frac{z_2 - x^{m}}{h}\bigr)
\\ & \hspace{60pt} - K_m \bigl(\frac{z_2 - x^{m}}{h}\bigr) \{\eta_m(z_2) - \eta_m(x^m)\}  K_m\bigl(\frac{z_1 - x^{m}}{h}\bigr) \Bigr]^2  f_{X^m}(z_1^m) f_{X^m}(z_2^m) \, dz_1 dz_2
\\ & = 2 \int_{\mathbb{R}^{d_m}} \int_{\mathbb{R}^{d_m}}  \Bigl[K_m^2\bigl(\frac{z_1 - x^{m}}{h}\bigr)\{\eta_m(z_1) - \eta_m(x^m)\}^2 K_m^2\bigl(\frac{z_2 - x^{m}}{h}\bigr)  f_{X^m}(z_1^m) f_{X^m}(z_2^m)\, dz_1 dz_2
\\ & \hspace{60pt} - 2 \int_{\mathbb{R}^{d_m}} \int_{\mathbb{R}^{d_m}}  K_m^2 \bigl(\frac{z_2 - x^{m}}{h}\bigr) \{\eta_m(z_1) - \eta_m(x^m)\}
\\ & \hspace{120pt}  \{\eta_m(z_2) - \eta_m(x^m)\}  K_m^2\bigl(\frac{z_1 - x^{m}}{h}\bigr)   f_{X^m}(z_1^m) f_{X^m}(z_2^m) \, dz_1 dz_2
\\ & \leq 2 L^2 h^{2d_m + 2} \int_{\mathbb{R}^{d_m}} K_m^2(u) \|u\|^2  f_{X^m}(x^m + hu)  \, du \int_{\mathbb{R}^{d_m}} K_m^2(u)  f_{X^m}(x^m + hu)  \, dz_u 
\\ & \hspace{120 pt} + 2 L^2 h^{2d_m + 2}  \Bigl\{\int_{\mathbb{R}^{d_m}} K_m(u) \|u\| f_{X^m}(x^m + hu) \, du \Bigr\}^2 
\\ & \leq  2 L^2 h^{2d_m + 2} \{f_{X^m}(x^m) \mu_{1,m}\bar{K}_m + Lh\mu_{2,m}\bar{K}_m  + (f_{X^m}(x^m) \mu_{1,m} + Lh  \mu_{1,m}\bar{K}_m)^2\}.
\end{align*} 
Therefore, by Markov's inequality we obtain that 
\begin{align*} 
&\frac{1}{n_0n_mh^{2d_m}} \Bigl|\sum_{i\in A_0, j \in A_m} \Bigl[K_m\bigl(\frac{X^{m}_i - x^{m}}{h}\bigr)\{\eta_m(X_i^m) - \eta_m(x^m)\}  K_m\bigl(\frac{X^{m}_j - x^{m}}{h}\bigr)
\\ & \hspace{30pt} - K_m \bigl(\frac{X^{m}_j - x^{m}}{h}\bigr) \{\eta_m(X_j^m) - \eta_m(x^m)\}  K_m\bigl(\frac{X^{m}_i - x^{m}}{h}\bigr) \Bigr] \Bigr|
= O_p\Bigl(\frac{1}{n_0^{1/2}n_m^{1/2} h^{d_m-1}}\Bigr), 
\end{align*} 
uniformly for $h \in [n^{-\beta}, n^{-\alpha}]$.   Using also the fact that $\hat{f}_{0,m} = f_{X^m}(x^m) + O_p(h)$ and $\hat{f}_{m} = f_{X^m}(x^m) + O_p(h)$, we have by Slutsky's Theorem that
\begin{equation}
\label{eq:R21one}
|(H_{0,m}^T - H_{m}^T)E_m| = O_p\Bigl(\frac{1}{n_0^{1/2}n_m^{1/2} h^{d_m-1}}\Bigr),
\end{equation}
uniformly for $h \in [n^{-\beta}, n^{-\alpha}]$.  Now, for the terms involving $\delta_m$ in \eqref{eq:001}, first write  
\begin{align*} 
H_{0,m}^T\delta_m &=  \frac{\frac{1}{n_0h^{d_m}} \sum_{i\in A_0} K_m\bigl(\frac{X^{m}_i - x^{m}}{h}\bigr)\{\eta(X_i) - \eta_m(X_i^m)\}} {\frac{1}{n_0h^{d_m}} \sum_{i \in A_0} K_m\bigl(\frac{X^{m}_i - x^{m}}{h}\bigr)} 
\\ & = \frac{1}{\hat{f}_{0,m} n_0h^{d_m}} \sum_{i\in A_0} K_m\bigl(\frac{X^{m}_i - x^{m}}{h}\bigr)\{\eta(X_i) - \eta_m(X_i^m)\}.
\end{align*} 
The last term is a sum of $n_0$ mean zero terms, with
\begin{align*} 
 \mathbb{E} \Bigl[K_m^2\bigl(\frac{X^{m}_i - x^{m}}{h}\bigr)  \{\eta(X_i) - \eta_m(X_i^m)\}^2\Bigr] & =  \mathbb{E} \Big\{K_m^2\bigl(\frac{X^{m}_i - x^{m}}{h}\bigr) \tau_m(X_i^m) \Bigr\} 
\\ & =  \int_{\mathbb{R}^{d_m}} K_m^2\bigl(\frac{z^{m} - x^{m}}{h}\bigr) \tau_m(z^m) f_{X^m}(z^m) \, dz 
\\ & = h^{d_m}  \int_{\mathbb{R}^{d_m}} K_m^2(u) \tau_m(x^m + hu) f_{X^m}(x^m + hu) \, du  
\\ & \hspace{ -120 pt} \leq  h^{d_m} [\tau_m(x^m) f_{X^m} (x^m)  +   Lh \bar{K}_m \mu_{0,m} \{\tau_m(x^m)  + f_{X^m} (x^m)\} + L^2 h^2 \bar{K}_m \mu_{1,m}].
\end{align*} 
Thus by Markov's inequality we have $|H_{0,m}^T\delta_m| = O_p(n_0^{-1/2} h^{-d_m/2} )$, uniformly for $h \in [n^{-\beta}, n^{-\alpha}]$, and it follows immediately that $|(H_{0,m}^T-H_m^T)E_m| = O_p\Bigl(h (H_{0,m}^T\delta_m)^2\Bigr)$, uniformly for $h \in [n^{-\beta}, n^{-\alpha}]$.  Thus, the first term in $R_{21}$ can be written as
\[
\{\gamma^T \bigl(H_{0,\mathcal{M}}^T - H_{\mathcal{M}}^T\bigr)E\}^2 = \Bigl(\sum_{m\in \mathcal{M}} \gamma_m H_{0,m}^T\delta_m\Bigr)^2 \{1 +O_p(h)\},
\]
uniformly for $h \in [n^{-\beta}, n^{-\alpha}]$.   Now observe that
\begin{align*}
& \mathbb{E}\{\sum_{m\in \mathcal{M}} \gamma_m^2 (H_{0,m}^T\delta_m)^2 | X_1^m, \ldots, X_n^m\}  - \sum_{m\in \mathcal{M}} \frac{\gamma_m^2\nu_{m} \tau_m(x^m)}{\mu^2_{0,m}f_{X^m}(x^m) n_0 h^{d_m}} 
\\ & = \sum_{m\in \mathcal{M}} \gamma_m^2 H_{0,m}^T\mathbb{E}\{\delta_m \delta_m^T  | X_1^m, \ldots, X_n^m\} H_{0,m}   - \sum_{m\in \mathcal{M}} \frac{\gamma_m^2\nu_{m} \tau_m(x^m)}{\mu^2_{0,m}f_{X^m}(x^m) n_0 h^{d_m}} 
\\ & = \sum_{m\in \mathcal{M}} \gamma_m^2 H_{0,m}^T \mathrm{diag}\{\tau_m(X_1^m), \ldots, \tau_m(X_n^m)\} H_{0,m}  - \sum_{m\in \mathcal{M}} \frac{\gamma_m^2\nu_{m} \tau_m(x^m)}{\mu^2_{0,m}f_{X^m}(x^m) n_0 h^{d_m}} 
\\ & = \sum_{m\in \mathcal{M}} \gamma_m^2 H_{0,m}^T \mathrm{diag}\{\tau_m(X_1^m), \ldots, \tau_m(X_n^m)\} H_{0,m} - \sum_{m\in \mathcal{M}} \gamma_m^2 \tau_{m}(x^m) H_{0,m}^T H_{0,m}  
\\ & \hspace{120pt} +  \sum_{m\in \mathcal{M}} \gamma_m^2 \tau_{m}(x^m) H_{0,m}^T H_{0,m}  - \sum_{m\in \mathcal{M}} \frac{\gamma_m^2\nu_{m} \tau_m(x^m)}{\mu^2_{0,m}f_{X^m}(x^m) n_0 h^{d_m}}.
\end{align*}
Then, by assumption \textbf{A4}, we have that
\begin{align}
\label{eq:by2}
&\Bigl| \sum_{m\in \mathcal{M}} \gamma_m^2 H_{0,m}^T \mathrm{diag}\{\tau_m(X_1^m), \ldots, \tau_m(X_n^m)\} H_{0,m} - \sum_{m\in \mathcal{M}} \gamma_m^2 \tau_{m}(x^m) H_{0,m}^T H_{0,m}  \Bigr|\nonumber
\\  &=\Bigl| \sum_{m\in \mathcal{M}} \gamma_m^2 H_{0,m}^T \mathrm{diag} \{\tau_m(X_i^m) - \tau_m(x^m)\} H_{0,m} \Bigr| \nonumber
\\ & \leq \sum_{m\in \mathcal{M}} \gamma_m^2 \frac{\sum_{i \in A_0} K_m^2\bigl(\frac{X^{m}_i - x^{m}}{h}\bigr)|\tau_m(X_i^m) - \tau_m(x^m)|}{ \{\sum_{i \in A_0} K_m\bigl(\frac{X^{m}_i - x^{m}}{h}\bigr)\}^2 } \nonumber
\\ &  \leq L  \sum_{m\in \mathcal{M}} \gamma_m^2 \frac{\frac{1}{n_0^2 h^{2d_m}} \sum_{i \in A_0} K_m^2\bigl(\frac{X^{m}_i - x^{m}}{h}\bigr) \| X_i^m - x^m\| }{\hat{f}_{0,m}} = O_p\Bigl( \sum_{m\in \mathcal{M}} \gamma_m^2 \frac{1}{n_0h^{d_m -1}}\Bigr),
\end{align}
uniformly for $h \in [n^{-\beta}, n^{-\alpha}]$.  Moreover, by \eqref{eq:HH2}, we have   
\begin{equation}
\label{eq:R21two}
\Bigl| \sum_{m\in \mathcal{M}} \gamma_m^2 \tau_{m}(x^m) H_{0,m}^T H_{0,m}  - \sum_{m\in \mathcal{M}} \frac{\gamma_m^2\nu_{m} \tau_m(x^m)}{\mu^2_{0,m}f_{X^m}(x^m) n_0 h^{d_m}} \Bigr| = O_p\Bigl( \sum_{m\in \mathcal{M}} \gamma_m^2 \frac{1}{n_0h^{d_m -1}}\Bigr),
 \end{equation}
 uniformly for $h \in [n^{-\beta}, n^{-\alpha}]$.  Similarly
\begin{align*}
& \mathbb{E}\Bigl\{\sum_{m_1 \neq m_2 \in \mathcal{M}} \gamma_{m_1}\gamma_{m_2} H_{0,m_1}^T\delta_{m_1} H_{0, m_2}^T\delta_{m_2}\Big| X_1^{m_{1,2}}, \ldots, X_n^{m_{1,2}}\Bigr\} 
\\ & \hspace{60pt} - \sum_{m_1 \neq m_2 \in \mathcal{M}}\frac{ \gamma_{m_1}\gamma_{m_2}  \tau_{m_1,m_2}(x^{m_{1,2}}) \nu_{m_{1},m_{2}} f_{X^{m_{1,2}}}(x^{m_{1,2}})} {\mu_{0,m_{1}}f_{X^{m_{1}}}(x^{m_{1}})\mu_{0,m_{2}}f_{X^{m_{2}}}(x^{m_{2}}) n_0 h^{d_{m^{1,2}}}};
\\& \hspace{30pt}  = \sum_{m_1 \neq m_2 \in \mathcal{M}} \gamma_{m_1}\gamma_{m_2} H_{0,m_1}^T\biggl[\mathbb{E}\Bigl\{ \delta_{m_1} \delta_{m_2}^T \Big| X_1^{m_{1,2}}, \ldots, X_n^{m_{1,2}}\Bigr\} - \tau_{m_1, m_2}(x^{m_{1,2}}) I_{n \times n} \biggr]  H_{0, m_2} 
\\ &  \hspace{90pt} +  \sum_{m_1 \neq m_2 \in \mathcal{M}} \gamma_{m_1}\gamma_{m_2} H_{0,m_1}^T H_{0, m_2} \tau_{m_1, m_2}(x^{m_{1,2}})  
\\ & \hspace{120pt} - \sum_{m_1 \neq m_2 \in \mathcal{M}}\frac{ \gamma_{m_1}\gamma_{m_2}  \tau_{m_1,m_2}(x^{m_{1,2}}) \nu_{m_{1},m_{2}} f_{X^{m_{1,2}}}(x^{m_{1,2}})} {\mu_{0,m_{1}}f_{X^{m_{1}}}(x^{m_{1}})\mu_{0,m_{2}}f_{X^{m_{2}}}(x^{m_{2}}) n_0 h^{d_{m^{1,2}}}}.
\end{align*} 
Now, by the last part of Assumption \textbf{A4} and similar arguments to those used above, we have
\begin{align}
\label{eq:by3} 
&\Bigl| \sum_{m_1 \neq m_2 \in \mathcal{M} }  H_{0,m_1}^T\biggl[\mathbb{E}\Bigl\{ \delta_{m_1} \delta_{m_2}^T \Big| X_1^{m_{1,2}}, \ldots, X_n^{m_{1,2}}\Bigr\} - \tau_{m_1, m_2}(x^{m_{1,2}}) I_{n \times n} \biggr] H_{0, m_2} \Bigr| \nonumber
\\ & \hspace{120pt}  =  O_p\Bigl(\sum_{m_1 \neq m_2 \in \mathcal{M} } \gamma_{m_1} \gamma_{m_2}  \frac{1}{n_0 h^{d_{m^{1,2}} -1}}\Bigr),
\end{align}
uniformly for $h \in [n^{-\beta}, n^{-\alpha}]$.  Furthermore, by \eqref{eq:HH4}, we have 
\begin{align} 
\label{eq:R21three}
& \Bigl|\sum_{m_1 \neq m_2 \in \mathcal{M}} \! \! \gamma_{m_1}\gamma_{m_2} H_{0,m_1}^T H_{0, m_2} \tau_{m_1, m_2}(x^{m_{1,2}})  - \! \! \! \sum_{m_1 \neq m_2 \in \mathcal{M}}\frac{ \gamma_{m_1}\gamma_{m_2}  \tau_{m_1,m_2}(x^{m_{1,2}}) \nu_{m_{1},m_{2}} f_{X^{m_{1,2}}}(x^{m_{1,2}})} {\mu_{0,m_{1}}f_{X^{m_{1}}}(x^{m_{1}})\mu_{0,m_{2}}f_{X^{m_{2}}}(x^{m_{2}}) n_0 h^{d_{m^{1,2}}}}\Bigr| \nonumber
\\ & \hspace{150pt} = O_p\Bigl( \sum_{m_1 \neq m_2 \in \mathcal{M}} \frac{1} {n_0 h^{d_{m^{1,2}}-1}}\Bigr),
\end{align} 
uniformly for $h \in [n^{-\beta}, n^{-\alpha}]$.  Then, using \eqref{eq:R21one}, \eqref{eq:by2}, \eqref{eq:R21two}, \eqref{eq:by3} and \eqref{eq:R21three}, we conclude that 
\[
|R_{21}| = O_p\Bigl(\sum_{m \in \mathcal{M}} \frac{1}{n_0 h^{d_m-1}} +  \sum_{m_1 \neq m_2 \in \mathcal{M}} \frac{1} {n_0 h^{d_{m^{1,2}}-1}}\Bigr),
\]
uniformly for $h \in [n^{-\beta}, n^{-\alpha}]$. 

Furthermore, by \eqref{eq:HH2} and \eqref{eq:HH4}, we have that 
\[
|R_{22}| = O_p \Bigl(\sum_{m \in \mathcal{M}} \frac{1}{n_0 h^{d_m-1}} +  \sum_{m_1 \neq m_2 \in \mathcal{M}} \frac{1} {n_0 h^{d_{m^{1,2}}-1}}\Bigr),
\]
uniformly for $h \in [n^{-\beta}, n^{-\alpha}]$. Finally,  by \eqref{eq:HH3} we have 
\[
|R_{23}| = O_p \Bigl(\sum_{m \in \mathcal{M}} \frac{1}{n_m h^{d_m-1}} \Bigr),
\]
uniformly for $h \in [n^{-\beta}, n^{-\alpha}]$.  This concludes the bound on $R_2$ and Part I of the proof. 
 
\textbf{Part II: the second term in \eqref{eq: MSE-all}}. Using the definition of local linear estimators and noting that $\Gamma H_0 = 0$ and $H_{\mathcal M}^TH_0=0$, the second term in \eqref{eq: MSE-all} can be written as
\begin{align}
\label{eq:mse2}
& \mathbb{E}[(\hat{\eta}_\gamma^{\mathcal{M}} - \hat{\eta}_{0})\{\hat{\eta}_{0} - \eta(x)\} | X_1^{m_1}, \ldots, X_n^{m_n}] \nonumber
\\ &\hspace{5pt}  = \mathbb{E}\Bigl[(\hat{\eta}_\gamma^{\mathcal{M}} - \hat{\eta}_{0})\bigl\{\hat{\eta}_{0}- \mathbb{E}(\hat{\eta}_{0}|  X_1^{m_1}, \ldots, X_n^{m_n})+ \mathbb{E}(\hat{\eta}_{0}|  X_1^{m_1}, \ldots, X_n^{m_n})  -  \eta(x)\bigr\} \big| X_1^{m_1}, \ldots, X_n^{m_n}\Bigr] \nonumber
\\ & \hspace{5 pt} = -\gamma^T\bigl(H_{0,\mathcal{M}}^T - H_{\mathcal{M}}^T\bigr)  \mathbb{E}\{Y_{[n]}(Y_{[n]}^T - E^T)| X_1^{m_1}, \ldots, X_n^{m_n} \} H_0  \nonumber
\\ & \hspace{150 pt} - \gamma^T\bigl(H_{0,\mathcal{M}}^T - H_{\mathcal{M}}^T\bigr) E\{E^TH_0 - \eta(x)\} \nonumber
\\ & \hspace{5 pt} = -\gamma^T\bigl(H_{0,\mathcal{M}}^T - H_{\mathcal{M}}^T\bigr) (\sigma^2 I + \Gamma) H_0   - \gamma^T\bigl(H_{0,\mathcal{M}}^T - H_{\mathcal{M}}^T\bigr) E\{E^TH_0 - \eta(x)\}\nonumber
\\ & \hspace{5 pt} = -\sigma^2\gamma^TH_{0,\mathcal{M}}^T H_0-\gamma^T\bigl(H_{0,\mathcal{M}}^T - H_{\mathcal{M}}^T\bigr) E\{E^TH_0 - \eta(x)\}.
\end{align}

For the first term in \eqref{eq:mse2}, by \eqref{eq:HH1} we have 
\[
\sigma^2 \gamma^T H_{0,\mathcal{M}}^T H_{0}  =  \sigma^2 \gamma^T\Omega_{\mathrm{R}} \{1+O_p(h)\},
\]
uniformly for $h \in [n^{-\beta}, n^{-\alpha}]$.

It remains to bound the second term. Let 
\[
R_3 = -\gamma^T(H_{0,\mathcal M}-H_{\mathcal M})E\{E^TH_0-\eta(x)\}.    
\]
Recall from Part I of this proof that $\gamma^T(H_{0,\mathcal M}-H_{\mathcal M})E = O_p\bigl(\sum_{m \in \mathcal{M}} \frac{\gamma_m}{n_0 h^{d_{m}}} \bigr)$. Moreover, by assumption \textbf{A3}, we have that
\[
|\eta(X_i) - \eta(x)| \leq L \|X_i - x\|,
\]
for $i = 1, \ldots, n$. Recall that $E^{*T}H_0 = \eta(x)$. Therefore
\begin{align*}
|E^TH_0 - \eta(x)| &= |(E^T - E^{*T})H_0|  \leq \frac{\sum_{i \in A_0} K\bigl(\frac{X_i - x}{h}\bigr) |\eta(X) - \eta(x)|}{\sum_{i \in A_0} K\bigl(\frac{X_i - x}{h}\bigr)} 
\\ & \leq \frac{1}{\hat{f}_0 n_0 h^d} \sum_{i \in A_0} K\bigl(\frac{X_i - x}{h}\bigr) |\eta(X) - \eta(x)| = O_p(h),
\end{align*} 
uniformly for $h \in [n^{-\beta}, n^{-\alpha}]$, where the last equality is by Markov's inequality due to the facts that $K\bigl(\frac{X_i - x}{h}\bigr) |\eta(X) - \eta(x)|>0 $ and
\[ 
\mathbb{E}\bigl\{ K\bigl(\frac{X_i - x}{h}\bigr) |\eta(X) - \eta(x)| \bigr\} \leq L h^{d+1}  \int_{\mathbb{R}^d} K(u) \|u\| f_{X}(x + hu)  \, du \leq L h^{d+1} \{\mu_1 f_{X}(x) + L h \mu_2 \}.
\]
This completes the proof. 

\end{proof}

\bigskip

 \subsection{Proofs for the results in Section \ref{sec:discuss}}
 \begin{proof}[Proof of Corollary~\ref{cor:msered}]  
 	Consider the bias and variance separately: we claim
 	\begin{equation}
 	\label{eq:bias}
 	\quad \mathbb{E} (\hat{\theta}^*_{m}) = \mathbb{E} (\hat{\theta}_{0})
 	\end{equation}
 	and 
 	\begin{equation}
 	\label{eq:var}
 	\quad \mathrm{Var}(\hat{\theta}^*_{m}) =  \mathrm{Var} (\hat{\theta}_{0})  - \frac{\mathrm{Cov}(\hat{\theta}_{0},\bar{\theta}_{A_0,m})^2}{\mathrm{Var}(\bar{\theta}_{A_0,m} - \bar{\theta}_{\bar{A}_m,m})}  \leq  \mathrm{Var} (\hat{\theta}_{0}).
 	\end{equation}
 	To see~\eqref{eq:bias} it suffices to show that $\mathbb{E}(\bar{\theta}_{A_0,m}) = \mathbb{E}(\bar{\theta}_{\bar{A}_m,m})$.  First, since the missing data is MCAR, we have that the data in $\mathcal{T}_{A_0,m} \cup \mathcal{T}_{\bar{A}_m,m}$ are independent and identically distributed with distribution $Q_m$.  Furthermore, for each $b_1$, $b_2$, the estimators $\hat{\theta}_{\tilde{A}_0^{b_1},m}$ and $\hat{\theta}_{\tilde{A}_m^{b_2},m}$  are constructed using $n_{0,m}$ pairs from $\mathcal{T}_{A_0,m}$ and  $\mathcal{T}_{\bar{A}_m,m}$, respectively.   Thus, these two estimators have the same distribution, and in particular, they have the same mean.  It follows that  $\mathbb{E}(\bar{\theta}_{A_0,m}) = \mathbb{E}(\bar{\theta}_{\bar{A}_m,m})$.    The result in~\eqref{eq:var} follows via a direct calculation.
 \end{proof}

\bigskip 

\subsection{Auxilliary Lemmas} 
\label{sec:Kernelchoice} 
The results in this section motivate our choice of the $d_m$-dimensional kernel, $K_m$, used to construct $\hat{f}_{0,m}$ and $\hat{f}_m$ in the density estimation problem, and $\hat{\eta}_{0,m}$ and $\hat{\eta}_m$ in the regression problem.  For $m \in \{0,1\}^d$, let $m^c = (1,\ldots,1)^T - m$.  Recall that we write $f_{X|X^m}(x^{m^c};x^m)$, for the conditional density of $X$ given $X^m = x^m$ at $x^{m^c}$.  

We see from Lemma \ref{lem:Kernel} that, up to rescaling, if the kernel can be factorised as $K(t) = K_m(t^m)K_{m^c}(t^{m^c})$, then the optimal choice of $\hat{f}_{0,m}$ is well approximated by 
\[
\frac{1}{n_0 h^{d_m}}\sum_{i \in A_0} K_m\Bigl(\frac{X_i^m - x^m}{h}\Bigr). 
\] 
On the other hand, Lemma \ref{lem:KernelReg} shows that the optimal choice of $\hat{\eta}_{0,m}$ is well approximated by  
\[
\frac{\sum_{i \in A_0} Y_i K_m\bigl(\frac{X_i^m - x^m}{h}\bigr)}{\sum_{i \in A_0} K_m\bigl(\frac{X_i^m - x^m}{h}\bigr)}. 
\] 

Any $d$-dimensional kernel constructed from the product of 1-dimensional kernels  satisfy the factor assumption in Lemmas~\ref{lem:Kernel} and~\ref{lem:KernelReg} .  For example the condition is satisfied by the Gaussian kernel $\frac{1}{(2\pi)^{d/2}}\exp(-\|t\|^2/2)$ and the box kernel $\frac{1}{2^d} \mathbbm{1}_{\{\max_{j = 1, \ldots, d} |t_j| \leq 1\}}$. Moreover, other kernels, such as $\mathbbm{1}_{\{\|t\|^2 \leq 1\}}$, are not covered by the lemma but enjoy similar properties to that in \eqref{eq:opt-kernel}, which can be proved by using similar ideas.  

\begin{lemma}
	\label{lem:Kernel} 
	Assume \textbf{A1} and suppose that, for $t \in \mathbb{R}^d$ and $m \in \{0,1\}^d$, we have $K(t) = K_m(t^m)K_{m^c}(t^{m^c})$, for some $K_m : \mathbb{R}^{d_m} \rightarrow [0,\infty)$ and $K_{m^c} : \mathbb{R}^{d - d_m} \rightarrow [0,\infty)$, that satisfy 
	$\mu_{0,m} = \int_{\mathbb{R}^{d_{m}}} K_{m}(z) \, dz < \infty$, $\mu_{1,m} = \int_{\mathbb{R}^{d_{m}}} \|z\| K_{m}(z) \, dz < \infty$, $\mu_{0,m^c} = \int_{\mathbb{R}^{d_{m^c}}} K_{m^c}(z) \, dz < \infty$	and $\mu_{1,m^c} = \int_{\mathbb{R}^{d_{m^c}}} \|z\| K_{m^c}(z) \, dz < \infty$.  
	Then, for $z \in \mathbb{R}^{d_m}$,
	\begin{align} 
	\label{eq:opt-kernel}
	& \biggl|\mathbb{E}\Bigl\{K\Bigl(\frac{X-x}{h}\Bigr) \Big| X^m = z^m\Bigr\} - h^{d - d_m} K_m\Bigl(\frac{z^m - x^m}{h}\Bigr) f_{X|X^m}(x^{m^c};z^m) \mu_{0, m^c} \biggr|\nonumber 
	\\ & \hspace{250pt} \leq \frac{L h^{1 + d - d_m}}{f_{X^m}(z^m) } K_m\Bigl(\frac{z^m - x^m}{h}\Bigr) \mu_{1, m^c} .
	\end{align}
	Therefore, for $0 < \alpha < \beta < 1/d$, 
\[
\frac{1}{n_0 h^{d}}\sum_{i \in A_0}\mathbb{E}\Bigl\{K\Bigl(\frac{X_i-x}{h}\Bigr) \Big| X_i^m\Bigr\} = \frac{\mu_{0, m^c}}{n_0 h^{d_m}}\sum_{i \in A_0} K_m\Bigl(\frac{X_i^m - x^m}{h}\Bigr) f_{X|X^m}(x^{m^c};x^m) + O_p(h)
\]
as $n \rightarrow \infty$, uniformly for $h \in [n^{-\beta}, n^{-\alpha}]$.
\end{lemma}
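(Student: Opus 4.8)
The plan is to establish the two displays in turn: the pointwise bound \eqref{eq:opt-kernel} follows directly from the kernel factorisation and the Lipschitz assumption, and the aggregate statement then follows by applying \eqref{eq:opt-kernel} termwise and controlling two nonnegative remainders by Markov's inequality.

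For \eqref{eq:opt-kernel}, I would first write the conditional expectation as an integral of $K\bigl(\frac{\cdot - x}{h}\bigr)$ against the conditional density of $X^{m^c}$ given $X^m = z^m$. Using $K(t) = K_m(t^m) K_{m^c}(t^{m^c})$, the factor $K_m\bigl(\frac{z^m - x^m}{h}\bigr)$ pulls out of the integral, since the $m$-coordinates are fixed at $z^m$, leaving an integral of $K_{m^c}\bigl(\frac{\cdot - x^{m^c}}{h}\bigr)$ against $f_{X|X^m}(\cdot\,;z^m)$. After the substitution $u = (w - x^{m^c})/h$ this integral becomes $h^{d-d_m}\int K_{m^c}(u)\, f_{X|X^m}(x^{m^c} + hu; z^m)\,du$. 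Subtracting the target $h^{d-d_m} K_m(\cdots)\, f_{X|X^m}(x^{m^c}; z^m)\,\mu_{0,m^c}$ and writing the conditional density as the ratio $f_X/f_{X^m}(z^m)$, the increment of the conditional density is controlled by \textbf{A1}: only the $m^c$-coordinates of the argument of $f_X$ move, by $hu$, so the increment is at most $L h \|u\|/f_{X^m}(z^m)$. Integrating against $K_{m^c}$ and invoking $\mu_{1,m^c} = \int \|u\| K_{m^c}(u)\,du$ produces exactly the right-hand side of \eqref{eq:opt-kernel}.

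For the aggregate statement, I would apply \eqref{eq:opt-kernel} to each summand. This yields the claimed leading term, but with the conditioning point $X_i^m$ appearing in $f_{X|X^m}$ in place of $x^m$, together with a remainder $E_n$ whose modulus is bounded by $\frac{L\mu_{1,m^c}h}{n_0 h^{d_m}} \sum_{i \in A_0} f_{X^m}(X_i^m)^{-1} K_m\bigl(\frac{X_i^m - x^m}{h}\bigr)$. The key observation is that the $f_{X^m}(X_i^m)^{-1}$ weight cancels the sampling density of $X_i^m$: the expectation of the displayed average is $h^{-d_m}\int K_m\bigl(\frac{w - x^m}{h}\bigr)\,dw = \mu_{0,m}$, so by nonnegativity and Markov's inequality the average is $O_p(1)$ and hence $E_n = O_p(h)$.

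It then remains to replace $f_{X|X^m}(x^{m^c}; X_i^m)$ by $f_{X|X^m}(x^{m^c}; x^m)$, and this is the step I expect to be the main obstacle, since only \textbf{A1} is available, giving Lipschitzness of $f_X$ and $f_{X^m}$ but not of the conditional density (whose direct Lipschitz control would require a positive lower bound on $f_{X^m}$). The trick is to bound the nonnegative, absolute-value version of the replacement error in expectation, and to multiply the difference of conditional densities by the sampling density $f_{X^m}(w)$ before bounding: writing $c_0 := f_{X|X^m}(x^{m^c}; x^m)$, the product $\{f_{X|X^m}(x^{m^c}; w) - c_0\}\, f_{X^m}(w)$ equals $f_X$ evaluated at the point with $X^m = w$, $X^{m^c} = x^{m^c}$, minus $c_0\, f_{X^m}(w)$, which is Lipschitz in $w$ with constant $L(1 + c_0)$ by \textbf{A1} and vanishes at $w = x^m$ by the definition of the conditional density. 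Its modulus is therefore at most $L(1 + c_0)\|w - x^m\|$, the sampling density having cancelled the problematic denominator so that no lower bound on $f_{X^m}$ is needed. Taking expectations, substituting $u = (w - x^m)/h$, and using $\mu_{1,m} = \int \|u\| K_m(u)\,du$ shows the expected replacement error is $O(h)$; a final Markov bound on this nonnegative quantity gives $O_p(h)$. Combining the two $O_p(h)$ remainders yields the claim, with the uniformity over $h \in [n^{-\beta}, n^{-\alpha}]$ handled exactly as in the other proofs in this section.
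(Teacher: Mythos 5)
Your proposal is correct and follows essentially the same route as the paper's own proof: the pointwise bound via the kernel factorisation, substitution, and the ratio representation $f_{X|X^m} = f_X/f_{X^m}$ with \textbf{A1}, and then the aggregate statement by controlling the two remainders exactly as the paper does — the weight $f_{X^m}(X_i^m)^{-1}$ cancelling the sampling density for the first, and multiplying the conditional-density increment by $f_{X^m}$ so that only Lipschitzness of $f_X$ and $f_{X^m}$ (no lower bound on $f_{X^m}$) is needed for the second, both finished by Markov's inequality. No gaps; nothing further is required.
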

\begin{proof}
To see \eqref{eq:opt-kernel}, first observe that by making the substitution $u = \frac{z^{m^c} - x^{m^c}}{h}$, we have
\begin{align*}
\mathbb{E}\Bigl\{K\Bigl(\frac{X-x}{h}\Bigr) \Big| X^m = z^m\Bigr\}& = \int_{\mathbb{R}^{d-d_m}} K_m\Bigl(\frac{z^m - x^m}{h}\Bigr) K_{m^c}\Bigl(\frac{z^{m^c} - x^{m^c}}{h} \Bigr) f_{X|X^m}(z^{m^c};z^m)\, dz^{m^c}
\\ & = h^{d - d_m} K_m\Bigl(\frac{z^m - x^m}{h}\Bigr) \int_{\mathbb{R}^{d-d_m}} K_{m^c}(u) f_{X|X^m}(x^{m^c} + h u;z^m)\, du.  
\end{align*} 
Now, write 
\[
f_{X|X^m}(z^{m^c};z^m) = \frac{f_X(z)}{f_{X^m}(z^m)}. 
\]
Thus, by assumption \textbf{A1}, we have that
\[
|f_{X|X^m}(x^{m^c} + hu;z^m) - f_{X|X^m}(x^{m^c};z^m)|  \leq \frac{Lh \|u\|}{f_{X^m}(z^m)}.  
\]
It follows that
	\begin{align*} 
	& \biggl|\mathbb{E}\Bigl\{K\Bigl(\frac{X-x}{h}\Bigr) \Big| X^m = z^m\Bigr\} - h^{d - d_m} K_m\Bigl(\frac{z^m - x^m}{h}\Bigr) f_{X|X^m}(x^{m^c};z^m) \mu_{0, m^c} \biggr|\nonumber 
	\\ & \hspace{180pt} \leq \frac{L h^{1 + d - d_m}}{f_{X^m}(z^m)} K_m\Bigl(\frac{z^m - x^m}{h}\Bigr) \int_{\mathbb{R}^{d - d_m}} K_{m^c}(u) \|u\| \, du.
	\end{align*}
	This proves \eqref{eq:opt-kernel}. 
	
	For the remainder of the proof, first observe that 
\[ 
\mathbb{E}\Bigl\{\frac{1}{f_{X^m}(X_i^m)} K_m\Bigl(\frac{X_i^m - x^m}{h}\Bigr)\Bigr\} = \int_{\mathbb{R}^{d_m}} K_m\Bigl(\frac{z^m - x^m}{h}\Bigr) \, dz = h^{d_m} \mu_{0,m}.
\]
It remains to bound the following:
\begin{align*}
&\mathbb{E}\Bigl|\frac{1}{n_0 h^{d_m}}\sum_{i \in A_0} K_m\Bigl(\frac{X_i^m - x^m}{h}\Bigr) \{f_{X|X^m}(x^{m^c};X_i^m) - f_{X|X^m}(x^{m^c};x^m)\}\Bigr|
\\ & = \frac{1}{h^{d_m}}\int_{\mathbb{R}^{d_m}}  \Bigl| K_m\Bigl(\frac{z^m-x^m}{h}\Bigr)\{f_{X|X^m}(x^{m^c};z^m) - f_{X|X^m}(x^{m^c};x^m)\} \Bigr| f_{X^m}(z^m) \, dz^m
\\ & = \int_{\mathbb{R}^{d_m}} K_m(u) | f_{X|X^m}(x^{m^c};x^m + hu) - f_{X|X^m}(x^{m^c};x^m)|  f_{X^m}(x^m + hu) \, dz^m 
\\ & = \biggl|\int_{\mathbb{R}^{d_m}} K_m(u) \bigl|f_{X}(x^{m^c},x^m + hu) - \frac{f_{X}(x)f_{X^m}(x^m + hu)}{f_{X^m}(x^m)} \bigr| \, du 
\\ & \leq L h \Bigl\{1 + \frac{f_{X}(x)}{f_{X^m}(x^m)} \Bigr\}  \int_{\mathbb{R}^{d_m}} K_m(u) \|u\|\, du = L h \Bigl\{1 + \frac{f_{X}(x)}{f_{X^m}(x^m)} \Bigr\} \mu_{1,m}. 
\end{align*} 
The proof is completed using Markov's inequality. 
\end{proof}

\bigskip

Our results on the choice of $K_m$ in the regression problem need a slightly stronger condition on the joint distribution of $(X, Y)$: 
\begin{assumption} 
	Suppose that $P$ and $P_{m}$, for $m \in \{0,1\} ^d \setminus \{(1,\ldots,1)^T\}$, have densities $f_{X,Y}$, and $f_{X^m,Y}$, respectively, that satisfy $|f_{X,Y}(z_1,y) - f_{X,Y}(z_2,y)| \leq L\|z_1 - z_2\|$, for all $z_1,z_2 \in \mathbb{R}^{d}, y \in \mathbb{R}$, and, for each $m \in \{0,1\} ^d \setminus \{(1,\ldots,1)^T\}$, we have  $|f_{X^m,Y}(z_1^m,y) - f_{X^m,Y}(z_2^m,y)| \leq L \|z_1^m - z_2^m\|$, for all $z_1^m,z_2^m \in \mathbb{R}^{d_m}, y\in \mathbb{R}$.  Moreover, we ask that the densities $f_{X}$, $f_{X^m,Y}$ are bounded, and that $f_{X^{m^c} | X^m,Y}(x^m ; z^m, y)$ is a bounded function of $z^m$ and $y$.
\end{assumption} 
\bigskip 
\begin{lemma} 
\label{lem:KernelReg}
Assume \textbf{A3} and \textbf{A5}, and suppose that $Y$ is supported on $\mathcal{D}_{Y} \subseteq [-\bar{Y}, \bar{Y}]$, for some $\bar{Y} > 0$.  Suppose further that for $t \in \mathbb{R}^d$ and $m \in \{0,1\}^d$, we have $K(t) = K_m(t^m)K_{m^c}(t^{m^c})$, for some $K_m : \mathbb{R}^{d_m} \rightarrow [0,\infty)$ and $K_{m^c} : \mathbb{R}^{d - d_m} \rightarrow [0,\infty)$, that satisfy $\bar{K}_m := \sup_{z \in \mathbb{R}^{d_m}} (1 + \|z\|) K_m(z) < \infty$,
	$\mu_{0,m} = \int_{\mathbb{R}^{d_{m}}} K_{m}(z) \, dz < \infty$, $\mu_{1,m} = \int_{\mathbb{R}^{d_{m}}} \|z\| K_{m}(z) \, dz < \infty$, $\mu_{0,m^c} = \int_{\mathbb{R}^{d_{m^c}}} K_{m^c}(z) \, dz < \infty$	and $\mu_{1,m^c} = \int_{\mathbb{R}^{d_{m^c}}} \|z\| K_{m^c}(z) \, dz < \infty$.  
Then, for  $0 < \alpha < \beta < 1/d$,
\[
\sum_{i \in A_0}Y_i \mathbb{E}\Bigl\{\frac{K\bigl(\frac{X_i-x}{h}\bigr)}{\sum_{j \in A_0}K\bigl(\frac{X_j-x}{h}\bigr) } \Big| \mathcal{T}_{A_0,m}\Bigr\} = \frac{\eta(x)}{\eta_m(x_m)}  \frac{\sum_{i \in A_0}Y_iK_m\bigl(\frac{X^m_i-x^m}{h}\bigr)}{\sum_{j \in A_0}K_m\bigl(\frac{X^m_j-x^m}{h}\bigr)}  + o_p(1)
\]
as $n \rightarrow \infty$, uniformly for $h \in [n^{-\beta}, n^{-\alpha}]$. 
\end{lemma}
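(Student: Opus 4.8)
The plan is to condition throughout on $\mathcal{T}_{A_0,m} = \{(X_i^m, Y_i) : i \in A_0\}$: since $Z^m = (X^m, Y)$, both the responses $Y_i$ and the observed coordinates $X_i^m$ are measurable with respect to this $\sigma$-field, and the only remaining randomness lies in the unobserved coordinates $X_i^{m^c}$, which are conditionally independent across $i$ with $X_i^{m^c} \mid (X_i^m, Y_i) \sim f_{X^{m^c}|X^m, Y}(\,\cdot\,; X_i^m, Y_i)$. Using the factorisation $K(t) = K_m(t^m) K_{m^c}(t^{m^c})$, I write $a_i := K_m\bigl(\frac{X_i^m - x^m}{h}\bigr)$, which is fixed given the conditioning, and $b_i := K_{m^c}\bigl(\frac{X_i^{m^c} - x^{m^c}}{h}\bigr)$, so that the $i$th summand on the left-hand side is $Y_i\, \mathbb{E}\{a_i b_i / S \mid \mathcal{T}_{A_0,m}\}$ with $S := \sum_{j \in A_0} a_j b_j$. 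Substituting $u = (z^{m^c} - x^{m^c})/h$ and invoking the Lipschitz and boundedness conditions of \textbf{A5}, exactly as in the proof of Lemma~\ref{lem:Kernel}, gives the localisation
\[
\mathbb{E}(b_i \mid X_i^m, Y_i) = h^{d - d_m}\mu_{0, m^c}\, g(X_i^m, Y_i)\{1 + O(h)\}, \qquad g(z^m, y) := f_{X^{m^c}|X^m, Y}(x^{m^c}; z^m, y),
\]
uniformly for $h \in [n^{-\beta}, n^{-\alpha}]$.

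Writing $\mu_S := \mathbb{E}(S \mid \mathcal{T}_{A_0,m}) = \sum_j a_j \mathbb{E}(b_j \mid X_j^m, Y_j)$, the core of the proof is to show that the expectation of the ratio may be replaced by the ratio of expectations, namely
\[
\sum_{i \in A_0} Y_i\, \mathbb{E}\Bigl\{\frac{a_i b_i}{S} \Big| \mathcal{T}_{A_0,m}\Bigr\} = \frac{\sum_i Y_i a_i g(X_i^m, Y_i)}{\sum_j a_j g(X_j^m, Y_j)} + o_p(1),
\]
where the common factor $h^{d - d_m}\mu_{0, m^c}$ from the previous display cancels between the implicit numerator and denominator. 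To obtain this I would split $S = a_i b_i + S_{-i}$ with $S_{-i} := \sum_{j \neq i} a_j b_j$ and use that, conditionally on $\mathcal{T}_{A_0,m}$, $b_i$ is independent of $S_{-i}$. Since each term $a_i b_i = O(1)$ while $\mu_S$ is of order $n_0 h^d \to \infty$ (guaranteed by $\beta < 1/d$), a first-order expansion of $a_i b_i /(a_i b_i + S_{-i})$ about $S_{-i} = \mu_S$, combined with the conditional variance bound $\mathrm{Var}(S_{-i} \mid \mathcal{T}_{A_0,m}) = O(n_0 h^d)$, yields $\mathbb{E}\{a_i b_i / S \mid \mathcal{T}_{A_0,m}\} = a_i\mathbb{E}(b_i \mid X_i^m, Y_i)/\mu_S + (\text{error})$.

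I expect this ratio-concentration step to be the main obstacle. Two points need care: controlling the reciprocal $1/S$, equivalently ensuring the denominator is not atypically small relative to $\mu_S$, which I would handle with a conditional Chebyshev (or Bernstein) bound on $S_{-i}$ using the boundedness of the kernels and of $g$ from \textbf{A5}; and ensuring that the per-index error terms do not accumulate when summed over the $O(n_0 h^d)$ effectively non-vanishing indices. For the latter, each per-term error carries a factor $a_i/\mu_S = O(1/(n_0 h^d))$, and the bounded support $\mathcal{D}_Y \subseteq [-\bar{Y}, \bar{Y}]$ bounds $|Y_i|$, so the aggregate error is at most $\bar{Y}\sum_i a_i/\mu_S = O(1)$ multiplied by a vanishing factor, giving the $o_p(1)$ uniformly in $h$.

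Finally, with the left-hand side reduced to $N/D$ where $N := \sum_i Y_i a_i g(X_i^m, Y_i)$ and $D := \sum_j a_j g(X_j^m, Y_j)$, I would conclude by a Nadaraya--Watson localisation. Because the weights $a_j$ concentrate on $\{X_j^m : \|X_j^m - x^m\| = O(h)\}$, Lipschitz continuity of $g$ in its first argument (inherited from \textbf{A5}) permits replacing $g(X_j^m, Y_j)$ by $g(x^m, Y_j)$ up to $O_p(h)$, after which $(n_0 h^{d_m})^{-1}N$ and $(n_0 h^{d_m})^{-1}D$ converge in probability, uniformly in $h$, to $\mu_{0,m} f_{X^m}(x^m)\mathbb{E}\{Y g(x^m, Y) \mid X^m = x^m\}$ and $\mu_{0,m} f_{X^m}(x^m)\mathbb{E}\{g(x^m, Y) \mid X^m = x^m\}$. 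The density identity $g(x^m, y) f_{Y|X^m}(y; x^m) = f_{X,Y}(x, y)/f_{X^m}(x^m)$ gives $\mathbb{E}\{g(x^m, Y) \mid X^m = x^m\} = f_X(x)/f_{X^m}(x^m)$ and $\mathbb{E}\{Y g(x^m, Y) \mid X^m = x^m\} = f_X(x)\eta(x)/f_{X^m}(x^m)$, whence $N/D \to \eta(x)$ in probability. Since the ordinary estimator obeys $\sum_i Y_i a_i / \sum_j a_j \to \eta_m(x^m)$, the target $\frac{\eta(x)}{\eta_m(x^m)}\frac{\sum_i Y_i a_i}{\sum_j a_j}$ also tends to $\eta(x)$, and subtracting the two convergent expressions delivers the stated $o_p(1)$ conclusion.
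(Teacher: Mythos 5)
Your overall architecture is the same as the paper's: condition on $\mathcal{T}_{A_0,m}$, replace the conditional expectation of the ratio by the ratio of conditional expectations (the paper's Part I, claim \eqref{eq:claimI1}), then localise those conditional expectations using the product form $K = K_m K_{m^c}$ and identify the limit via the density identity (the paper's Part II, claims \eqref{eq:claimII1} and \eqref{eq:claimII2}). The executions differ in both halves. For the ratio step, the paper argues globally with the centred sums $W_1, W_2$, the trivial bound $|\Pi_0| \leq 2\bar{Y}$, and the threshold $t_n = \sqrt{\log n_0 / (n_0 h^d)}$, whereas you expand per index around $S_{-i}$. Your version can be made to work, but be aware that a plain conditional Chebyshev bound on the bad event is not enough once the per-index errors are summed: Cauchy--Schwarz leaves a factor of order $(n_0 h^{d + d_{m^c}})^{-1/2}$, which need not vanish when $\beta$ is close to $1/d$; you need either the Bernstein bound you mention in passing or the paper's trick of bounding the whole sum at once on the bad event. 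For the second step, the paper replaces $g(X_i^m, Y_i) = f_{X^{m^c}|X^m,Y}(x^{m^c}; X_i^m, Y_i)$ by two \emph{different constants} -- $f_X(x)\eta(x)/\{f_{X^m}(x^m)\eta_m(x^m)\}$ in the numerator and $f_X(x)/f_{X^m}(x^m)$ in the denominator -- so that the prefactor $\eta(x)/\eta_m(x^m)$ emerges algebraically in front of the Nadaraya--Watson ratio; you instead localise $g$ only in its first argument and pass both sides of the lemma to the common limit $\eta(x)$. Since the statement only asserts an $o_p(1)$ difference, your route is a legitimate alternative.

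The genuine gap is that your two localisation claims are stated pointwise, and neither is implied by \textbf{A5}. You assert $\mathbb{E}(b_i \mid X_i^m, Y_i) = h^{d-d_m}\mu_{0,m^c}\, g(X_i^m, Y_i)\{1 + O(h)\}$ \emph{uniformly}, and that $g$ is Lipschitz in its first argument ``inherited from \textbf{A5}''. Both would require $f_{X^m,Y}$ to be bounded away from zero: the additive localisation error is of size $L h^{1+d-d_m}\mu_{1,m^c}/f_{X^m,Y}(X_i^m, Y_i)$ (this is exactly the form of the bound in the paper's proof, and the analogue of \eqref{eq:opt-kernel} in Lemma~\ref{lem:Kernel}), so the relative error is $O(h)$ only where $f_{X,Y}$ is not small; likewise the Lipschitz constant of the ratio $g = f_{X,Y}/f_{X^m,Y}$ involves $1/f_{X^m,Y}$. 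Assumption \textbf{A5} bounds these densities \emph{above} and bounds $g$ itself, but gives no lower bound. The paper's proof is structured precisely to sidestep this: it keeps the errors additive, carrying the $1/f_{X^m,Y}$ factor, and bounds them only in expectation, where that factor cancels against the density $f_{X^m,Y}$ in the integral (see the bounds on $\mathbb{E}|R_4|$ and $\mathbb{E}|R_5|$ and the mean/variance computations for $\Pi_1$ and $\Pi_2$), with the bounded support of $Y$ controlling the $y$-integral. Since your argument ultimately passes to expectations anyway, the same device repairs your steps, but as written the uniform multiplicative $O(h)$ and the Lipschitzness of $g$ are unjustified under the stated assumptions.
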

\begin{proof} 
\textit{Part I}: We first show that 
\begin{equation}
\label{eq:claimI1}
\Pi_0 := \sum_{i \in A_0}Y_i \mathbb{E}\Bigl\{\frac{K\bigl(\frac{X_i-x}{h}\bigr)}{\sum_{j \in A_0}K\bigl(\frac{X_j-x}{h}\bigr) } \Big| \mathcal{T}_{A_0,m}\Bigr\} - \sum_{i \in A_0}Y_i \frac{\mathbb{E}\bigl\{K\bigl(\frac{X_i-x}{h}\bigr) \big| \mathcal{T}_{A_0,m}\bigr\}}{\mathbb{E} \bigl\{\sum_{j \in A_0}K\bigl(\frac{X_j-x}{h}\bigr) \big| \mathcal{T}_{A_0,m}\bigr\} } = o_p(1).
\end{equation}
To see \eqref{eq:claimI1}, let
\[
W_1 :=  \frac{1}{n_0h^d} \sum_{i \in A_0} Y_i \Bigl[K\Bigl(\frac{X_i-x}{h}\Bigr) -  \mathbb{E}\Bigl\{K\Bigl(\frac{X_i-x}{h}\Bigr) \Big| X_i^m, Y_i \Bigr\} \Bigr]
\]
and 
\[
W_2 := \frac{1}{n_0 h^d} \sum_{i \in A_0} \Big[K\Bigl(\frac{X_i-x}{h}\Bigr) -  \mathbb{E}\Bigl\{K\Bigl(\frac{X_i-x}{h}\Bigr) \Big| X_i^m, Y_i \Bigr\}\Big].
\]
Then we can write
\[
\hat{\eta}_0 =\frac{ \sum_{i \in A_0}Y_i \mathbb{E}\bigl\{K\bigl(\frac{X_i-x}{h}\bigr) \big| \mathcal{T}_{A_0,m}\bigr\}}{\mathbb{E} \bigl\{\sum_{j \in A_0}K\bigl(\frac{X_j-x}{h}\bigr) \big| \mathcal{T}_{A_0,m}\bigr\}} + \frac{W_1}{\hat{f}_0} - \frac{W_2}{\hat{f}_0 }  \frac{\sum_{i \in A_0}Y_i \mathbb{E}\bigl\{K\bigl(\frac{X_i-x}{h}\bigr) \big| \mathcal{T}_{A_0,m}\bigr\}}{\mathbb{E}(\hat{f}_0 | \mathcal{T}_{A_0,m} )}
\]
It follows that 
\[
\Pi_0 = \mathbb{E}\Bigl(\frac{W_1}{\hat{f}_0}  \Big| \mathcal{T}_{A_0,m}\Bigr) +  \mathbb{E}\Bigl(\frac{W_2}{\hat{f}_0} \Big| \mathcal{T}_{A_0,m} \Bigr)  \frac{ \frac{1}{n_0h^d}\sum_{i \in A_0}Y_i \mathbb{E}\bigl\{K\bigl(\frac{X_i-x}{h}\bigr) \big| \mathcal{T}_{A_0,m}\bigr\}}{\mathbb{E}(\hat{f}_{0} | \mathcal{T}_{A_0,m})}.
\]

Now, by Markov's inequality, we have
\[
\mathbb{P}(|W_1| > t | \mathcal{T}_{A_0,m}) \leq \frac{\mathbb{E}(W_1^2| \mathcal{T}_{A_0,m})}{t^2} = \frac{1}{n_0^2 h^{2d}t^2} \sum_{i \in A_0} Y_i^2 \var \Bigl\{K\Bigl(\frac{X_i-x}{h}\Bigr) \Big| X_i^m, Y_i\Bigr\}
\]
and 
\[
\mathbb{P}(|W_2| > t | \mathcal{T}_{A_0,m}) \leq \frac{\mathbb{E}(W_2^2| \mathcal{T}_{A_0,m})}{t^2} = \frac{1}{n_0^2 h^{2d}t^2} \sum_{i \in A_0} \var \Bigl\{K\Bigl(\frac{X_i-x}{h}\Bigr) \Big| X_i^m, Y_i\Bigr\}.
\]
Note further that, since $|Y_i| \leq \bar{Y}$, for each $i$, we have that $|\Pi_0| \leq 2\bar{Y}$.  Therefore, for $0 < t < \mathbb{E}(\hat{f}_0|\mathcal{T}_{A_0,m})/2$,
\begin{align}
\label{eq:tbound1}
\bigl|\mathbb{E}\bigl(\Pi_0 | \mathcal{T}_{A_0,m}\bigr)\bigr|  & \leq \mathbb{E}\bigl(| \Pi_0| \mathbbm{1}_{\{|W_1| < t\}} \mathbbm{1}_{\{|W_2| < t\}}  | \mathcal{T}_{A_0,m} \bigr) + \mathbb{E}\big\{ |\Pi_0| (\mathbbm{1}_{\{|W_1| > t\}} + \mathbbm{1}_{\{|W_2| > t\}} )  | \mathcal{T}_{A_0,m} \bigr\} \nonumber
\\ & \leq  \frac{2t}{ \mathbb{E}(\hat{f}_0|\mathcal{T}_{A_0,m})} + \frac{2t}{ \mathbb{E}(\hat{f}_0|\mathcal{T}_{A_0,m})}  \frac{ \frac{1}{n_0h^d}\sum_{i \in A_0}Y_i \mathbb{E}\bigl\{K\bigl(\frac{X_i-x}{h}\bigr) \big| \mathcal{T}_{A_0,m}\bigr\}}{\mathbb{E}(\hat{f}_{0} | \mathcal{T}_{A_0,m})}  \nonumber
\\ & \hspace{120pt} +  \frac{2\bar{Y} (1+\bar{Y}^2)}{n_0^2 h^{2d}t^2} \sum_{i \in A_0} \var \bigl\{K\bigl(\frac{X_i-x}{h}\bigr) \big| X_i^m, Y_i\bigr\}.
\end{align}

Now, using the tower property of expectation, we have 
\[
\mathbb{E} \bigl| \mathbb{E}(\hat{f}_0 | \mathcal{T}_{A_0,m}) - f_X(x) \bigr| \leq  \mathbb{E}\bigl| \hat{f}_0 - f_X(x) \bigr| \leq L h \mu_{1},
\]
by assumption \textbf{A1}. Thus,  we have $\mathbb{E}(\hat{f}_0 | \mathcal{T}_{A_0,m}) =  f_X(x) + O_p(h)$, by Markov's inequality.  Similarly 
\[
\frac{1}{n_0 h^d} \sum_{i \in A_0}Y_i \mathbb{E}\Bigl\{K\Bigl(\frac{X_i-x}{h}\Bigr) \Big| \mathcal{T}_{A_0,m}\Bigr\} = \eta(x) f_X(x) + O_p(h).
\]
Finally, using the facts that 
\[
\mathbb{E}\Bigl[\var \Bigl\{K\Bigl(\frac{X_i-x}{h}\Bigr) \Big| X_i^m, Y_i\Bigr\}\Bigr] \leq \var\Bigl\{K\Bigl(\frac{X_i-x}{h}\Bigr)\Bigr\} \leq h^d \{\nu f_X(x) +  Lh \bar{K} \mu_0\},
\]
and that the conditional variances are all non-negative, by applying Markov's inequality  we have 
\[
\frac{1}{n_0}\sum_{i\in A_0}\var \Bigl\{K\Bigl(\frac{X_i-x}{h}\Bigr) \Big| X_i^m, Y_i\Bigr\} = O_p(h^d),
\]
uniformly for $h \in [n^{-\beta}, n^{-\alpha}]$.  The proof of Part I is then completed by taking $t = t_n := \sqrt{\frac{\log{n_0}}{n_0h^d}}$ in \eqref{eq:tbound1} and using Markov's inequality. 
 
\bigskip

\textit{Part II:} We claim that 
\begin{equation}
\label{eq:claimII1}
\Pi_1 := \frac{1}{n_0 h^d} \sum_{i \in A_0} Y_i \Bigl[\mathbb{E}\Bigl\{K\bigl(\frac{X_i-x}{h}\bigr) \Big| \mathcal{T}_{A_0,m}\Bigr\} - \frac{\mu_{0,m^c} f_{X}(x) \eta(x)}{f_{X^m}(x^m)\eta_m(x^m)}  K_m\bigl(\frac{X^m_i-x^m}{h}\bigr) \Bigr] = o_p(1), 
\end{equation}
uniformly for $h \in [n^{-\beta}, n^{-\alpha}]$, and 
\begin{equation}
\label{eq:claimII2}
\Pi_2 = \frac{1}{n_0 h^d} \sum_{i \in A_0} \mathbb{E}\Bigl\{K\bigl(\frac{X_i-x}{h}\bigr) \Big| \mathcal{T}_{A_0,m}\Bigr\} - \frac{\mu_{0,m^c}f_{X}(x)}{f_{X^m}(x^m)n_0h^{d_m}}  \sum_{i \in A_0} K_m\bigl(\frac{X^m_i-x^m}{h}\bigr) = o_p(1), 
\end{equation}
uniformly for $h \in [n^{-\beta}, n^{-\alpha}]$.

To see \eqref{eq:claimII1}, first observe that, for all $z^m \in \mathbb{R}^{d_m}$ and $y \in \mathbb{R}$, we have 
\begin{align*} 
&\Bigl|\mathbb{E}\Big\{K_{m^c}\bigl(\frac{X_i^{m^c} - x^{m^c}}{h}\bigr) \Big| (X_i^m, Y_i) = (z^m, y)\Bigr\} -  h^{d_{m^c}}\mu_{0,m^c} f_{X^{m^c}| X^m,  Y} (x^{m^c}; z^m, y)\Bigr|
\\& \hspace{10pt} = \Bigl|\int_{\mathbb{R}^{d_{m^c}}} K_{m^c}\bigl(\frac{z^{m^c} - x^{m^c}}{h}\bigr) f_{X^{m^c} | X^m,Y}(z^{m^c}; z^m, y) \, dz^{m^c} -   h^{d_{m^c}}\mu_{0,m^c} f_{X^{m^c}| X^m,  Y} (x^{m^c}; z^m, y)\Bigr| 
\\ & \hspace{10pt}= h^{d_{m^c}} \Bigl| \int_{\mathbb{R}^{d_{m^c}}} K_{m^c}(u) f_{X^{m^c} | X^m,Y}(x^{m^c} + h u; z^m, y) \, du^{m^c} -  \mu_{0,m^c} f_{X^{m^c}| X^m,  Y} (x^{m^c}; z^m, y)\Bigr| 
\\ & \hspace{10pt}\leq \frac{L h^{d_{m^c} + 1} \mu_{1, m^c}}{ f_{X^m,Y}(z^m, y)}.
\end{align*}
Therefore 
\begin{align*} 
&\mathbb{E} \Bigl|\frac{1}{n_0 h^d} \sum_{i \in A_0}Y_i \Bigl[K_m\bigl(\frac{X_i^m - x^m}{h}\bigr) \mathbb{E}\bigl\{K_{m^c}\bigl(\frac{X_i^{m^c}-x^{m^c}}{h}\bigr) \big| X_i^m, Y_i\bigr\} 
\\ & \hspace{ 120pt} -  K_m\bigl(\frac{X^m_i-x^m}{h}\bigr) \mu_{0,m^c} f_{X^{m^c}| X^m,  Y} (x^{m^c}; X_i^m, Y_i)\Bigr] \Bigr| 
\\ & \leq  L h \mu_{1, m^c} \mathbb{E} \Bigl\{ \frac{1}{n_0 h^{d_m}} \sum_{i \in A_0} \frac{|Y_i| K_m\bigl(\frac{X_i^m - x^m}{h}\bigr)}{ f_{X^m,Y}(X_i^m, Y_i)} \Bigr\} 
\\ & = L h \mu_{1, m^c} \frac{1}{h^{d_m}} \int_{\mathbb{R}^{d_m} \times \mathcal{D}_Y} |y| K_m\bigl(\frac{z^m - x^m}{h}\bigr) \, dz dy  \leq L h \mu_{1, m^c} \mu_{0,m} \bar{Y}^2.
\end{align*} 
It follows that
\begin{equation}
\label{eq:term}
\Pi_1  = \frac{\mu_{0,m^c}}{n_0h^{d_m} }\sum_{i \in A_0} Y_i  K_m\bigl(\frac{X^m_i-x^m}{h}\bigr) \Bigl[f_{X^{m^c}| X^m,  Y} (x^{m^c}; X_i^m, Y_i)  -  \frac{f_{X}(x) \eta(x)}{f_{X^m}(x^m)\eta_m(x^m)} \Bigr] + R_4,
\end{equation}
where $R_4 = O_p(h)$, uniformly for $h \in [n^{-\beta}, n^{-\alpha}]$, by Markov's inequality.  Now, writing (with a slight abuse of notation) $\eta(z^{m^c}, z^m) := \mathbb{E}(Y | X^{m^c} = z^{m^c}, X^m = z^m)$, the first term in \eqref{eq:term} is the average of $n_0$ independent and identically distributed terms each with expectation
\begin{align*} 
&\frac{1}{h^{d_m}} \int_{\mathbb{R}^d \times \mathbb{R}} y K_m\bigl(\frac{z^m-x^m}{h}\bigr)\Bigl\{ f_{X^{m^c}| X^m,  Y} (x^{m^c}; z^m, y)  -  \frac{f_{X}(x)\eta(x)}{f_{X^m}(x^m)\eta_m(x^m)} \Bigr\} f_{X^m, Y}(z^m, y) \, dz^m dy
\\& = \frac{1}{h^{d_m}} \int_{\mathbb{R}^d \times \mathbb{R}} y K_m\bigl(\frac{z^m-x^m}{h}\bigr) \Bigl\{ f_{X^{m^c}, X^m,  Y} (x^{m^c}, z^m, y)  - \frac{f_{X}(x) \eta(x) f_{X^m, Y}(z^m, y) }{f_{X^m}(x^m)\eta_m(x^m)} \Bigr\} \, dz^m dy
\\ & = \frac{1}{h^{d_m}} \int_{\mathbb{R}^d}  K_m\bigl(\frac{z^m-x^m}{h}\bigr) \Bigl\{ \eta(x^{m^c}, z^m)f_{X^{m^c}, X^m} (x^{m^c}, z^m)  - \frac{f_{X}(x) \eta(x) \eta_m(z^m) f_{X^m}(z^m) }{f_{X^m}(x^m)\eta_m(x^m)} \Bigr\} \, dz^m
\\ & = \int_{\mathbb{R}^d}  K_m(u ) \Bigl\{ \eta(x^{m^c}, x^m + hu)f_{X^{m^c}, X^m} (x^{m^c}, x^m+hu)  
\\ &  \hspace{ 150pt} - \frac{f_{X}(x) \eta(x) \eta_m(x^m + hu) f_{X^m}(x^m + hu) }{f_{X^m}(x^m)\eta_m(x^m)} \Bigr\} \, du
\\ &  = \int_{\mathbb{R}^d}  K_m(u ) \Bigl\{ \eta(x^{m^c}, x^m + hu)f_{X^{m^c}, X^m} (x^{m^c}, x^m+hu)  - f_{X}(x) \eta(x) \Bigr\} \, du
\\ & \hspace{ 30 pt} +  \mu_{0,m^c} \int_{\mathbb{R}^d}  K_m(u ) \Bigl\{ f_{X}(x) \eta(x)  - \frac{f_{X}(x) \eta(x) \eta_m(x^m + hu) f_{X^m}(x^m + hu) }{f_{X^m}(x^m)\eta_m(x^m)} \Bigr\} \, du
\\ & \leq Lh \Bigl[\mu_{1,m} \bigl\{\eta(x) + f_X(x) + \frac{f_{X^m}(x^m) + \eta_m(x^m) }{f_{X^m}(x^m) \eta_m(x^m)}\bigr\} + L h \mu_{1,m} \bar{K}_m \bigl\{1 +  \frac{1 }{f_{X^m}(x^m) \eta_m(x^m)} \bigr\} \Bigr]. 
\end{align*}
Furthermore, we show below that
\begin{equation}
\label{eq:varbound1}
\var\Big\{\frac{\mu_{0,m^c}}{h^{d_m} }Y_i  K_m\bigl(\frac{X^m_i-x^m}{h}\bigr) \Bigl(f_{X^{m^c}| X^m,  Y} (x^{m^c}; X_i^m, Y_i)  -  \frac{f_{X}(x) \eta(x)}{f_{X^m}(x^m)\eta_m(x^m)} \Bigr)\Bigr\} = O\Bigl(\frac{1}{h^{d_m}}\Bigr),
 \end{equation}
 uniformly for $h \in [n^{-\beta}, n^{-\alpha}]$.  It follows that $|\Pi_1| = o_p(1)$, uniformly for $h \in [n^{-\beta}, n^{-\alpha}]$, by Chebychev's inequality -- this proves \eqref{eq:claimII1}.

Next, to see \eqref{eq:claimII2}, write
\begin{align}
\label{eq:term1}
\Pi_2 &= \frac{1}{n_0} \sum_{i \in A_0}  \Bigl[ \frac{1}{h^{d}} \mathbb{E}\bigl\{K\bigl(\frac{X_i-x}{h}\bigr) \big| \mathcal{T}_{A_0,m}\bigr\} - \frac{\mu_{0,m^c} f_{X}(x)}{f_{X^m}(x^m) h^{d_m}} K_m\bigl(\frac{X^m_i-x^m}{h}\bigr) \Bigr] \nonumber
\\ & = \frac{1}{n_0} \sum_{i \in A_0}K_m\bigl(\frac{X_i^m - x^m}{h}\bigr) \Bigl[ \frac{1}{h^d}  \mathbb{E}\Bigl\{K_{m^c}\bigl(\frac{X_i^{m^c}-x^{m^c}}{h}\bigr) \Big| X_i^m, Y_i\Bigr\} - \frac{\mu_{0,m^c} f_{X}(x)}{f_{X^m}(x^m)h^{d_m}} \Bigr] \nonumber
\\ & = \frac{ \mu_{0,m^c}}{n_0 h^{d_m}} \sum_{i \in A_0} K_m\bigl(\frac{X^m_i-x^m}{h}\bigr) \Bigl[f_{X^{m^c}| X^m,  Y} (x^{m^c}; X_i^m, Y_i)  - \frac{f_{X}(x)}{f_{X^m}(x^m)} \Bigr]+ R_5,
\end{align} 
where, using the same technique used to bound $R_4$, we have
\begin{align*} 
\mathbb{E}|R_{5}| & \leq  L h \mu_{1, m^c} \mathbb{E} \Bigl\{ \frac{1}{n_0 h^{d_m}} \sum_{i \in A_0} \frac{K_m\bigl(\frac{X_i^m - x^m}{h}\bigr)}{ f_{X^m,Y}(X_i^m, Y_i)} \Bigr\} 
\\ & = L h \mu_{1, m^c} \frac{1}{h^{d_m}} \int_{\mathbb{R}^{d_m} \times \mathcal{D}_Y} K_m\bigl(\frac{z^m - x^m}{h}\bigr) \, dz dy  \leq 2 L h \mu_{1, m^c} \mu_{0,m}\bar{Y}. 
\end{align*} 
Thus by Markov's inequality $R_5=O_p(h)$, uniformly for $h\in [n^{\beta},n^{-\alpha}]$. 
Finally, the remaining term in \eqref{eq:term1} is the average of $n_0$ independent and identically distributed terms, each with expectation given by
\begin{align*} 
&\frac{1}{h^{d_m}} \int_{\mathbb{R}^d \times \mathbb{R}} K_m\bigl(\frac{z^m-x^m}{h}\bigr)\Bigl\{ f_{X^{m^c}| X^m,  Y} (x^{m^c}; z^m, y)  -  \frac{f_{X}(x)}{f_{X^m}(x^m)} \Bigr\} f_{X^m, Y}(z^m, y) \, dz^m dy
\\& = \frac{\mu_{0,m^c}}{h^{d_m}} \int_{\mathbb{R}^d \times \mathbb{R}} K_m\bigl(\frac{z^m-x^m}{h}\bigr) \Bigl\{ f_{X^{m^c}, X^m,  Y} (x^{m^c}, z^m, y)  - \frac{f_{X}(x) f_{X^m, Y}(z^m, y) }{f_{X^m}(x^m)} \Bigr\} \, dz^m dy
\\ & = \frac{\mu_{0,m^c}}{h^{d_m}} \int_{\mathbb{R}^d}  K_m\bigl(\frac{z^m-x^m}{h}\bigr) \Bigl\{ f_{X^{m^c}, X^m} (x^{m^c}, z^m)  - \frac{f_{X}(x) f_{X^m}(z^m) }{f_{X^m}(x^m)} \Bigr\} \, dz^m
\\ & = \mu_{0,m^c} \int_{\mathbb{R}^d}  K_m(u ) \Bigl\{f_{X^{m^c}, X^m} (x^{m^c}, x^m+hu)  - \frac{f_{X}(x)f_{X^m}(x^m + hu) }{f_{X^m}(x^m)} \Bigr\} \, du
\\ & = \mu_{0,m^c} \int_{\mathbb{R}^d}  K_m(u ) \Bigl\{ f_{X^{m^c}, X^m} (x^{m^c}, x^m+hu)  - \frac{f_{X}(x)  f_{X^m}(x^m + hu) }{f_{X^m}(x^m)} \Bigr\} \, du
\\ & \leq Lh \mu_{0,m^c}\mu_{1,m} \Bigl\{1 + \frac{f_{X}(x)}{f_{X^m}(x^m)}\Bigr\}. 
\end{align*} 
Moreover, we will show below that
\begin{equation}
\label{eq:varbound2}
\var\Big\{\frac{\mu_{0,m^c}}{h^{d_m} } K_m\bigl(\frac{X^m_i-x^m}{h}\bigr) \Bigl(f_{X^{m^c}| X^m,  Y} (x^{m^c}; X_i^m, Y_i)  -  \frac{f_{X}(x) \eta(x)}{f_{X^m}(x^m)\eta_m(x^m)} \Bigr)\Bigr\} = O\Bigl(\frac{1}{h^{d_m}}\Bigr),
 \end{equation}
 uniformly for $h \in [n^{-\beta}, n^{-\alpha}]$. It follows that  $|\Pi_2| = o_p(1)$, uniformly for $h \in [n^{-\beta}, n^{-\alpha}]$.  
 
 It remains to show \eqref{eq:varbound1} and  \eqref{eq:varbound2}.   We have that 
\begin{align*} 
&\var\Big\{Y_i  K_m\bigl(\frac{X^m_i-x^m}{h}\bigr) \Bigl(f_{X^{m^c}| X^m,  Y} (x^{m^c}; X_i^m, Y_i)  -  \frac{f_{X}(x) \eta(x)}{f_{X^m}(x^m)\eta_m(x^m)} \Bigr)\Bigr\}
\\ & \leq \mathbb{E} \Bigl[\Big\{ Y_i  K_m\bigl(\frac{X^m_i-x^m}{h}\bigr) \Bigl(f_{X^{m^c}| X^m,  Y} (x^{m^c}; X_i^m, Y_i)  -  \frac{f_{X}(x) \eta(x)}{f_{X^m}(x^m)\eta_m(x^m)} \Bigr)\Bigr\}^2\Bigr]
\\ & \leq  \sup_{z^m \in \mathbb{R}^{d_m}, y \in \mathcal{D}_Y }  \Bigl[y^2 \Bigl\{f_{X^{m^c}|X^m,  Y} (x^{m^c}; z^m, y)  
\\ & \hspace {150 pt} - \frac{f_{X}(x) \eta(x)  }{f_{X^m}(x^m)\eta_m(x^m)} \Bigr\}^2  f_{X^m}(z^m)\Bigr]   \int_{\mathbb{R}^{d_m}} K_m^2\bigl(\frac{z^m-x^m}{h}\bigr)\, dz^m 
\\ & =  \sup_{z^m \in \mathbb{R}^{d_m}, y \in \mathcal{D}_Y } \Bigl[ y^2 \Bigl\{ f_{X^{m^c}|X^m,  Y} (x^{m^c}; z^m, y)  - \frac{f_{X}(x) \eta(x)  }{f_{X^m}(x^m)\eta_m(x^m)} \Bigr\}^2  f_{X^m}(z^m)\Bigr] \nu_m h^{d_m} 
\end{align*} 
Thus, \eqref{eq:varbound1} holds since the density functions are bounded.   The claim in \eqref{eq:varbound2} can be seen by the same argument. 

The proof is competed by combining Part I, Part II and Slutsky's Theorem. 
\end{proof} 

\end{document}